\newcommand{\perm}{{\rm perm}}
\begin{document}

\title{Approximating the Permanent\\ with Fractional Belief Propagation}

\author{\name Michael Chertkov \email chertkov@lanl.gov
\addr Theory Division \& CNLS, LANL, Los Alamos, NM, 87545 USA
\AND
\name Adam B.\ Yedidia
\email adamy@mit.edu
\addr MIT, 77 Massachusetts Ave. Cambridge, MA 02139 USA\\
and Theory Division \& CNLS, LANL, Los Alamos, NM, 87545 USA
}

\editor{Submitted to the Journal of Machine Learning Research}
\maketitle

\begin{abstract}
We discuss schemes for exact and approximate computations of permanents, and compare them with each other. Specifically, we analyze the Belief Propagation (BP) approach and its Fractional Belief Propagation (FBP) generalization for computing the permanent of a non-negative matrix. Known bounds and conjectures are verified in experiments, and some new theoretical relations, bounds and conjectures are proposed. The Fractional Free Energy (FFE) functional is parameterized by a scalar parameter $\gamma\in[-1;1]$,  where $\gamma=-1$ corresponds to the BP limit and $\gamma=1$ corresponds to the exclusion principle (but ignoring perfect matching constraints) Mean-Field (MF) limit. FFE shows monotonicity and continuity with respect to $\gamma$. For every non-negative matrix, we define its special value $\gamma_*\in[-1;0]$ to be the $\gamma$ for which the minimum of the $\gamma$-parameterized FFE functional is equal to the permanent of the matrix, where the lower and upper bounds of the $\gamma$-interval corresponds to respective bounds for the permanent. Our experimental analysis suggests that the distribution of $\gamma_*$ varies for different ensembles but $\gamma_*$ always lies within the $[-1;-1/2]$ interval. Moreover, for all ensembles considered the behavior of $\gamma_*$ is highly distinctive, offering an emprirical practical guidance for estimating permanents of non-negative matrices via the FFE approach.
\end{abstract}

\begin{keywords}
Permanent, Graphical Models, Belief Propagation, Exact and Approximate Algorithms,
Learning Flow Models
\end{keywords}

\section{Introduction}

This work is motivated by computational challenges associated with learning stochastic flows from two consecutive snapshots/images of $n$ identical particles immersed in a flow \citep{08CKV,10CKKVZ}. The task of learning consists in maximizing the permanent of an $n\times n$ matrix, with elements constructed of probabilities for a particle in the first image to correspond to a particle in the second image,  over the low-dimensional parametrization of the reconstructed flow. The permanents in this enabling application are nothing but a weighted number of perfect matchings relating particles in the two images.

Inspired by this ``learning the flow" application, we continue in this manuscript the thread of \citet{10WC} and focus on computations of positive permanents of non-negative matrices constructed from probabilities. The exact computation of the permanent is difficult, i.e., it is a problem of likely exponential complexity, with the fastest known general algorithm for computing the permanent of a full $n\times n$ matrix based on the formula from \citet{63Rys} requiring ${\cal O}(n 2^n)$ operations. In fact,  the task of computing the permanent of a non-negative matrix was one of the first problems established to be in the \#-P  complexity class,  and the task is also complete in the class \citep{79Val}.

Therefore, recent efforts have mainly focused on developing approximate algorithms. Three independent developments, associated with the mathematics of strict bounds, Monte-Carlo sampling, and Graphical Models, contributed to this field.

On the ``mathematics of permanents" side,  the emphasis was on establishing rigorous lower and upper bounds for permanents. Many significant results in this line of research are related to the conjecture of \citet{26vdW} that the minimum of the permanent over doubly stochastic matrices is $n!/n^n$, and it is only attained when all entries of the matrix are $1/n$. The conjecture remained open for over 50 years before \citet{81Fal} and \citet{81Ego} proved it. Recently, \citet{08Gur} found an alternative, surprisingly short and elegant proof that also allowed for a number of unexpected extensions. (See e.g. the discussion of \citet{09LS}.)

On the ``Monte-Carlo sampling" side, a very significant breakthrough was
achieved with the invention of the Fully Polynomial Randomized Algorithmic schemes (FPRAS) for the permanent problem \citep{04JSV}: the permanent is approximated in polynomial time, provably with high probability and within an arbitrarily small relative error. The complexity of the FPRAS of \citet{04JSV} is $O(n^{11})$ in the general case. Even though the scaling was improved to $O(n^4\log n)$ in the case of very dense matrices \citep{08HL}, the approach is still impractical for the majority of realistic applications.

On the ``Graphical Model" (GM) side, Belief Propagation (BP) heuristics showed surprisingly good performance \citep{08CKV,09HJ,10CKKVZ}. The BP family of algorithms, originally introduced in the context of error-correction codes by \citet{63Gal}, artificial intelligence by \citep{88Pea}, and related to some early theoretical work in statistical physics by \citet{35Bet}, and \citet{36Pei} on tree graphs, can generally be stated for any GM  according to \citet{05YFW}. The exactness of the BP on any tree, i.e., on a graph without loops, suggests that the algorithm can be an efficient heuristic for evaluating the partition function, or for finding a Maximum Likelihood (ML) solution of the graphical model (GM) defined on sparse graphs. However, in the general loopy cases, one would normally not expect BP to work very well, making the heuristic results of \citet{08CKV,09HJ,10CKKVZ} somehow surprising, even though not completely unexpected in view of the existence of polynomially efficient algorithms for the ML version of the problem \citep{55Kuh,92Ber}, which were shown by \citep{08BSS} to be equivalent to an iterative algorithm of the BP type. This raises questions about understanding the performance of BP. To address this challenge \citet{10WC} established a theoretical link between the exact permanent and its BP approximation. The permanent of the original non-negative matrix was expressed as a product of terms, including the BP-estimate and another permanent of an auxiliary matrix,  $\beta.*(1-\beta)$ \footnote{Here and below we will follow Matlab notations for the component-wise operations on matrices,  such as $A.*B$ for the component-wise, Hadamard, product of the matrices $A$ and $B$.},  where $\beta$ is the doubly stochastic matrix of marginal probabilities of links between particles in the two images (edges in the underlying GM) calculated using the BP approach. (See Theorem \ref{Prop:Perm+BP}.) The exact relation of \citet{10WC} followed from the general Loop Calculus technique of \citet{06CCa,06CCb},  but it also allowed a simple direct derivation. Combining this exact relation with aforementioned results from the mathematics of permanents  led to new lower and upper bounds for the original permanent. Moreover this link between the math side and the GM side  gained a new level of prominence with the recent proof by \citet{11Gur} of the fact that the variational formulation of BP in terms of the Bethe Free Energy (BFE) functional, discussed earlier by \citet{08CKV,10CKKVZ,10WC}, and shown to be convex by \citet{11Von}, gives a provable lower bound to the permanent. Remarkably, this proof of Gurvits was based on an inequality suggested earlier by \citet{98Sch} for the object naturally entering the exact, loop calculus based, BP formulas, $\perm(\beta.*(1-\beta))$.

This manuscript contributes two-fold, theoretically and experimentally, to the new synergy developing in the field. Theory-wise, we generalize the BP approach to approximately computing permanents,  suggesting replacing the BFE functional by its fractional generalization in the general spirit of \citet{Wiegerinck_Heskes_2003} differing from the BFE functional of \citet{05YFW} in the entropy term,  and then derive new exact relations between the original permanent and the results of the fractional approach (see Theorem \ref{theorem:m-exact}). The new object, naturally appearing in the theory, is $\perm (\beta.*(1-\beta).^{-\gamma})$,  where $\gamma\in[-1;1]$,  with $\gamma=-1$ corresponding to BP and $\gamma=1$ corresponding to the so-called exclusion principle (Fermi), but ignoring perfect matching constraints, Mean Field (MF) approximation discussed earlier by \citet{08CKV}. Utilizing recent results from the ``mathematics of permanents,"  in particular from \citet{11Gur}, we show,  that considered as an approximation,  the fractional estimate of the permanent is a monotonic continuous function of the parameter $\gamma$ with $\gamma=-1$ and $\gamma=0$ setting, respectively, the lower bound (achievable on trees)  and an upper bound. We also analyze existing and derive new lower and upper bounds. We adopt for our numerical experiments the so-called Zero-suppressed binary Decision Diagrams (ZDDs) approach of \citet{93Minato} (see e.g. \citet{Knuth:2009:ACP:v4f1}), which outperforms Ryser's formula for realistic (sparsified) matrices, for exactly evaluating permanents, develop numerical schemes for efficiently evaluating the fractional generalizations of BP,  test the aforementioned lower and upper bounds for different ensembles of matrices and study the special, matrix dependent, $\gamma_*$, which is defined to be the special $\gamma$ for which the fractional estimate is equal to the permanent of the matrix \footnote{Note that a methodologically similar approach,  of searching for the best/special fractional coefficient, was already discussed in the literature by \citet{11CH} for a Gaussian BP example.}.

The material in the manuscript is organized as follows: the technical introduction, stating the computation of the permanent as a Graphical Model,  is explained in Section \ref{sec:intro} and Appendix \ref{sec:ML}. The BP-based optimization formulations, approximate methods, iterative algorithms and related exact formulas are discussed in Section \ref{sec:BP+MF} and Appendices \ref{sec:BP}, \ref{sec:MF}, \ref{sec:fractional}, \ref{sec:BP_low}. Section \ref{sec:what_to_test} is devoted to permanental inequalities, discussing the special values of $\gamma$ and conjectures. Our numerical experiments are presented and discussed in Section \ref{sec:Experiments} and Appendices \ref{sec:pruning}, \ref{sec:ZDD}, \ref{sec:Ryser_vs_ZDD}. We conclude and discuss the path forward in Section \ref{sec:conclusion}.

\section{Technical Introduction}

\label{sec:intro}

The permanent of a square matrix $p$, $p=(p_{ij}| i,j=1,\ldots,n)$, is defined as
$$
\perm(p) = \sum_{s\in S_{n}} \prod_{i=1}^n p_{i s(i)},
$$
where $S_{n}$ is the set of all permutations of the set, $\{1,\ldots,n\}$. Here and below we will only discuss permanents of non-negative matrix,  with $\forall i,j=1,\ldots,n: p_{ij}\geq 0$, also assuming that $\perm(p)>0$.

An example of a physics problem, where computations of permanents  are important, is given by particle tracking experiments and measurements techniques, of the type discussed in \cite{08CKV,10CKKVZ}. In this case an element of the matrix, $p=(p_{ij}|i,j=1,\ldots,n)$, is interpreted as an unnormalized probability that the particle labeled $i$ in the first image moves to the position labeled $j$ in the second image. In its most general formulation,  the task of learning a low dimensional parametrization of the flow from two consecutive snapshots consists of maximizing the partition function $Z=\perm(p)$ over the ``macroscopic" flow parameters affecting $p$. Computing the permanent for a given set of values of the parameters constitutes an important subtask,  the one we are focusing on in this manuscript.

\subsection{Computation of the Permanent as a Graphical Model Problem}
\label{subsec:GM}

\begin{figure}
\centering
\includegraphics[width=0.3\textwidth]{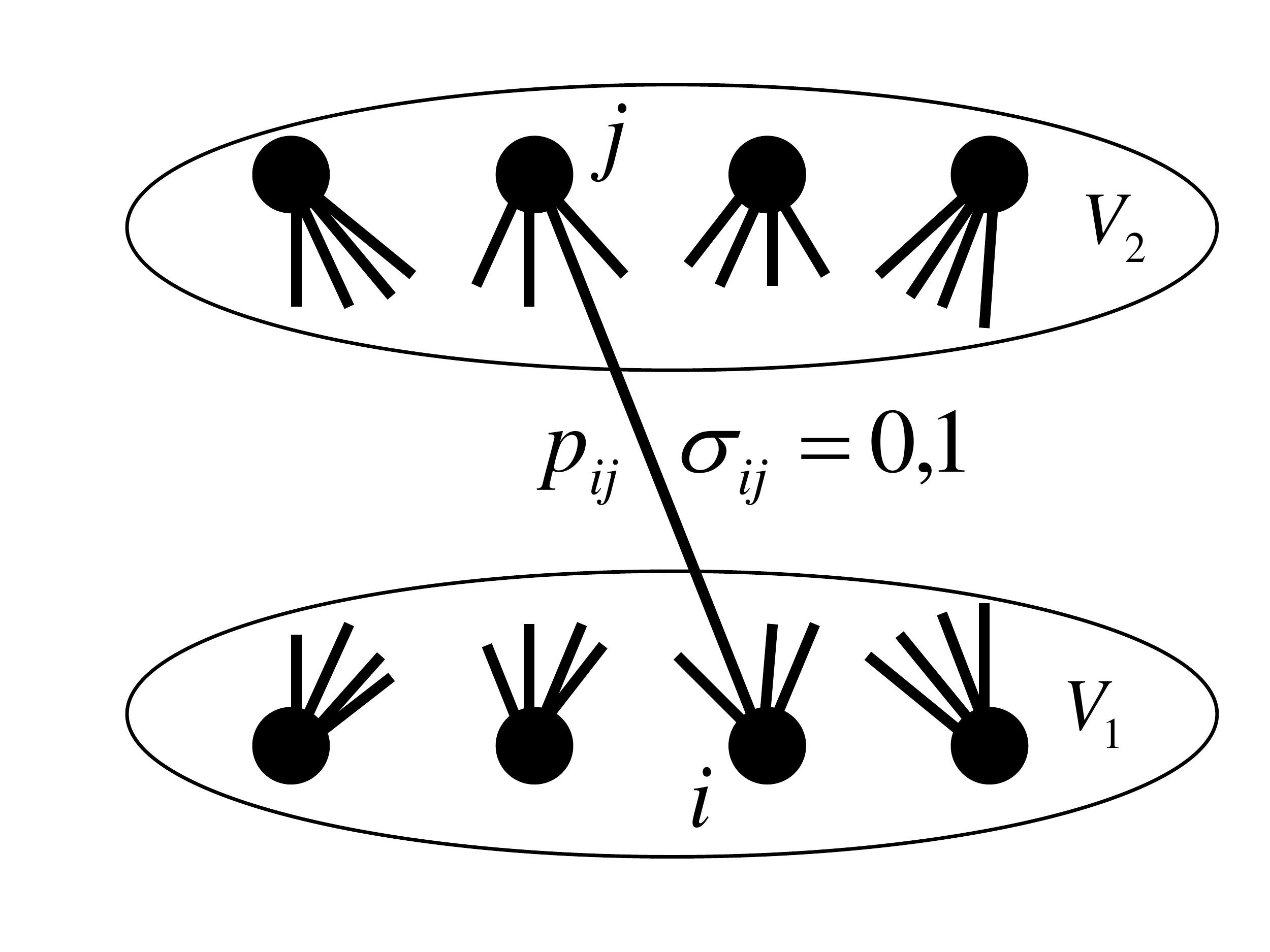}
\caption{Illustration of Graphical Model for perfect matchings and permanent.}
\label{fig:GM}
\end{figure}

The permanent of a matrix can be interpreted as the partition function $Z$ of a Graphical Model (GM) defined over a bipartite undirected graph, ${\cal G}=({\cal V}=({\cal V}_1,{\cal V}_2),{\cal E})$, where ${\cal V}_1$, ${\cal V}_2$ are of equal size, $|{\cal V}_1|=|{\cal V}_2|=n$, and ${\cal V}_1$, ${\cal V}_2$, and ${\cal E}$ stand for the set of $n$ vertices/labels of particles in the first and second images and the set of edges (possible relations) between particles in the two images, respectively. The basic binary variables, $\sigma_{ij}=0,1$, are associated with the edges, while the perfect matchings are enforced via the constraints associated with vertexes, $\forall i\in{\cal V}_1:\quad
\sum_{j\in V_2}\sigma_{ij}=1$ and $\forall j\in{\cal V}_2:\quad
\sum_{i\in V_1}\sigma_{ij}=1$, as illustrated in Fig.~\ref{fig:GM}. A non-negative element of the matrix, $p_{ij}$, turns into the weight associated with the edge $(i,j)$. In summary, the GM relates the following probability to any of $n!$ perfect matchings, $\sigma$:
\begin{align}
& \varrho(\sigma)=Z(p)^{-1}\prod_{(i,j)\in{\cal E}} (p_{ij})^{\sigma_{ij}}, \label{GM}\\
& \sigma=\Biggl(\sigma_{ij}=0,1\left|(i,j)\in{\cal E};\ \forall i\in{\cal V}_1:\  \sum_{j\in{\cal V}_2}\sigma_{ij}=1;\forall j\in{\cal V}_2:\  \sum_{i\in{\cal V}_1}\sigma_{ij}=1 \right.\Biggr),\nonumber
\\
& Z(p)=\perm(p)=\sum_\sigma \prod_{(i,j)\in{\cal E}} (p_{ij})^{\sigma_{ij}}.
\label{Z-def}
\end{align}

The GM formulation (\ref{GM}) also suggests a variational, Kullback-Leibler (KL) scheme for computing the permanent. The only minimum of the so-called exact Free Energy (FE) functional,
\begin{equation}
F(b|p)=\sum_\sigma b(\sigma)\log\left(\frac{b(\sigma)}{\prod_{(i,j)\in{\cal E}} (p_{ij})^{\sigma_{ij}}}\right),
\label{KL}
\end{equation}
computed over $b(\sigma) \geq 0$ for all $\sigma$ under the normalization condition,  $\sum_\sigma b(\sigma)=1$, is achieved at $b(\sigma)=\varrho(\sigma)$, and the value of the exact FE functional at the minimum over $b(\sigma)$ is $-\log(Z(p))$. Here,  the general and the optimal $b(\sigma)$ are interpreted as, respectively, the proxy and the probability of the perfect matching $\sigma$.

The relation between the problem of computing the permanent and the problem of finding the most probable (maximum) perfect matching is discussed in Appendix \ref{subsec:ML}.

\subsection{Exact Methods for Computing Permanents}

Computing the permanent of a matrix is a $\#$-P hard problem,  i.e., it is a problem which most likely requires a number of operations exponential in the size of the matrix. In Appendix \ref{sec:Ryser_vs_ZDD}, we experiment and compare the performance of the following two exact deterministic ways to evaluate permanents:
\begin{itemize}
\item A general method based on Zero-supressed binary Decision Diagrams (ZDDs), explained in more detail in \citet{Knuth:2009:ACP:v4f1}. See also detailed explanations below in Appendix \ref{sec:ZDD}. As argued in \citet{Knuth:2009:ACP:v4f1}, the ZDDs may be a very efficient practical tool for computing partition functions in general graphical models. This thesis was illustrated by \citet{09YedA} on the example of counting independent sets and kernels of graphs.

\item A permanent-specific method based on Ryser's formula:
$$
Z(p)=(-1)^n
\sum_{S\subseteq \{1,\ldots,n\}}(-1)^{|S|}\prod_{i=1}^n\sum_{j\in S} p_{ij}.
$$
We use code from \citet{Ryser_code} implementing the Ryser's formula.
\end{itemize}

Note that in most practical cases many entries of $p$
are very small and they do not affect the permanent of $p$ significantly. These entries do, however, take computational resources if accounted for in the algorithm. To make computations efficient we sparsify the resulting matrix $p$, implementing the heuristic pruning technique explained in Appendix \ref{sec:pruning}.

We also verify some of our results against randomized computations of the permanent using the FPRAS from \citet{04JSV},  with a specific implementation from \citet{08CKV}.

\section{Approximate Methods and Exact Relations}
\label{sec:BP+MF}

We perform an approximate computation of the permanent by following the general BFE approach of \citet{05YFW} and the associated Belief Propagation/Bethe-Peierls (BP) algorithm,  discussed in detail for the case of permanents of positive matrices in \citet{08CKV,09HJ,10CKKVZ}. (See also Appendix \ref{sec:BP} reproducing the description of \citet{08CKV,10CKKVZ} and presented in this manuscript for convenience.) In our BP experiments we implement the algorithm discussed by \citet{08CKV} with a special type of initialization corresponding to the best perfect matching of $p$. We also generalize the BP scheme by modifying the entropy term in the BFE.

In the following subsections we re-introduce the BFE approach, the related but different Mean Field FE approach,  and also consider a fractional FE approach generalizing and interpolating between Bethe/BP and MF approaches.  Even though these optimization approaches and respective algorithms can be thought of as approximating the permanent  we will show that they also generate some exact relations for the permanent.

\subsection{Belief Propagation/Bethe-Peierls Approach}
\label{subsec:BP}

Let us start by defining some useful notation.
\begin{definition}[$\beta$-polytope]
\label{def:beta-polytope}
Call the $\beta$-polytope of the non-negative matrix $p$ (or just $\beta$-polytope for short) the set of doubly stochastic non-negative matrices with elements corresponding to zero elements of $p$ equal to zero, ${\cal B}_p=
(\beta_{ij}|\forall i:\ \sum_{(i,j)\in\cal E}\beta_{ij}=1;
\forall j:\ \sum_{(i,j)\in\cal E}\beta_{ij}=1; \forall (i,j) \mbox{ with } p_{ij}=0\ \beta_{ij}=0 \mbox{ holds})$. We say that $\beta$ lies in the interior of the $\beta$-polytope, $\beta\in {\cal B}_p^{(int)}$, if $\forall (i,j)\mbox{ with }p_{ij}\neq 0\ \beta_{ij}\neq 0,1\mbox{ holds}$.
\footnote{Let us mention, that what we call here ``interior" would be mathematically more accurate to call ``relative interior," see e.g. \cite{04BV}.}
\end{definition}
In English,  the interior solution means that all elements of the doubly stochastic $\beta$ are non-integer, under exception of the case when $p_{ij}=0$ and, respectively, $\beta_{ij}=0$.

\begin{definition}[Bethe Free Energy (for the permanent)]
\label{def:BFE}
The following functional of $\beta\in {\cal B}_p$
\begin{eqnarray}
F_{BP}(\beta|p)=\sum_{(i,j)}\left(\beta_{ij}\log(\beta_{ij}/p_{ij})\!-\!
(1\!-\!\beta_{ij})\log(1\!-\!\beta_{ij})\right),
\label{F_BP}
\end{eqnarray}
conditioned to a given $p$, is called the Bethe Free Energy (BFE) or the Belief-Propagation/Bethe-Peierls (BP) functional (for the permanent)
\footnote{In the following,  and whenever Bethe, MF, or fractional FE are mentioned, we will drop the clarifying -- for the permanent -- as only permanents are discussed in this manuscript.}
\end{definition}

To motivate the definition above let us briefly discuss the concept of the Bethe FE which was introduced in \citet{05YFW} for the case of a general pair-wise interaction GM (with variables associated with vertices of the graph). Schematically, the logic extended to the case with variables associated with edges of the graph and leading to Eq.~(\ref{F_BP}) for the permanent is as follows.  (See \citet{10WC} for a detailed discussion.) Consider a GM with binary variables associated with edges of the graph. If the graph is a tree,  then the following exact relation holds, $\rho(\sigma)=\prod_i \rho_i(\sigma_i)/\prod_{(i,j)} \rho_{ij}(\sigma_{ij})$, where $\sigma_i=(\sigma_{ij}=0,1|(i,j)\in{\cal E})$. Here, $\rho_i(\sigma_i)$ and $\rho_{ij}(\sigma_{ij})$ are marginal probabilities associated with vertex $i$ and edge $(i,j)$ of the graph. Replacing the probabilities by their proxies/beliefs,
$\rho(\sigma)\to b(\sigma)$, $\rho_i(\sigma_i)\to b_i(\sigma_i)$ and $\rho_{ij}(\sigma_{ij})\to b_{ij}(\sigma_{ij})$, substituting the ratio of probabilities expression for $b(\sigma)$ in the exact FE functional (\ref{KL}), and accounting for relations between the marginal beliefs, one arrives at the general expression for the Bethe FE functional.  This expression for the Bethe FE functional is exact on a tree only,  and it is similar in spirit to the one introduced in \citet{05YFW} as an approximation for GM on a graph with loops. When the graph is bi-partite with the equal number of nodes in the two parts the BP replacement for $b(\sigma)$ becomes
\begin{equation}
b_{BP}(\sigma)=\frac{\prod_{(i,j)\in{\cal E}:\sigma_{ij}=1}\beta_{ij}}{\prod_{(i,j)\in{\cal E}:\sigma_{ij}=0}(1-\beta_{ij})},
\label{b-bp}
\end{equation}
where $\beta_{ij}=b_{ij}(1)$ is the marginal belief correspondent to finding the edge $(i,j)$ in the matching.
Then, substituting $b(\sigma)$ by $b_{BP}(\sigma)$ in Eq.~(\ref{KL}) results in the Bethe FE expression (\ref{F_BP}) for the perfect matchings (permanents). Note, that while the exact FE (\ref{KL}) is the sum of
$O(n!)$ terms,  there are only $O(n^2)$ terms in the Bethe FE (\ref{F_BP}).

According to the Loop Calculus approach of \citet{06CCa,06CCb}, extended to the case of the permanent in \citet{08CKV,10CKKVZ,10WC}, the BP expression and the permanent are related to each other as follows:
\begin{theorem}[Permanent and BP, \citep{10WC}]
\label{Prop:Perm+BP}
If the BP equations following from minimization of the the BFE (\ref{F_BP}) over the doubly stochastic $\beta$,
\begin{eqnarray}
\forall (i,j):\quad (1-\beta_{ij})\beta_{ij}=\frac{p_{ij}}{u_i u^j},\label{BP2}
\end{eqnarray}
where $u_i$ and $u^j$ are positive-valued Lagrangian multipliers correspondent to the $\sum_{j\in {\cal V}_2} \beta_{ij}=1$ and $\sum_{i\in {\cal V}_1} \beta_{ij}=1$ constraints respectively, have a solution in the interior of the $\beta$-polytope, $\beta\in {\cal B}_p^{(int)}$, then
\begin{eqnarray}
Z=Z_{BP}(p)\perm(\beta.*(1-\beta))\frac{1}{\prod_{i,j}(1-\beta_{ij})},
\label{LC}
\end{eqnarray}
where $Z_{BP}(p)=-\log(F_{BP}(\beta|p))$.
\end{theorem}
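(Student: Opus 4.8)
The plan is to carry out the ``simple direct derivation'' alluded to in the text, sidestepping the general loop-calculus apparatus. Everything hinges on one combinatorial fact: if $\sigma$ is a perfect matching, then the edges with $\sigma_{ij}=1$ cover each $i\in{\cal V}_1$ and each $j\in{\cal V}_2$ exactly once.

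\emph{Step 1: restate the BP equations as a factorization of $p$.} I would first note that, because $\beta\in{\cal B}_p^{(int)}$, the box constraints $0<\beta_{ij}<1$ are inactive at the stationary point, so (\ref{BP2}) is exactly the condition $\partial F_{BP}/\partial\beta_{ij}=0$ together with the two doubly-stochastic constraints enforced by positive multipliers $u_i,u^j$: differentiating (\ref{F_BP}) gives $\log\beta_{ij}+\log(1-\beta_{ij})-\log p_{ij}=\log u_i+\log u^j$. (If preferred, this is simply the hypothesis of the theorem.) The useful form is $p_{ij}=u_i u^j\,\beta_{ij}(1-\beta_{ij})$, which I will substitute throughout. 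Edges with $p_{ij}=0$ carry $\beta_{ij}=0$, contribute a factor $1$ to every product below, and drop out harmlessly, so all products may be taken over $(i,j)\in{\cal E}$.

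\emph{Step 2: rewrite the permanent.} Substituting the factorization into $Z=\sum_\sigma\prod_{(i,j):\sigma_{ij}=1}p_{ij}$ and using the covering property, $\prod_{(i,j):\sigma_{ij}=1}u_i u^j=\bigl(\prod_i u_i\bigr)\bigl(\prod_j u^j\bigr)=:U$ is independent of $\sigma$. Hence each matching contributes $U\prod_{(i,j):\sigma_{ij}=1}\beta_{ij}(1-\beta_{ij})$, and summing over $\sigma$ gives $Z=U\,\perm(\beta.*(1-\beta))$.

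\emph{Step 3: evaluate $Z_{BP}$ and combine.} Reading $Z_{BP}=\exp(-F_{BP}(\beta|p))$, substitute the same factorization into (\ref{F_BP}): since $\beta_{ij}\log(\beta_{ij}/p_{ij})=-\beta_{ij}(\log u_i+\log u^j)-\beta_{ij}\log(1-\beta_{ij})$, the summand of $-F_{BP}$ collapses to $\beta_{ij}\log u_i+\beta_{ij}\log u^j+\log(1-\beta_{ij})$. Double stochasticity gives $\sum_{(i,j)}\beta_{ij}\log u_i=\sum_i\log u_i$ and likewise for $u^j$, so $-F_{BP}=\log U+\sum_{(i,j)}\log(1-\beta_{ij})$, i.e. $Z_{BP}=U\prod_{(i,j)}(1-\beta_{ij})$. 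Eliminating $U$ between Steps 2 and 3 yields $Z=Z_{BP}\,\perm(\beta.*(1-\beta))/\prod_{(i,j)}(1-\beta_{ij})$, which is (\ref{LC}). As a cross-check one can verify $\sum_\sigma b_{BP}(\sigma)=\perm(\beta.*(1-\beta))/\prod_{(i,j)}(1-\beta_{ij})$ directly from (\ref{b-bp}) and read (\ref{LC}) as $Z=Z_{BP}\sum_\sigma b_{BP}(\sigma)$.

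\emph{Main obstacle.} The algebra is routine once the factorization $p_{ij}=u_iu^j\beta_{ij}(1-\beta_{ij})$ is in hand; the only genuinely delicate point is the interior hypothesis, which is needed (i) to drop the box-constraint multipliers and write (\ref{BP2}) as stated, (ii) to ensure $u_i,u^j$ are finite and strictly positive so that the division in (\ref{BP2}) and all the logarithms are legitimate, and (iii) to ensure $1-\beta_{ij}\neq 0$ so that the denominator $\prod_{(i,j)}(1-\beta_{ij})$ in (\ref{LC}) does not vanish. I would make each of these explicit rather than leave them implicit.
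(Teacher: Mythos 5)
Your proposal is correct and follows essentially the same route as the paper's proof in Appendix \ref{subsec:BP-exact}: your Step 2 is the paper's ``apply the permanent to both sides of Eqs.~(\ref{BP2})'' (its Eq.~(\ref{BP-Perm})), and your Step 3 is the paper's $\beta$-weighted logarithm of Eqs.~(\ref{BP2}) combined with double stochasticity (its Eq.~(\ref{u-eqs})), with the same elimination of $\prod_i u_i\prod_j u^j$ at the end. Your explicit attention to the interior hypothesis, and the correct reading of $Z_{BP}=\exp(-F_{BP}(\beta|p))$ despite the misprint in the theorem statement, are both consistent with the paper's intent.
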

The proof of the Theorem \ref{Prop:Perm+BP} also appears in Appendix \ref{subsec:BP-exact}.
An iterative heuristic algorithm solving BP Eqs.~(\ref{BP2}) for the doubly stochastic $\beta$ efficiently is discussed in Appendix \ref{sec:BP-iterat}.

Let us recall that the $(i,j)$ element of the doubly stochastic $\beta$, $\beta_{ij}$, is interpreted as the proxy (approximation) to the marginal probability for the $(i,j)$ edge of the bipartite graph ${\cal G}$ to be in a perfect matching, i.e., $\beta_{ij}$, should be thought of as an approximation for $\varrho_{ij}=\sum_{\sigma:\, \sigma_{ij}=1}\varrho(\sigma)$.

Note also that $\log(u_i)$ and $\log(u^j)$ in Eqs.~(\ref{BP2}) are the Lagrange multipliers related to the $2n$ double stochasticity (equality) constraints on $\beta$.

\subsubsection{BP as the Minimum of the Bethe Free Energy}
\label{subsubsec:o-BP}

\begin{definition}[Optimal Bethe Free Energy] We define optimal BFE, $F_{o-BP}(p)$, and resepctive counting factor,
$Z_{o-BP}(p)$, according to
\begin{eqnarray}
-\log(Z_{o-BP}(p))=F_{o-BP}(p)=\min_{\beta\in {\cal B}} F_{BP}(\beta|p),
\label{o-BP}
\end{eqnarray}
where $F_{BP}(\beta|p)$ is defined in Eq.~(\ref{F_BP}).
\end{definition}

Considered in the general spirit of \citet{05YFW},
$F_{o-BP}(p)$, just defined, should be understood as an approximation to $-\log(\perm(p))$.
To derive Eq.~(\ref{o-BP}) one needs to replace $b(\sigma)$ by its lower parametric proxy (\ref{b-bp}). (See \citet{10WC} for more details.)

The relation between the optimization formulation (\ref{o-BP}) and the BP Eqs.~(\ref{BP2}) requires some clarifications stated below in terms of the following two propositions.

\begin{proposition}[Partially Resolved BP Solutions]
\label{prop:part-BP}
Any doubly stochastic $\beta$ solving Eqs. (\ref{BP2}) and lying on the boundary of the $\beta$ polytope, i.e., $\beta\in {\cal B}_p$ but $\beta\notin {\cal B}_p^{(int)}$, can be reduced by permutations of rows and columns of $\beta$ (and $p$, respectively) to a block diagonal matrix,  with one block consisting of $0,1$ elements only and corresponding to a partial perfect matching,  and the other block having all elements strictly smaller than unity,  and nonzero if the respective $p_{ij}\neq 0$. We call such a solution of the BP Eqs.~(\ref{BP2}) partially resolved solutions, emphasizing that a part of the solution forms a partial perfect matching, and any other perfect matching over this subset is excluded by the solution (in view of the probabilistic interpretation of $\beta$). A doubly stochastic $\beta$ corresponding to a full perfect matching is called a fully resolved solution of the BP Eqs.~(\ref{BP2}).
\end{proposition}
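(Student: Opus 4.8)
The plan is to analyze the structure forced by the BP equations $(1-\beta_{ij})\beta_{ij}=p_{ij}/(u_iu^j)$ when the doubly stochastic matrix $\beta$ lies on the boundary of the $\beta$-polytope, i.e.\ when some entry $\beta_{ij}$ equals $0$ or $1$. First I would observe that by the definition of ${\cal B}_p$, whenever $p_{ij}=0$ we already have $\beta_{ij}=0$, so the interesting boundary behavior is the appearance of an entry with $\beta_{ij}\in\{0,1\}$ while $p_{ij}\neq 0$. I would handle the two cases together: if $\beta_{i_0j_0}=1$ for some edge with $p_{i_0j_0}\neq 0$, then double stochasticity forces $\beta_{i_0 j}=0$ for all $j\neq j_0$ and $\beta_{ij_0}=0$ for all $i\neq i_0$, so row $i_0$ and column $j_0$ are ``resolved'' and removing them leaves a smaller doubly stochastic matrix. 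If instead $\beta_{i_0j_0}=0$ while $p_{i_0j_0}\neq 0$, then the BP equation reads $0=p_{i_0j_0}/(u_{i_0}u^{j_0})>0$ (since the Lagrange multipliers are positive and $p_{i_0j_0}>0$), a contradiction; hence the only boundary entries that occur with $p_{ij}\neq 0$ are $1$'s, and each such $1$ carries with it a full row and column of forced zeros.

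Next I would set up the combinatorial reduction. Let $I\subseteq{\cal V}_1$ and $J\subseteq{\cal V}_2$ be the sets of rows and columns that contain a $1$-entry of $\beta$. By the argument above, these $1$'s form a partial permutation matrix, so $|I|=|J|=:m$, and after permuting rows and columns we may assume this partial permutation sits in the top-left $m\times m$ block as the identity. Double stochasticity of $\beta$ then forces all other entries in rows $I$ and columns $J$ to be zero, so $\beta$ is block-diagonal: a top-left $m\times m$ permutation block and a bottom-right $(n-m)\times(n-m)$ block $\tilde\beta$ which is itself doubly stochastic. It remains to check that $\tilde\beta$ lies in the \emph{interior} of its own $\beta$-polytope, i.e.\ that it has no further $0$/$1$ entries in positions where the corresponding $\tilde p$ is nonzero: a $1$-entry would put its row/column into $I$/$J$, contradicting maximality, and a $0$-entry with $\tilde p\neq 0$ would again contradict the BP equation, exactly as before. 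This yields the claimed block-diagonal normal form, with the ``resolved'' block a partial perfect matching of $p$ and the complementary block satisfying the interior condition. The statement about the fully resolved case ($m=n$) is then immediate: $\beta$ is a single permutation matrix, a genuine perfect matching.

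The main obstacle I anticipate is being careful about the bookkeeping that ties the zero-pattern of $\beta$ back to the zero-pattern of $p$ and to the \emph{nonzero-but-interior} requirement on the residual block, rather than any deep inequality — the positivity of the Lagrange multipliers $u_i,u^j$ does all the real work in ruling out spurious boundary zeros. One subtlety worth spelling out is why the ``other block'' must be \emph{connected to itself} under double stochasticity, i.e.\ why no residual zero row/column can sneak in; this is just that a doubly stochastic matrix restricted to the complement of a union of resolved rows and columns is again doubly stochastic, so its row and column sums are $1$ and it cannot have an all-zero line. A secondary point is the probabilistic interpretation remark — that the resolved partial matching excludes all alternative matchings on that sub-block — which follows from $\beta_{ij}$ being the BP proxy for $\varrho_{ij}=\sum_{\sigma:\sigma_{ij}=1}\varrho(\sigma)$, so $\beta_{ij}=1$ on the block forces every configuration in the block's support to agree with the resolved matching; this is interpretive rather than part of the formal argument.
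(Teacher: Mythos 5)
Your overall strategy---read the block structure off the $1$-entries forced by double stochasticity, then argue that the residual block is interior---is the right one, and indeed the paper itself offers no more than the remark that the claim ``follows directly from the double stochasticity of $\beta$ and from the form of the BP Eqs.'' But your case analysis contains an internal inconsistency that leaves a genuine gap. You dismiss the case $\beta_{i_0j_0}=0$, $p_{i_0j_0}\neq 0$ as an outright contradiction because $(1-\beta_{i_0j_0})\beta_{i_0j_0}=0$ cannot equal the positive quantity $p_{i_0j_0}/(u_{i_0}u^{j_0})$. The identical computation applies to a $1$-entry: $(1-1)\cdot 1=0$ as well, so by your own reasoning no boundary entry of either kind could exist and the proposition would be vacuous. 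Moreover, every $1$-entry drags along a row and a column of forced zeros, and for generic $p$ those zeros sit at positions with $p_{ij}\neq 0$---precisely the configuration you declared impossible. The resolution, which the paper flags in the proposition immediately following this one, is that boundary solutions of Eqs.~(\ref{BP2}) correspond to Lagrange multipliers taking the value $+\infty$; both $0$- and $1$-entries at positions with $p_{ij}\neq 0$ require $u_iu^j=+\infty$, so neither can be excluded by positivity alone.

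What is actually needed, and what your writeup is missing, is the observation that an infinite multiplier can only attach to a \emph{resolved} line. If row $i$ contains no $1$, double stochasticity gives it at least two entries in the open interval $(0,1)$; for any such entry $(1-\beta_{ij})\beta_{ij}=p_{ij}/(u_iu^j)>0$ forces $u_i$ (and $u^j$) to be finite. Hence $u_i<\infty$ for every row outside your set $I$ and $u^j<\infty$ for every column outside $J$, and for $i\notin I$, $j\notin J$ with $p_{ij}\neq 0$ the BP equation then yields $\beta_{ij}\in(0,1)$, which is exactly the interiority of the residual block; it also shows that any $0$-entry at a position with $p_{ij}\neq 0$ must lie in a resolved row or column, so the zeros cannot ``sneak'' outside the resolved block. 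With this lemma in place the rest of your reduction (the $1$'s form a partial permutation, the cross-blocks vanish, the residual block is doubly stochastic) goes through as you describe.
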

\begin{proof}
This statement follows directly from the double stochasticity of $\beta$ and from the form of the BP Eqs.~(\ref{BP2}),  and it was already discussed in \citet{08CKV,10WC} for the fully resolved case.
\end{proof}

\begin{proposition}[Optimal Bethe FE and BP equations]
The optimal Bethe FE, $F_{o-BP}(\beta)$ over $\beta\in {\cal B}_p$, can only be achieved at a solution of the BP Eqs.~(\ref{BP2}), possibly with the Lagrange multipliers $u_i$, $u^j$ taking the value $+\infty$.
\end{proposition}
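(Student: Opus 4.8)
The natural route is to read the statement as a first-order (KKT) optimality assertion for the convex program $\min_{\beta\in{\cal B}_p}F_{BP}(\beta|p)$. First I would record the standing facts: $F_{BP}(\cdot|p)$ is continuous on ${\cal B}_p$ with the convention $0\log 0=0$; ${\cal B}_p$ is a compact convex polytope, nonempty because $\perm(p)>0$ supplies a permutation matrix in it; and $F_{BP}$ is convex on ${\cal B}_p$ (the convexity of \citet{11Von} already invoked around Theorem~\ref{Prop:Perm+BP}). Hence a minimizer $\beta^*$ exists, and by convexity it is characterized by the absence of a feasible descent direction. The proof then splits according to whether $\beta^*$ lies in the relative interior ${\cal B}_p^{(int)}$ or on the boundary of ${\cal B}_p$.

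The generic interior case is a short computation. There $F_{BP}$ is differentiable, only the $2n$ double-stochasticity equalities are active, and direct differentiation gives
$$
\frac{\partial F_{BP}}{\partial\beta_{ij}}=\log\!\Big(\frac{\beta_{ij}(1-\beta_{ij})}{p_{ij}}\Big)+2 .
$$
Introducing Lagrange multipliers $\alpha_i,\gamma_j$ for the row and column constraints, stationarity reads $\partial F_{BP}/\partial\beta_{ij}=\alpha_i+\gamma_j$; exponentiating and absorbing the constant $e^2$ into $u_i:=e^{1-\alpha_i}>0$ and $u^j:=e^{1-\gamma_j}>0$ reproduces exactly Eqs.~(\ref{BP2}) with finite positive multipliers. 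Thus interior minimizers are (fully unresolved) solutions of the BP equations.

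The substantive case is a minimizer on the boundary, i.e. with $\beta^*_{ij}\in\{0,1\}$ for some $(i,j)$ with $p_{ij}\neq 0$. Here I would exploit the one-sided blow-up $\partial F_{BP}/\partial\beta_{ij}\to-\infty$ as $\beta_{ij}\to 0^+$ or $\beta_{ij}\to 1^-$: if there were a feasible direction $d$ at $\beta^*$ moving such a coordinate into the open interval $(0,1)$, the directional derivative would be $-\infty$, contradicting optimality. Translating this ``no escape'' property through double stochasticity is precisely the mechanism of Proposition~\ref{prop:part-BP}: up to permutations of rows and columns, $\beta^*$ decomposes into a block of $0,1$ entries (a partial perfect matching) and a complementary block all of whose entries lie strictly between $0$ and $1$. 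On the complementary block I would apply the interior computation to obtain finite $u_i,u^j$; for every row or column index sitting in the $0,1$ block I would set the corresponding multiplier to $+\infty$. Then Eqs.~(\ref{BP2}) hold everywhere: inside the $0,1$ block and for the (necessarily vanishing) cross-block entries the left-hand side is $0$ and the right-hand side is $p_{ij}/(u_iu^j)=0$, while inside the complementary block the equations hold with the finite multipliers. This exhibits $\beta^*$ as a solution of the BP equations with possibly infinite Lagrange multipliers.

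I expect the boundary analysis to be the only real obstacle, with two points needing care. First, deducing the block decomposition from ``no feasible descent direction'' is a combinatorial statement about the bipartite support graph of $\beta^*$ — essentially that every positive entry of $\beta^*$ lies on an alternating cycle, so that if some position with $p_{ij}\neq 0$ cannot be raised off $\beta_{ij}=0$ then the supports must split — and I would want either to quote Proposition~\ref{prop:part-BP} cleanly or to redo this argument directly for $F_{BP}$-minimizers rather than for solutions of (\ref{BP2}). Second, positions with $p_{ij}\neq 0$ that are nonetheless pinned to $\beta_{ij}=0$ throughout ${\cal B}_p$ (structurally zero positions) behave exactly like genuine zero positions and should be handled as such before the multiplier bookkeeping; once that is done, checking that the choice of which $u_i,u^j$ are sent to $+\infty$ is consistent across blocks is routine.
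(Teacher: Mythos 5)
Your overall strategy (a KKT analysis of the convex program, split into interior and boundary minimizers) is genuinely different from the paper's argument, which instead deforms $p$ continuously and tracks how interior BP solutions can merge into, or emerge from, vertices of the $\beta$-polytope; your interior computation is correct and does recover Eqs.~(\ref{BP2}) with finite positive multipliers.

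The boundary case, however, contains a genuine error. You claim that if a feasible direction $d$ moves a pinned coordinate into $(0,1)$ then the directional derivative is $-\infty$. This is true only for coordinates pinned at $0$: there $d_{ij}>0$ and $\partial F_{BP}/\partial\beta_{ij}\to-\infty$, giving a $-\infty$ contribution. For a coordinate pinned at $1$ the entropy term $-(1-\beta_{ij})\log(1-\beta_{ij})$ also has derivative tending to $-\infty$, but any feasible direction there has $d_{ij}<0$, so its contribution to $\langle\nabla F_{BP},d\rangle$ is $+\infty$. Since double stochasticity forces any direction that raises a $0$-entry lying in a row or column containing a $1$-entry to simultaneously lower that $1$-entry, the directional derivative is an $\infty-\infty$ competition, not automatically $-\infty$. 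Indeed, if your claim held as stated it would prove that no minimizer can sit on the boundary of ${\cal B}_p$ at all, contradicting the example recalled in the paper from \citet{10WC} in which the minimum of (\ref{o-BP}) is attained at a perfect-matching corner. The correct bookkeeping is that along an alternating-cycle direction the leading $t\log t$ term carries the coefficient (number of entries raised from $0$) minus (number of entries lowered from $1$); infinite-rate descent occurs only when this is positive, and it is exactly this counting that rules out all boundary configurations other than the partially or fully resolved ones. You do flag the combinatorial step at the end, but the mechanism you propose for it does not work, and Proposition~\ref{prop:part-BP} cannot simply be quoted in its place since it concerns solutions of Eqs.~(\ref{BP2}) rather than minimizers of (\ref{F_BP}). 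Once that cycle-counting argument is carried out properly, your multiplier bookkeeping (finite on the interior block, $+\infty$ on the $0/1$ block) goes through.
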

\begin{proof}
This statement is an immediate consequence of the fact that Proposition \ref{prop:part-BP} is valid for any $p$, and so a continuous change in $p$ (capable of covering all possible achievable $p$) can only result in an interior solution for the doubly stochastic $\beta$ merging into a vertex of the $\beta$-polytope, or emerging from the vertex (than respective Lagrangian multipliers take the the value $+\infty$),  but never reaching an edge of the polytope at any other location but a vertex. Therefore, we can exclude the possibility of achieving the minimum of the Bethe FE anywhere but at an interior solution, partially resolved solution or a fully resolved solution (corresponding to a perfect matching) of the BP equations.
\end{proof}

Note,  that an example where the minimum in Eq.~(\ref{o-BP}) is achieved at the boundary of the $\beta-polytope$ (in fact, at the most probable perfect matching corner of the polytope) was discussed in \citet{10WC}. 

Another useful and related statement, made recently in \citet{11Von}, is
\begin{proposition}[Convexity of the Bethe FE, \cite{11Von}]
\label{prop:BP-convexity}
The Bethe FE (\ref{F_BP}) is a convex functional of $\beta\in {\cal B}_p$.
\end{proposition}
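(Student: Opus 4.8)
The plan is to exploit that $F_{BP}(\beta|p)$ in (\ref{F_BP}) is a \emph{separable} functional --- a sum over edges $(i,j)$ of the one-variable function $f_{ij}(x)=x\log(x/p_{ij})-(1-x)\log(1-x)$ --- so that its Hessian in the coordinates $\beta_{ij}$ is diagonal, with $(i,j)$ entry $f_{ij}''(\beta_{ij})=\tfrac{1}{\beta_{ij}}-\tfrac{1}{1-\beta_{ij}}$. Since ${\cal B}_p$ is a polytope lying in the affine subspace cut out by the $2n$ row- and column-sum constraints, convexity of $F_{BP}$ on ${\cal B}_p$ is equivalent --- by checking it on the relative interior, where $F_{BP}$ is smooth, and extending by continuity to the closed polytope --- to the estimate
\begin{equation}
\sum_{(i,j)}\left(\frac{1}{\beta_{ij}}-\frac{1}{1-\beta_{ij}}\right)v_{ij}^2\;\ge\;0
\label{eq:hesspsd}
\end{equation}
for every $\beta\in{\cal B}_p^{(int)}$ and every direction $V=(v_{ij})$ tangent to ${\cal B}_p$, i.e. with $v_{ij}=0$ whenever $p_{ij}=0$, $\sum_j v_{ij}=0$ for all $i$, and $\sum_i v_{ij}=0$ for all $j$. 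The content here is that the coefficient $\tfrac{1}{\beta_{ij}}-\tfrac{1}{1-\beta_{ij}}$ is \emph{negative} as soon as $\beta_{ij}>1/2$, so (\ref{eq:hesspsd}) is not a coordinatewise fact and must genuinely use double stochasticity.

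The main step is to prove (\ref{eq:hesspsd}) one row at a time, using only the row constraints. Fix a row $i$, restrict to its support $\{j:p_{ij}\neq 0\}$, and abbreviate $b_j:=\beta_{ij}>0$ (so $\sum_j b_j=1$) and $v_j:=v_{ij}$ (so $\sum_j v_j=0$); the goal is the elementary inequality $\sum_j v_j^2/b_j\ge\sum_j v_j^2/(1-b_j)$. Because $\sum_j b_j=1$, at most one $b_j$ can exceed $1/2$. If none does, every coefficient $1/b_j-1/(1-b_j)$ is nonnegative and we are done. If $b_1>1/2$, Cauchy--Schwarz gives $v_1^2=(\sum_{k\ge 2}v_k)^2\le(\sum_{k\ge 2}b_k)(\sum_{k\ge 2}v_k^2/b_k)=(1-b_1)\sum_{k\ge 2}v_k^2/b_k$, so the single negative term is bounded by $\tfrac{2b_1-1}{b_1(1-b_1)}v_1^2\le\tfrac{2b_1-1}{b_1}\sum_{k\ge 2}v_k^2/b_k$; it then suffices to compare term by term, i.e. to check $\tfrac{2b_1-1}{b_1}\le\tfrac{1-2b_k}{1-b_k}$ for each $k\ge 2$ (note $1-2b_k>0$ since $b_k\le 1-b_1<1/2$). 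Clearing the positive denominators, this last inequality is exactly $b_1+b_k\le 1$, which holds because $b_k\le\sum_{m\ge 2}b_m=1-b_1$. Summing the resulting inequalities over all rows $i$ yields (\ref{eq:hesspsd}).

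Together the two steps give convexity of $F_{BP}$ on ${\cal B}_p$. I expect the only real obstacle to be the second step --- finding the right elementary lemma; once one notices that at most one $\beta_{ij}$ per row can exceed $1/2$ and that Cauchy--Schwarz together with $b_1+b_k\le 1$ is enough, everything else is routine bookkeeping. Two remarks: the argument uses only the row constraints, so $F_{BP}$ is in fact already convex on the larger polytope of row-stochastic matrices respecting the zero pattern of $p$ (the column constraints are available but unnecessary); and strict convexity is a separate, more delicate matter --- for instance, when $n=2$ one checks directly that $F_{BP}$ is affine on ${\cal B}_p$ --- and is not asserted by the proposition.
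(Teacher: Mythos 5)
Your argument is correct, and it is worth noting that the paper itself offers no proof of this proposition: it is imported wholesale from \citet{11Von}, so yours is a genuinely self-contained derivation rather than a variant of something in the text. I checked the two nontrivial steps. The Hessian of the separable functional is indeed diagonal with entries $1/\beta_{ij}-1/(1-\beta_{ij})$, and your row lemma is sound: with $\sum_j b_j=1$ at most one $b_j$ exceeds $1/2$, Cauchy--Schwarz in the form $v_1^2\le(1-b_1)\sum_{k\ge2}v_k^2/b_k$ controls the single negative term, and the termwise comparison $\frac{2b_1-1}{b_1}\le\frac{1-2b_k}{1-b_k}$ reduces, after clearing the positive denominators, exactly to $b_1+b_k\le 1$. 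Summing over rows and passing from the relative interior to the closed polytope by continuity (with the convention $0\log 0=0$) finishes the proof. This is essentially the same mechanism as in Vontobel's proof---a row-wise decomposition of the entropy Hessian combined with an elementary inequality on the probability simplex restricted to zero-sum directions---but your packaging is more elementary and makes two useful points explicit that the paper only gestures at: first, that convexity already holds on the larger polytope of row-stochastic matrices with the given zero pattern, the column constraints being unnecessary; and second, that the failure of coordinatewise convexity for $\beta_{ij}>1/2$ (which the paper flags as the reason the statement is ``nontrivial'') is repaired precisely by the normalization $\sum_j\beta_{ij}=1$, since it forces at most one large entry per row. Your closing remark that strict convexity can fail (the $n=2$ case, where the entropy vanishes identically and $F_{BP}$ is affine) is consistent with the paper's own discussion of the degenerate $2\times2$ example.
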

A few remarks are in order. First, the statement above is nontrivial as, considered naively, individual edge contributions in Eq.~(\ref{F_BP}) associated with the entropy term, $\beta_{ij}\log\beta_{ij}-(1-\beta_{ij})\log(1-\beta_{ij})$, are not convex for $\beta_{ij}>1/2$,  and the convexity is restored only due to the global (double stochasticity) condition. Second, the convexity means that if the optimal solution is not achieved at the boundary of ${\cal B}_p$,
then either the solution is unique (general case) or the situation is degenerate and one finds a continuous family of solutions all giving the same value of the Bethe FE. The degeneracy means that $p$ should be fine tuned to get into the situation, and addition of an almost any small (random) perturbation to $p$ would remove the degeneracy.  To illustrate how the degeneracy may occur, consider an example of a $(2\times 2)$ matrix $p$ with all elements equal to each other. We first observe that regardless of $p$ for $n=2$, the entropy contributions to the Bethe FE are identical to zero for any doubly stochastic $(2\times 2)$ matrix, $\sum_{(i,j)}^{i,j=1,2}(\beta_{ij}\log \beta_{ij}-(1-\beta_{ij})\log(1-\beta_{ij}))=0$. Moreover, the remaining, linear in $\beta$, contribution to the Bethe FE (which is also called the self-energy in physics) turns into a constant for the special choice of $p$. Thus one finds that in this degenerate $n=2$ case,
$$
\beta=\left(\begin{array}{cc} \alpha & 1-\alpha\\ 1-\alpha & \alpha \end{array}\right),
$$
with any $\alpha\in [0;1]$, is a solution of Eqs.~(\ref{BP2}) also achieving the minimum of the Bethe FE. Creating any asymmetry between the four components of the $(2\times 2)$ $p$ will remove the degeneracy,  moving the solution of Eqs.~(\ref{BP2}) achieving the minimum of the Bethe FE to one of the two perfect matchings, correspondent to $\alpha=0$ and $\alpha=1$, respectively. It is clear that this special ``double" degeneracy (first, cancellation of the entropy contribution,  and then
constancy of the self-energy term) will not appear at all if the doubly stochastic $\beta$, solving Eqs.~(\ref{BP2}) in the $n>2$ case has more than two nonzero components in every row and column. Combined with Proposition \ref{prop:BP-convexity}, this observation translates into the following statement.
\begin{corollary}[Uniqueness of interior BP solution]
\label{cor:BP-uniqness}
If $n>2$ and an interior, $\beta\in{\cal B}_p^{(int)}$, solution of Eqs.~(\ref{BP2}) has more than two nonzero elements in every row and column, then the solution is unique \footnote{In the following, discussing an interior BP solution, $\beta\in{\cal B}_p^{(int)}$ and aiming to focus only on the interesting/nontrivial cases, we will  be assuming that $n>2$ and $p$ has more than two nonzero elements in every row and column.}.
\end{corollary}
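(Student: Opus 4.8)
The plan is to argue by contradiction, leveraging the convexity of the Bethe FE (Proposition~\ref{prop:BP-convexity}) to reduce non-uniqueness to the existence of a single ``flat direction,'' and then to show that such a direction cannot coexist with more than two nonzero entries in every row and column. Suppose $\beta^{(0)}\neq\beta^{(1)}$ were two interior solutions of Eqs.~(\ref{BP2}). Since these equations are the stationarity conditions for minimizing $F_{BP}$ over the doubly stochastic $\beta$, and $F_{BP}$ is convex on $\mathcal{B}_p$, both $\beta^{(0)}$ and $\beta^{(1)}$ are global minimizers of $F_{BP}$, hence so is every point of the segment $\beta^{(t)}=(1-t)\beta^{(0)}+t\beta^{(1)}$, $t\in[0,1]$. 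A convex combination of interior points has no entry equal to $0$ (unless $p_{ij}=0$) and none equal to $1$, so $\beta^{(t)}\in\mathcal{B}_p^{(int)}$ is itself an interior BP solution and $F_{BP}(\beta^{(t)})$ is constant in $t$. Set $\delta=\beta^{(1)}-\beta^{(0)}\neq0$; it is a nonzero circulation on the support graph of $p$ (vanishing row and column sums, supported on $\{(i,j):p_{ij}\neq0\}$) and a flat direction of $F_{BP}$ at every $\beta^{(t)}$.

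Next I would extract the rigidity this forces. The Hessian of $F_{BP}$ is diagonal with entries $f''(\beta_{ij})$, where $f''(x)=(1-2x)/\!\bigl(x(1-x)\bigr)=x^{-1}-(1-x)^{-1}$, so convexity means the quadratic form $w\mapsto\sum_{(i,j)}f''(\beta^{(t)}_{ij})\,w_{ij}^{2}$ is positive semidefinite on the tangent space $T$ of $\mathcal{B}_p$ (the space of circulations on the support graph). Since $\delta\in T$ is a null vector of this form for every $t\in[0,1]$, it lies in its radical, hence the curvature vector with entries $f''(\beta^{(t)}_{ij})\,\delta_{ij}=\tfrac{d}{dt}\log\!\bigl(\beta^{(t)}_{ij}(1-\beta^{(t)}_{ij})\bigr)$ lies in $T^{\perp}$, the ``row plus column'' space of vectors $(r_i+s_j)$, for every $t$ (equivalently $\log\!\bigl(\beta^{(t)}_{ij}(1-\beta^{(t)}_{ij})\bigr)-\log p_{ij}$ has this form, which is also what Eqs.~(\ref{BP2}) say directly). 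Thus for every cycle $C$ in the support graph, with alternating signs $\epsilon^{C}_{ij}=\pm1$, the rational function $t\mapsto\sum_{(i,j)\in C}\epsilon^{C}_{ij}\bigl[\tfrac{\delta_{ij}}{\beta^{(0)}_{ij}+t\delta_{ij}}-\tfrac{\delta_{ij}}{1-\beta^{(0)}_{ij}-t\delta_{ij}}\bigr]$ vanishes identically, so its poles — located at the ``times'' where a moving entry $\beta^{(0)}_{ij}+t\delta_{ij}$ would hit $0$ or $1$ — must all cancel against each other.

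The heart of the argument is then combinatorial--analytic: to show these cycle identities are inconsistent with $\delta\neq0$ once every row and column of $p$ has more than two nonzero entries. Let $\Gamma=\operatorname{supp}(\delta)$. Being a circulation, $\Gamma$ has minimum degree at least $2$, whereas by hypothesis every vertex it touches has degree at least $3$ in the support of $p$; hence there are frozen edges $(i,j)\in\operatorname{supp}(p)\setminus\Gamma$, on which $\beta^{(t)}_{ij}$ does not move but which close additional cycles with the moving edges of $\Gamma$. Feeding these cycles into the identity above, and using that $t\mapsto f''(\beta^{(0)}_{ij}+t\delta_{ij})$ is non-affine on any nondegenerate interval (its second derivative vanishes only where $\beta_{ij}=1/2$), together with the pole structure of $f''$ and the identity $f''(x)+f''(1-x)=0$, one forces the ``balanced'' relations $\beta^{(0)}_{ij}+\beta^{(0)}_{ij'}=1$ (same row) or $\beta^{(0)}_{ij}+\beta^{(0)}_{i'j}=1$ (same column) on the moving edges; by double stochasticity of $\beta^{(0)}$ such a relation means the corresponding row or column has exactly two nonzero entries, contradicting the hypothesis. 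I expect the main obstacle to be precisely this last step: when $\Gamma$ is a single $4$-cycle the pole cancellations pin down a two-entry row or column immediately, but for longer circulations the moving entries may a priori pair across unrelated rows and columns, and ruling that out is exactly where one must invoke the extra cycles guaranteed by the ``more than two nonzeros'' assumption. (As a consistency check, for $n=2$ every doubly stochastic matrix already has the balanced $2\times2$ form, which is why $n>2$ is assumed and why, as the preceding example shows, the statement genuinely fails at $n=2$.)
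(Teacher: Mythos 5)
Your overall strategy coincides with the paper's: use the convexity of $F_{BP}$ (Proposition~\ref{prop:BP-convexity}) to reduce non-uniqueness to the existence of a flat direction $\delta$ along a segment of interior minimizers, and then argue that such a direction is incompatible with every row and column having more than two nonzero entries. Your setup of this reduction is correct and considerably more explicit than what the paper itself provides: the paper justifies the corollary only by the informal observation that the ``double degeneracy'' (vanishing of the entropy contribution plus constancy of the self-energy, as in the $2\times 2$ all-$1/2$ block) ``will not appear at all'' under the stated hypothesis, with no Hessian or cycle analysis. Your identification of the flat direction as a circulation lying in the radical of the diagonal quadratic form with entries $f''(\beta_{ij})=\beta_{ij}^{-1}-(1-\beta_{ij})^{-1}$, and the resulting cycle identities in $t$ whose poles must cancel, is a sound and genuinely more rigorous framework than the one-sentence argument in the text.

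That said, there is a genuine gap, and you have located it yourself: the passage from the pole-cancellation identities to the forced relation $\beta_{ij}+\beta_{ij'}=1$ within a single row or column is asserted rather than proved, and this is precisely where the hypothesis of more than two nonzero entries per row and column must do its work. Two specific points need repair. First, the claim ``hence there are frozen edges'' does not follow from comparing minimum degrees: the circulation $\delta$ could be supported on all of the support graph of $p$, in which case no edge is frozen, so that case must be handled separately. Second, for a long circulation the canceling pole pairs could a priori join edges in unrelated rows and columns; the $2\times 2$ mechanism (cancellation between the ``$\beta_{ij}\to 0$'' pole and the ``$\beta_{ij'}\to 1$'' pole of two entries in the same row, which forces $\beta_{ij}+\beta_{ij'}=1$ and, by double stochasticity and interiority, exactly two nonzero entries in that row) is shown to be sufficient for degeneracy but not necessary, and excluding the other pairings is the whole content of the corollary. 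To be fair, the paper supplies nothing at this step beyond ``it is clear that,'' so your attempt does not fall short of the published argument --- but as a self-contained proof it is incomplete at the decisive step.
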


\subsection{Mean-Field Approach}
\label{subsec:MF}

\begin{definition}[Mean Field Free Energy]
For $\beta\in {\cal B}_p$, the MF FE is defined as
\begin{eqnarray}
F_{MF}(\beta|p)=\sum_{(i,j)}\left(\beta_{ij}\log(\beta_{ij}/p_{ij})+(1-\beta_{ij})\log(1-\beta_{ij})\right).
\label{MF1}
\end{eqnarray}
\end{definition}

Let us precede discussion of usefulness of the MF notion/approach by a historical and also motivational remark. MF is normally thought of as an approximation ignoring correlations between variables.  Then, the joint distribution function of $\sigma$ is expressed in terms of the product of marginal distributions of its components. In our case of the perfect matching GM over the bi-partite graph, the MF approximation constitutes the following substitution for the exact beliefs,
\begin{equation}
b(\sigma)\to \prod_{(i,j)\in{\cal E}}b_{ij}(\sigma_{ij}),
\label{MF-subs}
\end{equation}
into Eq.~(\ref{KL}). Making the substitution and relating the marginal edge beliefs to $\beta$ according to, $\forall (i,j)\in{\cal E}:\ b_{ij}(1)=\beta_{ij},\ b_{ij}(0)=1-\beta_{ij}$, one arrives at Eq.~(\ref{MF1}). Because of how the perfect matching problem is defined, the two states of an individual variable,
$\sigma_{ij}=0$ and $\sigma_{ij}=1$,  are in the exclusion relation, and so one can also associate the special form of Eq.~(\ref{MF1}) with the exclusion or Fermi- (for Fermi-statistics of physics) principle.

Direct examination of Eq.~(\ref{MF1}) reveals that
\begin{proposition}[MF FE minimum is always in the interior]
\label{prop:MF}
$F_{MF}(\beta|p)$ is strictly convex and its minimum is achieved at $\beta\in {\cal B}_p^{(int)}$.
\end{proposition}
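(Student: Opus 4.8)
The plan is to establish the two assertions separately: strict convexity (routine) and then the fact that the minimizer of $F_{MF}(\cdot|p)$ over ${\cal B}_p$ lies in the relative interior (the substantive part). First I would write $F_{MF}(\beta|p)=\sum_{(i,j)}g_{ij}(\beta_{ij})$ with $g_{ij}(x)=x\log(x/p_{ij})+(1-x)\log(1-x)$ on $[0,1]$ (extended continuously by $0\log 0=0$), and compute $g_{ij}'(x)=\log(x/(p_{ij}(1-x)))$ and $g_{ij}''(x)=1/x+1/(1-x)>0$ on $(0,1)$. Hence each $g_{ij}$ is strictly convex on $[0,1]$, and a sum of strictly convex functions of the individual coordinates is strictly convex on ${\cal B}_p$ (equality in the convexity inequality for the sum forces equality in each summand, hence would force the two endpoints of the segment to coincide). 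This already yields uniqueness of the minimizer; its existence is immediate since ${\cal B}_p$ is nonempty (as $\perm(p)>0$ forces at least one perfect matching) and compact while $F_{MF}(\cdot|p)$ is continuous on ${\cal B}_p$ up to the boundary.

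For the location of the minimum I would argue by contradiction. Suppose it is attained at some $\beta^{*}$ on the relative boundary of ${\cal B}_p$; fix any $\bar\beta$ in the relative interior and consider the segment $\beta(t)=(1-t)\beta^{*}+t\bar\beta$, which lies in the relative interior for $t\in(0,1]$. The key object is
$$\frac{d}{dt}F_{MF}(\beta(t)|p)=\sum_{(i,j)}g_{ij}'(\beta_{ij}(t))\,(\bar\beta_{ij}-\beta^{*}_{ij}),$$
whose behaviour as $t\to 0^{+}$ I would analyse term by term: a coordinate with $\beta^{*}_{ij}\in(0,1)$ contributes a bounded finite amount; no coordinate contributes $+\infty$ (that would require $\beta^{*}_{ij}=0$ with $\bar\beta_{ij}<0$, or $\beta^{*}_{ij}=1$ with $\bar\beta_{ij}>1$, both impossible); and since $\beta^{*}$ lies on the relative boundary, some non-negativity constraint is active, i.e. $\beta^{*}_{ij}=0$ for an edge $(i,j)$ that occurs in some perfect matching, so that $\bar\beta_{ij}>0$ and $g_{ij}'(\beta_{ij}(t))(\bar\beta_{ij}-\beta^{*}_{ij})\to-\infty$ there, using $g_{ij}'(x)\to-\infty$ as $x\to0^{+}$. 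Hence the directional derivative tends to $-\infty$, so $F_{MF}(\beta(t)|p)<F_{MF}(\beta^{*}|p)$ for all small $t>0$, contradicting minimality; therefore the minimizer lies in ${\cal B}_p^{(int)}$.

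The main obstacle is the bookkeeping in this last step: verifying that no coordinate drives the directional derivative to $+\infty$ while at least one boundary coordinate drives it to $-\infty$. This rests on the ``barrier'' property $g_{ij}'(x)\to-\infty$ as $x\to0^{+}$ and $g_{ij}'(x)\to+\infty$ as $x\to1^{-}$ --- precisely the feature distinguishing $F_{MF}$ from $F_{BP}$, where the opposite sign of the $(1-\beta_{ij})\log(1-\beta_{ij})$ term makes the analogous derivative tend to $-\infty$ at $\beta_{ij}=1$ as well, so that boundary (perfect-matching) minima become possible. Finally, one should note that ${\cal B}_p^{(int)}$ as defined coincides with the relative interior of ${\cal B}_p$ exactly when every edge with $p_{ij}>0$ participates in some perfect matching; under the paper's standing non-degeneracy assumptions this holds, and otherwise the statement is to be read with ``${\cal B}_p^{(int)}$'' replaced by ``the relative interior of ${\cal B}_p$''.
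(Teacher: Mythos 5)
Your proof is correct, and it is precisely the ``direct examination'' that the paper invokes without writing out: per-coordinate strict convexity from $g_{ij}''(x)=1/x+1/(1-x)>0$, together with the barrier behaviour $g_{ij}'(x)\to-\infty$ as $x\to 0^{+}$ and $g_{ij}'(x)\to+\infty$ as $x\to 1^{-}$, which drives the directional derivative at any relative-boundary point to $-\infty$ and rules out boundary minimizers. Your closing caveat --- that ${\cal B}_p^{(int)}$ coincides with the relative interior of ${\cal B}_p$ only when every edge with $p_{ij}>0$ participates in some perfect matching of the support graph (otherwise ${\cal B}_p^{(int)}$ is empty by Birkhoff--von Neumann and the statement must be read relative to the affine hull) --- is a correct refinement of a point the paper glosses over.
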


Looking for the interior minimum of Eq.~(\ref{MF1}) one arrives at the following MF equations for the (only) stationary point of the MF FE functional
\begin{eqnarray}
&& \forall (i,j)\in{\cal E}:\quad\beta_{ij}=\frac{1}{1+v_i v^j/p_{ij}},
\label{MF2}
\end{eqnarray}
where $v_i$ and $v^j$ are Lagrangian multipliers enforcing the conditions, $\sum_j\beta_{ij}=1$ and $\sum_i\beta_{ij}=1$, respectively. The equations can also be rewritten as
\begin{eqnarray}
\forall (i,j):\quad \frac{\beta_{ij}}{1-\beta_{ij}}=\frac{p_{ij}}{v_i v^j},\label{MF3}
\end{eqnarray}
making comparison with the respective BP Eqs.~(\ref{BP2}) transparent. An efficient iterative algorithm for solving the MF equations (\ref{MF3}) is discussed in Appendix \ref{subsec:MF-iterat}.

Direct examination shows that (unlike in the BP case)  $\beta$ with a single element equal to unity or zero (when the respective $p$ element is nonzero) cannot be a solution of the MF Eqs.~(\ref{MF3}) over doubly stochastic $\beta$ -- fully consistently with the Proposition \ref{prop:MF} above. Moreover, $Z_{o-MF}(p)$,  defined as  $-\log$ of the minimum of the MF FE (\ref{MF1}), is simply equal to $Z_{MF}(p)$, defined as $-\log F_{MF}(\beta)$ evaluated at the (only) doubly stochastic solution of Eq.~(\ref{MF3}).

Note also that the MF functional (\ref{MF1}) cannot be considered as a variational proxy for the permanent,  bounding its value from below. This is because the substitution on the right-hand side of Eq.~(\ref{MF-subs}) does not respect the perfect matching constraints, assumed reinforced on the left-hand side of Eq.~(\ref{MF-subs}). In particular, the probability distribution function on the right-hand side of Eq.~(\ref{MF-subs}) allows two edges of the graph adjacent to the same vertex to be in the active, $\sigma_{ij}=1$, state simultaneously. However, this state is obviously prohibited by the original probability distribution,  on the left-hand side of Eq.~(\ref{MF-subs}) defined only over $n!$ of states corresponding to the perfect matchings.  As shown below in Section \ref{subsec:fractional-opt}, the fact that the MF ignores the perfect matching constraints results in the estimation $Z_{MF}(p)$ upper bounding $\perm (p)$, contrary to what a standard MF (not violating any original constraints) would do.

Finally and most importantly (for the MF discussion of this manuscript), the MF approximation for the permanent, $Z_{MF}$,  can be related to the permanent itself as follows:
\begin{theorem}[Permanent and MF]
\begin{eqnarray}
Z(p)=\perm(p)=Z_{MF}(p)\perm\left(\beta./(1-\beta)\right)\prod_{(i,j)\in{\cal E}}(1-\beta_{ij}),
\label{MF-exact}
\end{eqnarray}
where $\beta$ is the only interior minimum of (\ref{MF1}).
\end{theorem}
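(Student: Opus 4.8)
The plan is to mirror the direct derivation of Theorem~\ref{Prop:Perm+BP}: identity~(\ref{MF-exact}) has the same structure as~(\ref{LC}), with $\beta.*(1-\beta)$ replaced by $\beta./(1-\beta)$ and the extra product factor inverted. By Proposition~\ref{prop:MF} there is a unique minimizer $\beta\in{\cal B}_p^{(int)}$ of $F_{MF}$, and it solves the MF equations~(\ref{MF3}), which I would rewrite as $p_{ij}=v_i v^j\,\beta_{ij}/(1-\beta_{ij})$ for every $(i,j)\in{\cal E}$, with strictly positive multipliers $v_i,v^j$; interiority guarantees $\beta_{ij}\neq 0,1$, so every division and logarithm below is well defined. (Edges with $p_{ij}=0$ have $\beta_{ij}=0$: they contribute the factor $1$ to every product over $(i,j)$ and annihilate any permutation that uses them on either side of~(\ref{MF-exact}), so it is harmless to sum only over permutations supported on ${\cal E}$.)

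First I would insert this expression for $p_{ij}$ into $\perm(p)=\sum_{\sigma}\prod_i p_{i\sigma(i)}$, the sum running over permutations $i\mapsto\sigma(i)$. Since each such $\sigma$ hits every row index $i$ and every column index $j$ exactly once, the multiplier part factors out as $\prod_i v_i v^{\sigma(i)}=\prod_i v_i\,\prod_j v^j=:V$, a constant independent of $\sigma$; pulling it out gives $\perm(p)=V\sum_{\sigma}\prod_i \beta_{i\sigma(i)}/(1-\beta_{i\sigma(i)})=V\,\perm\left(\beta./(1-\beta)\right)$.

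Next I would evaluate $F_{MF}(\beta|p)$ on the same $\beta$. Substituting $p_{ij}=v_i v^j\beta_{ij}/(1-\beta_{ij})$ into~(\ref{MF1}) collapses $\log(\beta_{ij}/p_{ij})$ to $\log(1-\beta_{ij})-\log v_i-\log v^j$, so the bracketed term equals $\log(1-\beta_{ij})-\beta_{ij}\log v_i-\beta_{ij}\log v^j$. Summing over $(i,j)\in{\cal E}$ and using double stochasticity ($\sum_j\beta_{ij}=1$ for every $i$ and $\sum_i\beta_{ij}=1$ for every $j$) reduces the multiplier terms to $-\sum_i\log v_i-\sum_j\log v^j=-\log V$, leaving $F_{MF}(\beta|p)=\log\prod_{(i,j)}(1-\beta_{ij})-\log V$. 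Since $-\log Z_{MF}(p)=F_{MF}(\beta|p)$ by definition (with $\beta$ the interior minimum, so $Z_{MF}=Z_{o-MF}$), this reads $Z_{MF}(p)=V/\prod_{(i,j)}(1-\beta_{ij})$, i.e. $V=Z_{MF}(p)\prod_{(i,j)}(1-\beta_{ij})$. Substituting this back into the identity of the previous paragraph gives exactly~(\ref{MF-exact}).

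I do not expect a genuine obstacle. In contrast to the BP case there is no boundary or partially-resolved-solution subtlety to handle, because Proposition~\ref{prop:MF} already supplies a clean interior minimizer; the argument is then just the factorization of the Lagrange multipliers over a perfect matching together with the cancellation forced by double stochasticity. The only points requiring care are the bookkeeping of the zero entries of $p$ and keeping the convention $-\log Z_{MF}(p)=F_{MF}(\beta|p)$ straight throughout.
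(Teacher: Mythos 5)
Your proposal is correct and is essentially the paper's own proof (Appendix D.1): both hinge on writing $p_{ij}=v_iv^j\beta_{ij}/(1-\beta_{ij})$ at the interior minimizer, factoring $\prod_i v_i\prod_j v^j$ out of the permanent via the permutation structure, and using double stochasticity to identify that product with $Z_{MF}(p)\prod_{(i,j)}(1-\beta_{ij})$. The only cosmetic difference is that the paper obtains the multiplier identity by weighting $\log$ of Eqs.~(\ref{MF3}) with $\beta$ and summing, whereas you substitute the parametrization directly into $F_{MF}$; the algebra is identical.
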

The proof of this statement is given in Appendix \ref{subsec:MF-exact}.

\subsection{Fractional Approach}
\label{subsec:frac}

Similarity between the exact BP expression (\ref{LC}) and the exact MF expression (\ref{MF-exact}) suggests that the two formulas are the limiting instances of a more general relation. Indeed,  one finds that
\begin{theorem}[Fractional representation for permanent]
\label{theorem:m-exact}
For any non-negative $p$ and doubly stochastic $\beta$ which solves
\begin{eqnarray}
\forall (i,j):\quad \frac{\beta_{ij}}{(1-\beta_{ij})^\gamma}=\frac{p_{ij}}{w_i w^j},\label{m3}
\end{eqnarray}
for $\gamma\in [-1;1]$,
and if the solution found is in the interior of the domain, i.e., $\beta\in {\cal B}_p^{(int)}$,
the following relation holds
\begin{eqnarray}
\perm(p)=Z_f^{(\gamma)}(\beta|p) \perm\left(\frac{\beta.}{(1-\beta).^\gamma}\right)\prod_{(i,j)}(1-\beta_{ij})^\gamma,
\label{m-exact}
\end{eqnarray}
where
\begin{eqnarray}
F_f^{(\gamma)}(\beta|p)=-\log(Z_f^{(\gamma)}(\beta|p))=
\sum_{(i,j)}\left(\beta_{ij}\log(\beta_{ij}/p_{ij})+
\gamma(1-\beta_{ij})\log(1-\beta_{ij})\right);
\label{m1}
\end{eqnarray}
and $w_i$ and $w^j$ in Eq.~(\ref{m3}) are the Lagrangian multipliers enforcing the conditions, $\sum_j\beta_{ij}=1$ and $\sum_i\beta_{ij}=1$, respectively.
\end{theorem}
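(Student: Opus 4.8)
The plan is to mimic the structure of the loop-calculus derivation used for Theorem~\ref{Prop:Perm+BP} and the MF analogue in Eq.~(\ref{MF-exact}), treating the fractional parameter $\gamma$ as a bookkeeping device that interpolates between the two. First I would fix $\gamma\in[-1;1]$ and a non-negative $p$, assume a doubly stochastic $\beta\in{\cal B}_p^{(int)}$ solving Eq.~(\ref{m3}), and observe that Eq.~(\ref{m3}) is exactly the stationarity condition for the Lagrangian of $F_f^{(\gamma)}(\beta|p)$ in Eq.~(\ref{m1}) subject to the $2n$ double-stochasticity constraints: differentiating $\beta_{ij}\log(\beta_{ij}/p_{ij})+\gamma(1-\beta_{ij})\log(1-\beta_{ij})$ with respect to $\beta_{ij}$ gives $\log\beta_{ij}-\log p_{ij}-\gamma\log(1-\beta_{ij})$ plus the constant $1-\gamma$, and setting this equal to the multiplier combination $\log w_i+\log w^j$ yields precisely $\beta_{ij}/(1-\beta_{ij})^\gamma = p_{ij}/(w_iw^j)$ after absorbing the additive constant into the $w$'s. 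This confirms that $\beta$ is the candidate interior optimum and that $F_f^{(\gamma)}$ and $Z_f^{(\gamma)}$ are well defined there.

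The core of the argument is the exact identity (\ref{m-exact}) itself, which I would prove by a direct expansion of $\perm(p)$ over perfect matchings. Starting from $\perm(p)=\sum_\sigma\prod_{(i,j)}p_{ij}^{\sigma_{ij}}$, substitute $p_{ij}=w_iw^j\,\beta_{ij}(1-\beta_{ij})^{-\gamma}$ from Eq.~(\ref{m3}). Because every term in the sum is over a genuine perfect matching, each row index $i$ and each column index $j$ is used exactly once, so $\prod_{(i,j)}(w_iw^j)^{\sigma_{ij}}=\prod_i w_i\prod_j w^j$ is a $\sigma$-independent prefactor that pulls out of the sum. Next I would write $(1-\beta_{ij})^{-\gamma\sigma_{ij}} = (1-\beta_{ij})^{-\gamma}\cdot(1-\beta_{ij})^{\gamma(1-\sigma_{ij})}$, so that $\prod_{(i,j)}(1-\beta_{ij})^{-\gamma\sigma_{ij}} = \big(\prod_{(i,j)}(1-\beta_{ij})^{-\gamma}\big)\prod_{(i,j)}(1-\beta_{ij})^{\gamma(1-\sigma_{ij})}$; the first factor is again $\sigma$-independent and produces the $\prod_{(i,j)}(1-\beta_{ij})^\gamma$ appearing (inverted) in (\ref{m-exact}). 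What remains inside the sum is $\sum_\sigma\prod_{(i,j):\sigma_{ij}=1}\beta_{ij}\prod_{(i,j):\sigma_{ij}=0}(1-\beta_{ij})^{\gamma}$, which is by definition $\perm(\beta./(1-\beta).^\gamma)$ after factoring $\prod_{(i,j)}(1-\beta_{ij})^\gamma$ out of the whole permanent — i.e.\ $\perm(\beta.*(1-\beta).^{-\gamma})\prod_{(i,j)}(1-\beta_{ij})^\gamma$, matching the $\gamma=-1$ and $\gamma=1$ cases as checks. Finally I would identify the collected $\sigma$-independent prefactor $\prod_i w_i\prod_j w^j\cdot\prod_{(i,j)}(1-\beta_{ij})^{-\gamma}$ with $Z_f^{(\gamma)}(\beta|p)$ by plugging Eq.~(\ref{m3}) back into the defining sum (\ref{m1}): $\sum_{(i,j)}\beta_{ij}\log(\beta_{ij}/p_{ij}) = \sum_{(i,j)}\beta_{ij}(\gamma\log(1-\beta_{ij}) - \log(w_iw^j))$, use double stochasticity to collapse $\sum_{(i,j)}\beta_{ij}\log(w_iw^j)=\sum_i\log w_i+\sum_j\log w^j$, and combine with the $\gamma(1-\beta_{ij})\log(1-\beta_{ij})$ term to get $F_f^{(\gamma)} = -\sum_i\log w_i-\sum_j\log w^j + \gamma\sum_{(i,j)}\log(1-\beta_{ij})$, so that $Z_f^{(\gamma)}=\prod_iw_i\prod_jw^j\prod_{(i,j)}(1-\beta_{ij})^{-\gamma}$, exactly the prefactor. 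Rearranging gives (\ref{m-exact}).

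I expect the main obstacle to be purely one of careful bookkeeping rather than any deep difficulty: keeping the exponents of $(1-\beta_{ij})$ straight between the ``$\sigma_{ij}=1$'' and ``$\sigma_{ij}=0$'' edges, correctly distributing the $-\gamma\sigma_{ij}$ power into a constant piece and a complementary piece, and making sure the constant $1-\gamma$ generated when differentiating (\ref{m1}) is consistently absorbed into the redefinition of the multipliers $w_i,w^j$ (or else tracked explicitly) so that the final $Z_f^{(\gamma)}$ identity is an equality and not merely a proportionality. A secondary point worth stating is that the hypothesis $\beta\in{\cal B}_p^{(int)}$ is what guarantees all the logarithms and the powers $(1-\beta_{ij})^{\gamma}$ are finite and well defined (no $0^\gamma$ or $\log 0$ issues when $\gamma<0$), and — in combination with the convexity-type arguments already available for the $\gamma=-1$ (Proposition~\ref{prop:BP-convexity}) and $\gamma=1$ (Proposition~\ref{prop:MF}) endpoints — that the stationary point picked out by (\ref{m3}) is the relevant one; but since the theorem as stated only asserts the algebraic identity (\ref{m-exact}) conditioned on such a $\beta$ existing, the expansion above suffices and I would relegate the existence/optimality discussion to the surrounding text.
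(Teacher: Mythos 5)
Your proposal is correct and follows essentially the same route as the paper's proof in Appendix \ref{subsec:fractional-exact}: expanding $\perm(p)$ over perfect matchings after substituting $p_{ij}=w_iw^j\beta_{ij}(1-\beta_{ij})^{-\gamma}$ is precisely the paper's step of ``applying the permanent to both sides'' of Eq.~(\ref{m3}) to get $\perm(p)=\perm(\beta./(1-\beta).^\gamma)\prod_iw_i\prod_jw^j$, and your identification of the prefactor via the $\beta$-weighted logarithm of Eq.~(\ref{m3}) together with double stochasticity reproduces the paper's Eq.~(\ref{w-eqs}). The only cosmetic difference is your detour through the exponent split $-\gamma\sigma_{ij}=-\gamma+\gamma(1-\sigma_{ij})$, which cancels itself out and is not needed.
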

The proof of Eq.~(\ref{m1}) is given in Appendix \ref{subsec:fractional-exact}.
An iterative heuristic algorithm solving Eqs.~(\ref{m3}) efficiently is described in Appendix \ref{subsec:fractional-iterat}.

Following the general GM logic and terminology introduced in \citet{Wiegerinck_Heskes_2003,05YFW},
we call $F_f^{(\gamma)}(\beta|p)$ the fractional FE. Obviously the two extremes of $\gamma=-1$ and $\gamma=1$ correspond to BP and MF limits, respectively. Many features of the BP and MF approaches extend naturally to the fractional case.  In particular,  one arrives at the following statement.
\begin{proposition}[Fractional Convexity, Theorem 60 of \citet{11Von}]
\label{prop:fract-cov}
The fractional functional defined in Eq.~(\ref{m1}), $F_f^{(\gamma)}(\beta|p)$,
is a convex functional, convex over $\beta\in{\cal B}_p$ for any $\gamma\in[-1;1]$ and any non-negative $p$.
\end{proposition}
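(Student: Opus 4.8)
The plan is to reduce the claim to a statement about the convexity of the map $\beta \mapsto F_f^{(\gamma)}(\beta|p)$ on the affine slice cut out by the double-stochasticity constraints, and then to handle the two pieces of the functional separately. The linear-in-$\beta$ self-energy term $\sum_{(i,j)} \beta_{ij}\log(\beta_{ij}/p_{ij}) - \sum_{(i,j)} \beta_{ij}\log p_{ij}$... more carefully, write $F_f^{(\gamma)} = S(\beta) + \gamma\, T(\beta) + L(\beta)$, where $S(\beta) = \sum_{(i,j)} \beta_{ij}\log\beta_{ij}$ is the (negative) edge entropy, $T(\beta) = \sum_{(i,j)} (1-\beta_{ij})\log(1-\beta_{ij})$, and $L(\beta) = -\sum_{(i,j)} \beta_{ij}\log p_{ij}$ is linear, hence trivially convex. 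The function $S$ is a sum of univariate convex functions $t\mapsto t\log t$, so $S$ is convex on all of $\mathcal{B}_p$ with no need for the constraints. Thus the whole difficulty is concentrated in the term $S(\beta) + \gamma\, T(\beta)$ for $\gamma$ ranging over $[-1,1]$.

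First I would observe that for $\gamma \in [0,1]$ the term $\gamma\, T$ is itself a sum of the univariate convex functions $t\mapsto \gamma(1-t)\log(1-t)$, so $F_f^{(\gamma)}$ is convex for free — indeed strictly convex in each edge variable — and this already covers the MF end $\gamma=1$ and everything up to $\gamma=0$. The genuinely nontrivial range is $\gamma \in [-1,0)$, where the per-edge function $g_\gamma(t) = t\log t + \gamma(1-t)\log(1-t)$ has $g_\gamma''(t) = \tfrac1t + \tfrac{\gamma}{1-t}$, which is negative for $t$ close to $1$ whenever $\gamma<0$ (this is exactly the non-convexity already noted in the text after Proposition~\ref{prop:BP-convexity} for $\gamma=-1$). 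So on $[-1,0)$ one cannot argue edgewise; convexity must be recovered globally, using the double-stochasticity constraints, precisely as in the $\gamma=-1$ case of Proposition~\ref{prop:BP-convexity}.

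The cleanest route here is to interpolate: since the statement is already known at the endpoint $\gamma=-1$ (this is Proposition~\ref{prop:BP-convexity}, \citet{11Von}) and trivially true at $\gamma=0$, and since $F_f^{(\gamma)} = F_f^{(0)} + \tfrac{1+\gamma}{2}\big(2\,T(\beta)\big) + \tfrac{-\gamma}{1}\cdots$, I would write any $\gamma\in[-1,0]$ as a convex combination $\gamma = \lambda(-1) + (1-\lambda)\cdot 0$ with $\lambda = -\gamma \in [0,1]$, and then note the exact identity
\begin{equation}
F_f^{(\gamma)}(\beta|p) = \lambda\, F_f^{(-1)}(\beta|p) + (1-\lambda)\, F_f^{(0)}(\beta|p),
\label{eq:frac-interp}
\end{equation}
which holds because $F_f^{(\gamma)}$ is affine in $\gamma$ for fixed $\beta$ (only the $\gamma T(\beta)$ term depends on $\gamma$, and it does so linearly). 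A nonnegative combination of convex functionals on the convex set $\mathcal{B}_p$ is convex, and $F_f^{(-1)}$ is convex by \citet{11Von} while $F_f^{(0)}$ is convex (even strictly) by the edgewise argument above; hence $F_f^{(\gamma)}$ is convex for every $\gamma\in[-1,0]$. Combined with the edgewise convexity on $[0,1]$, this covers the full range $\gamma\in[-1,1]$ and finishes the proof.

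The main obstacle is really just the bookkeeping: making sure the interpolation identity \eqref{eq:frac-interp} is set up so that the coefficients are genuinely nonnegative (which forces the decomposition into the two endpoints $-1$ and $0$, not $-1$ and $1$), and invoking the non-trivial endpoint result of \citet{11Von} as a black box rather than re-deriving it. If one instead wanted a self-contained argument on $[-1,0)$ one would have to reproduce von~Eitzen/Vontobel's Hessian computation — writing the Hessian of $F_f^{(\gamma)}$ restricted to the tangent space of the constraint set and showing positive semidefiniteness via a diagonally-dominant-type or Schur-complement estimate that exploits $\sum_j \beta_{ij}=1$ and $\sum_i \beta_{ij}=1$ — but given that Proposition~\ref{prop:BP-convexity} is already available, the interpolation argument is strictly easier and is the one I would present.
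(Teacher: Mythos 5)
Your argument is correct, and it is worth noting that the paper itself supplies no proof of this proposition at all: it simply cites it as Theorem~60 of \citet{11Von} and remarks that it ``generalizes'' Proposition~\ref{prop:BP-convexity}. Your interpolation argument makes that remark precise and turns it into an actual derivation. The key observation --- that $F_f^{(\gamma)}(\beta|p)$ is affine in $\gamma$ for fixed $\beta$, so that for $\gamma\in[-1,0]$ one has the exact identity $F_f^{(\gamma)} = (-\gamma)\,F_f^{(-1)} + (1+\gamma)\,F_f^{(0)}$ with nonnegative coefficients --- is sound; a quick check confirms it, since only the term $\gamma\sum_{(i,j)}(1-\beta_{ij})\log(1-\beta_{ij})$ carries the $\gamma$-dependence. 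Combined with the edgewise convexity of $t\mapsto t\log t + \gamma(1-t)\log(1-t)$ for $\gamma\ge 0$ (second derivative $1/t+\gamma/(1-t)>0$ on $(0,1)$), which disposes of $\gamma\in[0,1]$, and with Proposition~\ref{prop:BP-convexity} as the nontrivial $\gamma=-1$ endpoint, this covers the whole range $[-1,1]$. The only thing you give up relative to Vontobel's Theorem~60 is self-containedness: all of the genuine difficulty (the failure of edgewise convexity near $\beta_{ij}=1$ for $\gamma<0$, repaired only by the double-stochasticity constraints) is packed into the $\gamma=-1$ black box, which is itself a Vontobel result. That is a perfectly legitimate trade given that the paper states Proposition~\ref{prop:BP-convexity} separately and earlier. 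Two small cosmetic points: the first paragraph contains a false start in the description of the self-energy term, and the displayed ``identity'' fragment $F_f^{(\gamma)} = F_f^{(0)} + \tfrac{1+\gamma}{2}(2T(\beta)) + \cdots$ is garbled before you state the correct one; both should be cleaned up, but neither affects the validity of the final argument.
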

Obviously,  this statement generalizes Proposition \ref{prop:BP-convexity}. Also, the following statement becomes a direct consequence of Proposition \ref{prop:fract-cov}:
\begin{corollary}[Uniqueness of the interior fractional minimum]
If the minimum of $F_f^{(\gamma)}(\beta|p)$ is realized at $\beta\in {\cal B}_p^{(int)}$, it is unique.
\end{corollary}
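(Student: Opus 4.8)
The plan is to run the standard convex-analysis argument, whose only genuine content is the strict positive-definiteness, at an interior point, of the Hessian of $F_f^{(\gamma)}$ restricted to the affine hull of the $\beta$-polytope. First, by Proposition \ref{prop:fract-cov} the functional $F_f^{(\gamma)}(\cdot|p)$ is convex on the convex set ${\cal B}_p$, so its set of minimizers $M$ is convex; the task is to show that if $M$ meets the relative interior ${\cal B}_p^{(int)}$, then $M$ is a single point.

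I would work in the free coordinates $(i,j)\in{\cal E}$ (those with $p_{ij}>0$); the remaining coordinates are pinned to $\beta_{ij}=0$ and contribute nothing. On ${\cal B}_p^{(int)}$ every free $\beta_{ij}$ lies in $(0,1)$, so $F_f^{(\gamma)}$ is $C^{\infty}$ there, with diagonal Hessian whose $(i,j)$-th entry is $1/\beta_{ij}+\gamma/(1-\beta_{ij})$. The tangent space of ${\cal B}_p$ is $T=\{\delta\in\mathbb{R}^{{\cal E}}:\ \forall i\ \sum_{j}\delta_{ij}=0,\ \forall j\ \sum_{i}\delta_{ij}=0\}$, and Proposition \ref{prop:fract-cov} is precisely the statement that $\sum_{(i,j)}\left(1/\beta_{ij}+\gamma/(1-\beta_{ij})\right)\delta_{ij}^2\ge 0$ for all $\delta\in T$ and all interior $\beta$.

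Next, suppose for contradiction that $M$ contains an interior point $\beta^{*}$ and a second point $\beta'\neq\beta^{*}$. Since $\beta^{*}\in{\cal B}_p^{(int)}$, the segment $\beta(t)=\beta^{*}+t(\beta'-\beta^{*})$ stays in ${\cal B}_p$ for $t$ in an open interval containing $[0,1]$, and $g(t)=F_f^{(\gamma)}(\beta(t)|p)$ is convex in $t$ and constant (equal to $\min F_f^{(\gamma)}$) on $[0,1]$; hence $g''(0)=0$, i.e.\ $\delta^{\top}H(\beta^{*})\delta=0$ with $\delta=\beta'-\beta^{*}\in T$, $\delta\neq 0$. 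It therefore suffices to show that the quadratic form $\sum_{(i,j)}\left(1/\beta_{ij}+\gamma/(1-\beta_{ij})\right)\delta_{ij}^2$ is strictly positive on $T\setminus\{0\}$ at any interior $\beta$. This is the crux, and it is here that the standing nondegeneracy convention accompanying Corollary \ref{cor:BP-uniqness} ($n>2$ and $p$ with more than two nonzero entries in every row and column) enters: the argument is the fractional analogue of the one behind that corollary — a nonzero $\delta\in T$ on which the form vanishes would force, through the zero-row/column-sum constraints, a rigid $2\times 2$-type alternating pattern on a submatrix, which the connectivity hypothesis excludes; equivalently one invokes the strengthening of Theorem~60 of \citet{11Von} to strict convexity on ${\cal B}_p^{(int)}$ away from that configuration. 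Granting this, $\delta^{\top}H(\beta^{*})\delta>0$ contradicts the previous display, so $M=\{\beta^{*}\}$.

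The main obstacle is exactly this last step: plain convexity of $F_f^{(\gamma)}$ does not by itself forbid a flat valley passing through an interior point — indeed, for $\gamma=-1$ and $n=2$ with all entries of $p$ equal the Hessian is identically zero on $T$ and a whole segment of minimizers appears, as discussed after Proposition \ref{prop:BP-convexity}. Thus the uniqueness claim is meaningful only under the same nondegeneracy convention used in the Bethe case, and essentially all of the work is in showing that the diagonal Hessian has no kernel inside $T$ at interior $\beta$ under that convention; the rest is the routine convex-analysis reduction carried out above.
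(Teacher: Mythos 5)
Your architecture is right, and your diagnosis is sharper than the paper's own treatment: the paper derives this corollary in one line as ``a direct consequence'' of Proposition \ref{prop:fract-cov}, which, as you correctly observe, is not literally sufficient --- convexity only makes the set of minimizers convex, and the paper's own $n=2$, $\gamma=-1$ example exhibits an entire segment of interior minimizers. Your reduction of uniqueness to strict positivity of the constrained Hessian form $\sum_{(i,j)}\bigl(1/\beta_{ij}+\gamma/(1-\beta_{ij})\bigr)\delta_{ij}^2$ on the tangent space $T$ of the doubly stochastic polytope is the correct move. The gap is that you then merely assert the crux: the claim that a nonzero kernel direction in $T$ forces a ``rigid $2\times 2$-type alternating pattern'' excluded by the connectivity hypothesis, or equivalently an appeal to an unstated strengthening of Theorem 60 of \citet{11Von} to strict convexity, is not an argument --- it is the statement to be proved.

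The step can in fact be closed cleanly, and without any nondegeneracy hypothesis, for every $\gamma\in\,]-1;1]$. Write $\gamma=-1+(1+\gamma)$, so that
\[
\frac{1}{\beta_{ij}}+\frac{\gamma}{1-\beta_{ij}}
=\Bigl(\frac{1}{\beta_{ij}}-\frac{1}{1-\beta_{ij}}\Bigr)+\frac{1+\gamma}{1-\beta_{ij}}.
\]
The quadratic form on $T$ then splits into the Bethe ($\gamma=-1$) form, which is nonnegative on $T$ at every interior $\beta$ by Proposition \ref{prop:BP-convexity}, plus $(1+\gamma)\sum_{(i,j)}\delta_{ij}^2/(1-\beta_{ij})$, which is strictly positive for $\delta\neq 0$ whenever $\gamma>-1$ (note that any $\delta=\beta'-\beta^{*}$ with $\beta',\beta^{*}\in{\cal B}_p$ is supported on the free coordinates, where $\beta^{*}_{ij}\in(0,1)$). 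Hence the form is strictly positive on $T\setminus\{0\}$, your contradiction goes through, and uniqueness holds for all $\gamma\in\,]-1;1]$ with no condition on $p$ beyond interiority of the minimizer. Only at $\gamma=-1$ does the statement genuinely require the standing convention of Corollary \ref{cor:BP-uniqness} ($n>2$ and more than two nonzero entries in every row and column), where it reduces to that corollary. In short: right reduction, one unproved assertion at the decisive point, and a one-line decomposition of the Hessian that would have finished the job.
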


\subsection{Minimal Fractional Solution}
\label{subsubsec:opt-mix}

It is clear that at $\gamma>0$ the fractional Eqs.~(\ref{m3}) cannot have a perfect matching solution, thus suggesting that at least in this case the solution, if exists, is in the interior, $\beta\in {\cal B}_p^{(int)}$. On the other hand general existence (for any $p$) of such a solution follows immediately from the existence in a special case, for example of $p$ with all elements equal,  and then from the continuity of the Eqs.~(\ref{m3}) solution with respect to $p$.

The case of $\gamma \in ]-1;0]$ is a bit trickier.  In this case, Eqs.~(\ref{m3}) formally do allow a perfect matching solution.
However, for all but degenerate $p$,  i.e. one reducible by permutations to a diagonal matrix, the perfect matching solution is an isolated point. Indeed,  let us consider a vicinity of a degenerate $p$. If one picks (without loss of generality) a diagonal, $p^{(0)}=(a_i\delta_{ij}| (i,j)\in{\cal E})$, and consider $p=p^{(0)}+\delta$,  where $\delta$ is a small positive matrix, then one observes that Eqs.~(\ref{m3}) do allow a solution, $\beta=1+\epsilon b$,  where $\epsilon$ is a small positive scalar
and $b=(b_{ij}|(i,j)\in{\cal E};
\forall i\in {\cal V}_1:\ \sum_j b_{ij}=0;\ \forall j\in {\cal V}_2:\ \sum_i b_{ij}=0)$ is a matrix with $O(1)$ elements, if the following scaling relation holds, $|\delta|\sim \epsilon^{1+\gamma}$. Moreover, one also finds that a solution $\beta$ is $\epsilon$-close to a perfect matching only if $p$ is $\epsilon^{1+\gamma}$-close to a diagonal matrix. Now we apply the same continuity and existence arguments, as used above in the $\gamma>0$ case arguments, to find out that the following statement holds.
\begin{proposition}[Fractional Minima]
\label{prop:frac_min}
The minimal fractional solution, defined by
\begin{eqnarray}
-\log(Z_{o-f}^{(\gamma)}(p))=F_{o-f}^{(\gamma)}(p)=
\min_{\beta\in {\cal B}_p} F_{f}^{(\gamma)}(\beta|p),
\label{o-m}
\end{eqnarray}
can only be achieved for $\gamma>-1$ and general (non-degenerate) $p$ at $\beta\in {\cal B}_p^{(int)}$.
\end{proposition}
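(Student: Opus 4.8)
The plan is to show directly that, for $\gamma>-1$ and non-degenerate $p$, no point on the boundary of ${\cal B}_p$ can minimize $F_f^{(\gamma)}$, by producing from any boundary candidate a feasible direction along which the one-sided directional derivative of $F_f^{(\gamma)}$ equals $-\infty$. First I would record that, with the convention $0\log 0=0$, every edge term $\beta_{ij}\log(\beta_{ij}/p_{ij})+\gamma(1-\beta_{ij})\log(1-\beta_{ij})$ extends continuously to $\beta_{ij}\in[0,1]$, so $F_f^{(\gamma)}$ is continuous on the compact polytope ${\cal B}_p$ and its minimum is attained; combined with Proposition \ref{prop:fract-cov} the minimizer is then unique once it is shown to lie in ${\cal B}_p^{(int)}$. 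Here non-degeneracy of $p$ is used precisely to guarantee that ${\cal B}_p^{(int)}$, the relative interior, is nonempty; fix a reference point $\bar\beta\in{\cal B}_p^{(int)}$, so $\bar\beta_{ij}\in(0,1)$ whenever $p_{ij}\neq0$.

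Now suppose a minimizer $\beta^*$ were on the boundary, i.e. $\beta^*_{i_0j_0}\in\{0,1\}$ for some $(i_0,j_0)$ with $p_{i_0j_0}\neq0$. Put $d=\bar\beta-\beta^*$ and $g(\lambda)=F_f^{(\gamma)}(\beta^*+\lambda d)$ for $\lambda\in[0,1]$ (feasible by convexity of ${\cal B}_p$). Differentiating term by term and isolating the part that blows up as $\lambda\downarrow0$, one finds $g'(\lambda)=C\log\lambda+O(1)$, where the $O(1)$ absorbs the smooth contributions of the interior entries together with additive constants, and
\[
C\;=\;\sum_{(i,j):\,\beta^*_{ij}=0} d_{ij}\;+\;\gamma\sum_{(i,j):\,\beta^*_{ij}=1}|d_{ij}|,
\]
the first sum arising from the $\beta_{ij}\log\beta_{ij}$ pieces at entries leaving $0$ and the second from the $\gamma(1-\beta_{ij})\log(1-\beta_{ij})$ pieces at entries leaving $1$. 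If $C>0$, then $g'(\lambda)\to-\infty$, hence $g(\lambda)<g(0)=F_f^{(\gamma)}(\beta^*)$ for small $\lambda>0$, contradicting minimality; so the minimizer must lie in ${\cal B}_p^{(int)}$.

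The crux, and the step I expect to be the main obstacle, is establishing $C>0$ for $\gamma\in(-1,0)$, where the two sums in $C$ pull in opposite directions. I would use a compensation identity forced by double stochasticity: if $\beta^*_{i_1j_1}=1$ then the remainder of row $i_1$ and column $j_1$ of $\beta^*$ vanishes, so $\sum_{j\neq j_1}d_{i_1j}=-d_{i_1j_1}=|d_{i_1j_1}|$ and each of these $d_{i_1j}\ge0$ sits at a zero-entry of $\beta^*$; since distinct one-entries lie in distinct rows, summing over all one-entries gives $\sum_{\beta^*_{ij}=1}|d_{ij}|\le\sum_{\beta^*_{ij}=0}d_{ij}$. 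Hence $C\ge(1+\gamma)\sum_{\beta^*_{ij}=1}|d_{ij}|$ whenever a one-entry is present, while $C\ge d_{i_0j_0}=\bar\beta_{i_0j_0}>0$ when the boundary obstruction is a zero-entry; either way $\gamma>-1$ forces $C>0$. This also makes transparent why the statement is sharp at $\gamma=-1$ — the bound collapses to $C\ge0$, consistent with the known boundary minima at perfect matchings — and why degeneracy must be excluded: for $p$ reducible to a diagonal, ${\cal B}_p$ is a single vertex and no interior reference $\bar\beta$ exists. As an alternative consistent with the homotopy reasoning preceding the statement, one can instead start from a $p$ (e.g. with all entries equal) whose unique minimizer is manifestly interior and deform continuously to the given non-degenerate $p$, using convexity to keep the minimizer unique and continuous and the isolated-point property of perfect-matching solutions of (\ref{m3}) for $\gamma>-1$ to forbid it from reaching the boundary; the directional-derivative estimate above is exactly what turns ``cannot reach the boundary'' into a proof.
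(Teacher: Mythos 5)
Your proof is correct, but it takes a genuinely different route from the paper's. The paper argues by continuity/homotopy: for $\gamma>0$ it notes that Eqs.~(\ref{m3}) admit no perfect-matching solution; for $\gamma\in\,]-1;0]$ it performs a perturbative scaling analysis around a diagonal $p^{(0)}$, showing that a solution $\beta$ can be $\epsilon$-close to a perfect matching only if $p$ is $\epsilon^{1+\gamma}$-close to a diagonal matrix, and then invokes existence for a reference $p$ with a manifestly interior minimizer together with continuous deformation to conclude that for non-degenerate $p$ the minimizer cannot sit on the boundary. You instead give a direct first-order variational argument: along the segment from a boundary candidate $\beta^*$ toward an interior reference point the one-sided derivative behaves like $C\log\lambda$, and the double-stochasticity compensation $\sum_{\beta^*_{ij}=1}|d_{ij}|\le\sum_{\beta^*_{ij}=0}d_{ij}$ gives $C\ge(1+\gamma)\sum_{\beta^*_{ij}=1}|d_{ij}|>0$ when a one-entry obstructs (and $C\ge\bar\beta_{i_0j_0}>0$ when only zero-entries do), so no boundary point can be a minimum; combined with compactness and Proposition~\ref{prop:fract-cov} this yields the claim. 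Your version is more self-contained and arguably more rigorous: it treats partially resolved boundary points on the same footing as fully resolved ones (which the paper's discussion, centered on perfect-matching solutions, leaves implicit), it makes the sharpness at $\gamma=-1$ transparent (the bound collapses to $C\ge0$, consistent with BP minima at perfect matchings), and it isolates exactly what is needed from non-degeneracy, namely that the relative interior ${\cal B}_p^{(int)}$ be nonempty --- a slightly more general reading than the paper's ``not reducible to a diagonal matrix.'' What it does not reproduce is the quantitative by-product of the paper's scaling analysis, namely the rate $\epsilon\sim|\delta|^{1/(1+\gamma)}$ at which the minimizer approaches the boundary as $p$ degenerates.
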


Then the following statement follows.
\begin{proposition}[$\gamma$-monotonicity and continuity]
\label{prop:gamma_mon}
For any non-negative $p$, $Z_{o-f}^{\gamma}(p)$ is a monotonically increasing and continuous function of $\gamma$ in $[-1;1]$.
\end{proposition}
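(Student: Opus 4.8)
The plan is to argue entirely at the level of the value function
$\gamma\mapsto F_{o-f}^{(\gamma)}(p)=\min_{\beta\in{\cal B}_p}F_f^{(\gamma)}(\beta|p)$
and then read off both claims for $Z_{o-f}^{(\gamma)}(p)=\exp\bigl(-F_{o-f}^{(\gamma)}(p)\bigr)$, never tracking the minimizer itself. Two structural observations set this up. First, ${\cal B}_p$ is a nonempty (because $\perm(p)>0$) compact subset of $[0;1]^{|{\cal E}|}$, and $F_f^{(\gamma)}(\cdot|p)$ is finite and continuous on it — the only candidate boundary issues, $\beta_{ij}\to 0$ with $p_{ij}>0$ and $\beta_{ij}\to 1$, are removable since $x\log x\to 0$ as $x\to 0^+$ — so the minimum in (\ref{o-m}) is attained. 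Second, writing $F_f^{(\gamma)}(\beta|p)=h(\beta)+\gamma\,g(\beta)$ with $h(\beta)=\sum_{(i,j)}\beta_{ij}\log(\beta_{ij}/p_{ij})$ and $g(\beta)=\sum_{(i,j)}(1-\beta_{ij})\log(1-\beta_{ij})$, the $\gamma$-coefficient obeys $-C\le g(\beta)\le 0$ for every $\beta\in{\cal B}_p$, where $C:=|{\cal E}|/e$, simply because $x\log x\in[-1/e;0]$ on $[0;1]$.

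Monotonicity follows at once: for each fixed $\beta$, $\gamma\mapsto F_f^{(\gamma)}(\beta|p)$ is affine with non-positive slope $g(\beta)$, hence non-increasing, and taking the pointwise minimum over $\beta\in{\cal B}_p$ preserves this, so $-1\le\gamma_1<\gamma_2\le 1$ forces $F_{o-f}^{(\gamma_2)}(p)\le F_{o-f}^{(\gamma_1)}(p)$ and therefore $Z_{o-f}^{(\gamma_1)}(p)\le Z_{o-f}^{(\gamma_2)}(p)$. I would then upgrade this to a strict inequality for $\gamma>-1$ whenever $p$ is non-degenerate: by Proposition \ref{prop:frac_min} the minimizer $\beta^{*}$ of $F_f^{(\gamma_1)}(\cdot|p)$ lies in ${\cal B}_p^{(int)}$, so every relevant entry $\beta^{*}_{ij}\in(0;1)$ and hence $g(\beta^{*})<0$ strictly, giving $F_{o-f}^{(\gamma_2)}(p)\le F_f^{(\gamma_2)}(\beta^{*}|p)=h(\beta^{*})+\gamma_2 g(\beta^{*})<h(\beta^{*})+\gamma_1 g(\beta^{*})=F_{o-f}^{(\gamma_1)}(p)$. (When $p$ is degenerate, reducible by permutations to a diagonal matrix, ${\cal B}_p$ is a single permutation matrix, $g\equiv 0$, and $Z_{o-f}^{(\gamma)}(p)\equiv\perm(p)$; this is why the statement is phrased non-strictly.)

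Continuity is easy here precisely because $\gamma$ enters $F_f^{(\gamma)}$ only through the single, uniformly bounded term $\gamma g(\beta)$: for all $\beta\in{\cal B}_p$ and all $\gamma,\gamma'\in[-1;1]$,
\[
\bigl|F_f^{(\gamma)}(\beta|p)-F_f^{(\gamma')}(\beta|p)\bigr|=|\gamma-\gamma'|\,|g(\beta)|\le C\,|\gamma-\gamma'| .
\]
Since this estimate is uniform in $\beta$ and the minima are attained, comparing the minimizer at $\gamma$ with the minimizer at $\gamma'$ yields $\bigl|F_{o-f}^{(\gamma)}(p)-F_{o-f}^{(\gamma')}(p)\bigr|\le C\,|\gamma-\gamma'|$, so $\gamma\mapsto F_{o-f}^{(\gamma)}(p)$ is Lipschitz on $[-1;1]$. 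Composing with the continuous map $x\mapsto e^{-x}$ gives continuity (indeed local Lipschitzness) of $Z_{o-f}^{(\gamma)}(p)$ on all of $[-1;1]$, endpoints included.

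The one point that looks delicate is that, as $\gamma$ varies near $-1$, the minimizer may switch between the interior of the $\beta$-polytope and a perfect-matching vertex (cf.\ Propositions \ref{prop:part-BP} and \ref{prop:frac_min}), which could superficially suggest a jump in $F_{o-f}^{(\gamma)}(p)$; I expect this to be the conceptual obstacle, and the way around it is exactly to reason about the value function — using compactness of ${\cal B}_p$ and the uniform Lipschitz estimate above — rather than about where the minimum is attained. Only the strictness of the monotonicity genuinely requires the minimizer to stay off the vertices, and that is supplied by Proposition \ref{prop:frac_min} for $\gamma>-1$.
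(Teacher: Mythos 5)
Your proof is correct. The monotonicity half is essentially the paper's own argument: you both observe that the $\gamma$-coefficient $g(\beta)=\sum_{(i,j)}(1-\beta_{ij})\log(1-\beta_{ij})$ is non-positive on ${\cal B}_p$, so $F_f^{(\gamma)}(\beta|p)$ is non-increasing in $\gamma$ pointwise in $\beta$, and the minimum inherits this. (You are in fact more careful than the paper here: the paper asserts the strict inequality $g(\beta)<0$ for \emph{any} doubly stochastic $\beta$, which fails at a permutation-matrix vertex; your separation into a non-strict general statement plus a strict version for interior minimizers via Proposition \ref{prop:frac_min} is the cleaner bookkeeping.) Where you genuinely diverge is continuity. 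The paper deduces it from Proposition \ref{prop:frac_min} together with joint continuity of $F_f^{(\gamma)}(\beta|p)$ in $(\gamma,\beta)$ — i.e., it tracks the minimizer and needs it to sit in ${\cal B}_p^{(int)}$, which is why the paper's proof only claims continuity on $]-1;1]$ even though the proposition is stated on $[-1;1]$. Your argument instead works purely at the level of the value function: the uniform bound $|g(\beta)|\le|{\cal E}|/e$ on the compact set ${\cal B}_p$ makes $\gamma\mapsto F_{o-f}^{(\gamma)}(p)$ Lipschitz as an infimum of uniformly Lipschitz affine functions, with no reference to where or whether the minimizer moves to the boundary. This buys you two things: it closes the gap at the BP endpoint $\gamma=-1$ (and for degenerate $p$), where the minimizer can be a vertex and the paper's interiority argument does not apply, and it sidesteps the delicate point you correctly flag — a possible switch of the minimizer between interior and vertex — which the paper only addresses with a parenthetical heuristic. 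Your version is the one I would keep.
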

\begin{proof}
Observe that for any non-negative $p$ and doubly stochastic $\beta$, $\sum_{(i,j)\in{\cal E}}(1-\beta_{ij})\log(1-\beta_{ij})<0$, so for any $\gamma_{1,2}\in[-1;1]$ such that $\gamma_1>\gamma_2$,  $F_f^{(\gamma_1)}(\beta|p)\leq F_f^{(\gamma_2)}(\beta|p)$. Then according to the definition of $F_{o-f}^{(\gamma)}(p)$, $F_{o-f}^{(\gamma_1)}(p)\leq F_f^{(\gamma_1)}(\beta|p)\leq F_f^{(\gamma_2)}(\beta|p)$, for any doubly stochastic $\beta$, in particular for $\beta$ which is optimal for $\gamma_2$. Finally, $F_{o-f}^{(\gamma_1)}(p)\leq F_{o-f}^{(\gamma_2)}(p)$, proving monotonicity. The continuity of $Z_{o-f}^{(\gamma)}(p)$ with respect to $\gamma$ in $]-1;1]$ follows from the Proposition \ref{prop:frac_min} combined with  $F_f^{(\gamma)}(\beta|p)$ continuity with respect to both $\gamma\in[-1;1]$ and $\beta\in{\cal B}_p^{(int)}$. (The intuition with respect to the continuity is as follows: an increase in $\gamma$ pushes the optimal $\beta$ away from the boundary of the ${\cal B}_p$ polytope.)
\end{proof}

\section{Permanent Inequalities, Special Value of $\gamma$, and Conjectures}
\label{sec:what_to_test}

We start this section discussing in Subsection \ref{subsec:ineq} the recently derived permanent inequalities related to BP and MF analysis.  Then, we switch to describing new results of this manuscript in Subsection \ref{subsec:fractional-opt}, which are mainly related  to the fractional generalizations of the inequalities discussed in Subsection \ref{subsec:ineq}. We also discuss in Subsection \ref{subsec:fractional-opt} the special (and $p$-dependent) value of the fractional coefficient $\gamma$ for which $\perm(p)$ is equal to $Z_{o-f}^{\gamma}(p)$. Finally, Subsection \ref{subsec:conject} is devoted to discussing conjectures whose resolutions should help to tighten bounds for the permanent.

\subsection{Recently Derived Inequalities}
\label{subsec:ineq}

In this subsection we discuss a number of upper and lower bounds on permanents of positive matrices introduced recently. Our task is two-fold. First,  we wish to relate the bounds/inequalities to the BP and MF approaches introduced and discussed in the preceding section. Some of these relations and interpretations are new. However, we also aim to test these bounds,  and specifically to characterize the tightness of the bounds by testing the gap as a function of advection and diffusion parameters in the 2d diffusion+advection model in Section \ref{sec:Experiments}.

The first bound of interest is
\begin{proposition}[BP lower bound]
For any non-negative $p$
\begin{eqnarray}
\perm(p)\geq Z_{o-BP}(p). \label{B1}
\end{eqnarray}
\end{proposition}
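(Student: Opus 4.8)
The plan is to derive the bound from the exact loop‑calculus identity of Theorem~\ref{Prop:Perm+BP} together with the permanental inequality of \citet{98Sch}, following the route of \citet{11Gur}. Concretely, I would use Schrijver's inequality in the form exploited by Gurvits: for every doubly stochastic matrix $\beta$,
\[
\perm\bigl(\beta.*(1-\beta)\bigr)\ \ge\ \prod_{(i,j)}(1-\beta_{ij}).
\]
Since $F_{BP}(\,\cdot\,|p)$ extends continuously and stays finite on the compact polytope ${\cal B}_p$ (the entropy terms vanish at their endpoints, and $\beta_{ij}=0$ is forced wherever $p_{ij}=0$), it attains its minimum at some $\beta^{*}\in{\cal B}_p$. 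By Proposition~\ref{prop:part-BP} and the proposition following it in \S\ref{subsubsec:o-BP}, $\beta^{*}$ is either (i) an interior point, $\beta^{*}\in{\cal B}_p^{(int)}$, solving the BP equations~(\ref{BP2}), or (ii) a partially resolved or fully resolved solution. I would organize the remainder as an induction on $n$, the base case $n=1$ being trivial because there $Z_{o-BP}(p)=p_{11}=\perm(p)$.

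In case (i), Theorem~\ref{Prop:Perm+BP} applies to $\beta^{*}$ and, since $Z_{BP}(p)=Z_{o-BP}(p)$ at the minimizer, yields
\[
\perm(p)=Z_{o-BP}(p)\,\perm\bigl(\beta^{*}.*(1-\beta^{*})\bigr)\,\prod_{(i,j)}(1-\beta^{*}_{ij})^{-1}.
\]
Schrijver's inequality makes the product of the last two factors at least $1$, so $\perm(p)\ge Z_{o-BP}(p)$, as claimed.

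In case (ii) one has, after permuting rows and columns, a block‑diagonal $\beta^{*}$ with a first (nonempty) block equal to the permutation matrix of a partial matching $\sigma$ on an index set $B_1$ and a second block $\beta^{(2)}$ with all entries below $1$; the fully resolved situation is the sub‑case $B_1=\{1,\dots,n\}$. One checks that $p_{i\sigma(i)}>0$ for $i\in B_1$ (otherwise $F_{BP}(\beta^{*}|p)=\infty$, impossible at a minimizer since $\perm(p)>0$ gives some perfect‑matching corner a finite value), that with $0\log 0=0$ the first block contributes no entropy so that $F_{BP}(\beta^{*}|p)=-\sum_{i\in B_1}\log p_{i\sigma(i)}+F_{BP}(\beta^{(2)}|p^{(2)})$ for the induced submatrix $p^{(2)}$, and that --- because every matrix in ${\cal B}_{p^{(2)}}$ extends to a block‑diagonal element of ${\cal B}_p$ with first block $\sigma$ --- minimality of $\beta^{*}$ forces $\beta^{(2)}$ to minimize $F_{BP}(\,\cdot\,|p^{(2)})$, so $Z_{o-BP}(p)=\bigl(\prod_{i\in B_1}p_{i\sigma(i)}\bigr)Z_{o-BP}(p^{(2)})$. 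A Birkhoff decomposition of $\beta^{(2)}$ produces a permutation supported where $\beta^{(2)}\ne 0$, hence where $p^{(2)}\ne 0$, so $\perm(p^{(2)})>0$ and the inductive hypothesis gives $Z_{o-BP}(p^{(2)})\le\perm(p^{(2)})$. Since $\perm(p)$ dominates the sub‑sum of its defining series over permutations that agree with $\sigma$ on $B_1$ --- which equals $\bigl(\prod_{i\in B_1}p_{i\sigma(i)}\bigr)\perm(p^{(2)})$ --- the chain $\perm(p)\ge\bigl(\prod_{i\in B_1}p_{i\sigma(i)}\bigr)\perm(p^{(2)})\ge\bigl(\prod_{i\in B_1}p_{i\sigma(i)}\bigr)Z_{o-BP}(p^{(2)})=Z_{o-BP}(p)$ closes the induction.

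The only genuinely nontrivial input is Schrijver's external inequality; everything else is bookkeeping. I expect the boundary case (ii) to require the most care: justifying the additive splitting of $F_{BP}$ along the block structure, showing the restriction $\beta^{(2)}$ is again the optimal Bethe point of the strictly smaller matrix $p^{(2)}$, and verifying $\perm(p^{(2)})>0$ so that the standing hypothesis $\perm>0$ propagates and the induction actually closes. Case (i) is immediate once Theorem~\ref{Prop:Perm+BP} and Schrijver's inequality are in hand.
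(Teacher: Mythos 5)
Your proposal is correct and follows essentially the same route as the paper's proof in Appendix~\ref{sec:BP_low}: the interior case via the exact identity of Theorem~\ref{Prop:Perm+BP} combined with Schrijver's inequality~(\ref{Sch-ineq}), and the boundary case via the block decomposition of a partially resolved minimizer together with $\perm(p)\ge Z_1\cdot Z_2$. Your inductive packaging and the explicit checks (positivity of $p_{i\sigma(i)}$, $\perm(p^{(2)})>0$, optimality of the restricted block) merely make explicit what the paper leaves implicit.
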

This statement, as an experimental but unproven observation, was made in \citet{08CKV}. It was stated as a theorem (Theorem \# 14) in \citet{10Von}, but the proof was not provided. (See also discussion in \citet{11Von} following Theorem 49/Corollary 50.) The statement was proven in \citet{11Gur}. Interpreted in terms of the terminology and logic introduced in the preceding Sections, the proof of \citet{11Gur} consists (roughly) in combining the inequality  by \citep{98Sch}
\begin{eqnarray}
\perm(\beta.*(1-\beta))\geq \prod_{(i,j)} (1-\beta_{ij}),
\label{Sch-ineq}
\end{eqnarray}
stated for any doubly stochastic $\beta$,
with some (gauge) manipulations/transformations of the type discussed above in Sections \ref{subsubsec:o-BP}. We give our version of the proof (similar to the one in \citet{11Gur} in spirit, but somewhat different in details) in Appendix \ref{sec:BP_low}. One direct corollary of the bound (\ref{B1}) discussed in \citet{11Gur}, is that
\begin{corollary}
\label{cor:Gurvits}
For an arbitrary doubly stochastic $\phi$
\begin{eqnarray}
\perm(\phi)\geq Z_{o-BP}(\phi)\geq \prod_{(i,j)} (1-\phi_{ij})^{1-\phi_{ij}}. \label{B1-ext}
\end{eqnarray}
\end{corollary}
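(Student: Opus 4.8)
The plan is to establish the two inequalities in \eqref{B1-ext} separately, reading the chain left to right. The first inequality, $\perm(\phi) \geq Z_{o-BP}(\phi)$, is nothing but the BP lower bound \eqref{B1} applied to the matrix $p = \phi$; since $\phi$ is doubly stochastic it is in particular a non-negative matrix, so the hypothesis of the preceding proposition is met and there is nothing further to do. Hence the work is entirely in the second inequality, $Z_{o-BP}(\phi) \geq \prod_{(i,j)} (1-\phi_{ij})^{1-\phi_{ij}}$.

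For the second inequality I would exploit the variational characterization of $Z_{o-BP}$. By definition \eqref{o-BP}, $-\log Z_{o-BP}(\phi) = \min_{\beta \in {\cal B}_\phi} F_{BP}(\beta|\phi)$, so $-\log Z_{o-BP}(\phi) \leq F_{BP}(\beta | \phi)$ for any feasible $\beta$, equivalently $Z_{o-BP}(\phi) \geq \exp(-F_{BP}(\beta|\phi))$. The natural move is to plug in the \emph{specific} feasible choice $\beta = \phi$: because $\phi$ itself is doubly stochastic (and vanishes exactly where $\phi$ does, trivially), we have $\phi \in {\cal B}_\phi$, so this is legitimate. With $\beta = \phi$ the cross term in \eqref{F_BP} collapses, $\sum_{(i,j)} \phi_{ij}\log(\phi_{ij}/\phi_{ij}) = 0$, and one is left with
\begin{eqnarray}
F_{BP}(\phi|\phi) = -\sum_{(i,j)} (1-\phi_{ij})\log(1-\phi_{ij}).
\label{eq:cor-collapse}
\end{eqnarray}
Exponentiating the inequality $-\log Z_{o-BP}(\phi) \leq F_{BP}(\phi|\phi)$ then gives exactly $Z_{o-BP}(\phi) \geq \exp\bigl(\sum_{(i,j)} (1-\phi_{ij})\log(1-\phi_{ij})\bigr) = \prod_{(i,j)} (1-\phi_{ij})^{1-\phi_{ij}}$, which is the claim.

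I do not expect a genuine obstacle here: the corollary is a soft consequence of the already-granted bound \eqref{B1} together with the trivial observation that a doubly stochastic matrix is a feasible point of its own $\beta$-polytope. The only mild point of care is the handling of entries with $\phi_{ij} = 1$ (forcing a whole row/column to be a single $1$, i.e. $\phi$ partially reduces to a permutation block): there the factor $(1-\phi_{ij})^{1-\phi_{ij}} = 0^0 = 1$ by the usual convention $0\log 0 = 0$, and correspondingly that edge contributes $0$ to \eqref{eq:cor-collapse}, so the identity and the inequality both remain valid by continuity. One should also note that $\beta = \phi$ need not be the minimizer of $F_{BP}(\cdot|\phi)$ — in general it is not — but that is irrelevant, since we only need it as an upper bound for the minimum. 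This is why the resulting estimate $\prod_{(i,j)}(1-\phi_{ij})^{1-\phi_{ij}}$ is typically strictly weaker than $Z_{o-BP}(\phi)$ itself, consistent with the chain of inequalities in \eqref{B1-ext}.
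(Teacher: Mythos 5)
Your proof is correct and is essentially the intended argument: the first inequality is the BP lower bound (\ref{B1}) specialized to $p=\phi$, and the second follows by evaluating the variational bound $-\log Z_{o-BP}(\phi)\leq F_{BP}(\beta|\phi)$ at the feasible point $\beta=\phi$, where the cross term cancels and only $-\sum_{(i,j)}(1-\phi_{ij})\log(1-\phi_{ij})$ survives. The paper gives no separate proof (it cites \citet{11Gur} and calls it a direct corollary of (\ref{B1})), and your handling of the $\phi_{ij}\in\{0,1\}$ edge cases is the right bookkeeping.
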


Next, the following two lower bounds follow from analysis of Eq.~(\ref{LC}).
\begin{proposition}[BP lower bound \#1]
\label{prop:bound3}
For any non-negative $p$ and doubly stochastic $\beta\in {\cal B}_p^{(int)}$ solving Eqs.~(\ref{BP2}) (if the solution exists) results in
$$
\perm(p)\geq Z_{BP}(\beta|p) \prod_{(i,j)}(1-\beta_{ij})^{\beta_{ij}-1}\frac{n!}{n^n}.
$$
\end{proposition}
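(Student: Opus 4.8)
The plan is to feed the loop-calculus identity (\ref{LC}) into the van der Waerden / Falikman--Egorychev bound written in its ``capacity'' form. By Theorem \ref{Prop:Perm+BP}, for an interior doubly stochastic $\beta\in{\cal B}_p^{(int)}$ solving (\ref{BP2}) one has $\perm(p)=Z_{BP}(\beta|p)\,\perm\bigl(\beta.*(1-\beta)\bigr)\prod_{(i,j)}(1-\beta_{ij})^{-1}$, so the asserted inequality is equivalent to the purely matricial statement
\begin{equation}
\perm\bigl(\beta.*(1-\beta)\bigr)\ \geq\ \frac{n!}{n^n}\prod_{(i,j)\in{\cal E}}(1-\beta_{ij})^{\beta_{ij}},
\label{eq:plan-core}
\end{equation}
which I would prove for every doubly stochastic $\beta$ with $0<\beta_{ij}<1$ on the support of $p$; the Lagrange multipliers of (\ref{BP2}) do not enter beyond the rewriting of $Z_{BP}$. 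Observe that Schrijver's inequality (\ref{Sch-ineq}) alone only gives $\perm(\beta.*(1-\beta))\geq\prod_{(i,j)}(1-\beta_{ij})$, which is strictly weaker than (\ref{eq:plan-core}) already for the uniform matrix $\beta_{ij}\equiv1/n$, so the $n!/n^n$ factor must be produced by a genuine van der Waerden type argument.

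Set $M=\beta.*(1-\beta)$ and let $\Mcap(M)=\inf_{x>0}\prod_i\bigl(\sum_j M_{ij}x_j\bigr)\big/\prod_j x_j$ be the Gurvits capacity. The core estimate is $\Mcap(M)\geq\prod_{(i,j)\in{\cal E}}(1-\beta_{ij})^{\beta_{ij}}$. For any $x>0$ and any row $i$, since $\sum_j\beta_{ij}=1$, the number $\sum_j M_{ij}x_j=\sum_j\beta_{ij}\bigl((1-\beta_{ij})x_j\bigr)$ is a $\beta$-weighted average of the $(1-\beta_{ij})x_j$, so concavity of $\log$ yields $\log\sum_j M_{ij}x_j\geq\sum_j\beta_{ij}\log(1-\beta_{ij})+\sum_j\beta_{ij}\log x_j$. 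Summing over $i$ and using column stochasticity $\sum_i\beta_{ij}=1$,
\begin{equation}
\sum_i\log\left(\sum_j M_{ij}x_j\right)-\sum_j\log x_j\ \geq\ \sum_{(i,j)\in{\cal E}}\beta_{ij}\log(1-\beta_{ij})
\label{eq:plan-cap}
\end{equation}
holds uniformly in $x>0$, which is exactly $\Mcap(M)\geq\prod_{(i,j)\in{\cal E}}(1-\beta_{ij})^{\beta_{ij}}$. Entries with $p_{ij}=0$ (hence $\beta_{ij}=0$) are harmless: they contribute $0$ to both sides, and the interior assumption ensures $1-\beta_{ij}>0$ wherever $\beta_{ij}>0$, so every logarithm above is finite.

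To finish, apply the capacity form of the van der Waerden bound, $\perm(A)\geq(n!/n^n)\,\Mcap(A)$, valid for every non-negative $A$ (no full indecomposability or scalability hypothesis is needed), see \citet{08Gur}, with $A=M$: together with (\ref{eq:plan-cap}) this yields (\ref{eq:plan-core}), and substituting back into the loop-calculus identity produces $\perm(p)\geq Z_{BP}(\beta|p)\,\frac{n!}{n^n}\prod_{(i,j)}(1-\beta_{ij})^{\beta_{ij}-1}$, as claimed.

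The only real difficulty I anticipate is locating the correct intermediate bound: one needs a lower estimate on $\perm(\beta.*(1-\beta))$ that recombines cleanly with the $n!/n^n$ of van der Waerden, and the exponent vector $(\beta_{ij})$ is essentially forced by tightness at $\beta_{ij}\equiv1/n$. Once one decides to lower-bound the capacity, the computation goes through precisely because in $\sum_j\beta_{ij}(1-\beta_{ij})x_j$ the mixing weights can be taken to be the row $(\beta_{ij})_j$, which sums to one, so that the $\log x_j$ terms cancel after summation over $i$ via column stochasticity. The remaining points --- restricting the sums to ${\cal E}$, finiteness of the logarithms, and invoking (\ref{LC}) --- are routine bookkeeping.
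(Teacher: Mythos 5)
Your argument is correct and is essentially the paper's own route: the paper proves this proposition by citing the Gurvits--van der Waerden theorem $\perm(A)\geq \frac{n!}{n^n}\Mcap(A)$ together with the weighted AM--GM bound $\sum_j\beta_{ij}(1-\beta_{ij})^{-\gamma}x_j\geq\prod_j\bigl((1-\beta_{ij})^{-\gamma}x_j\bigr)^{\beta_{ij}}$ (see the proof of Proposition~\ref{prop:fractional-low}, of which your computation is exactly the $\gamma=-1$ case), and then substitutes into the loop-calculus identity (\ref{LC}). You have merely written out explicitly the capacity estimate that the paper leaves as a one-line citation, so nothing further is needed.
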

This is the statement of Corollary 7 of \citet{10WC} valid for any interior point solution of the BP-equations,  and it follows from the Gurvits-van der Waerden theorem of \citet{08Gur,09LS}, also stated as Theorem 6 in \citet{10WC}.

\begin{proposition}[BP lower bound \#2]
\label{prop:bound4}
For any non-negative $p$ and $\beta\in {\cal B}_p^{(int)}$ solving Eqs.~(\ref{BP2}) (if the solution exists) results in
$$
Z\geq 2 Z_{BP}(\beta|p)(\prod_{i,j}(1-\beta_{ij}))^{-1}\prod_i \beta_{i\Pi(i)}(1-\beta_{i\Pi(i)}),
$$
where $\Pi$ is an arbitrary permutation.
\end{proposition}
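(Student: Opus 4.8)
The plan is to use the loop-calculus identity of Theorem~\ref{Prop:Perm+BP} to convert the statement into a purely permanental inequality, and then to prove that inequality by a normalization step followed by a ``minimum is attained at a vertex'' argument for a multilinear function. Concretely: dividing Eq.~(\ref{LC}) by the strictly positive prefactor $Z_{BP}(\beta|p)\bigl(\prod_{i,j}(1-\beta_{ij})\bigr)^{-1}$, the asserted bound becomes equivalent to
\[
\perm\bigl(\beta.*(1-\beta)\bigr)\ \ge\ 2\prod_i\beta_{i\Pi(i)}(1-\beta_{i\Pi(i)}).
\]
Permuting the columns of $\beta$ by $\Pi$ preserves double stochasticity, leaves $\perm(\beta.*(1-\beta))$ unchanged, and turns the right-hand side into $2\prod_i\beta_{ii}(1-\beta_{ii})$, so it suffices to show, for \emph{any} doubly stochastic $\beta$, that $\perm(M)\ge 2\prod_i M_{ii}$ where $M_{ij}:=\beta_{ij}(1-\beta_{ij})$. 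If some $M_{ii}=0$ this is trivial, so we may assume $\beta_{ii}\in(0,1)$, hence $M_{ii}>0$, for all $i$.

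Next I would record an elementary row estimate and use it to normalize. For $j\ne i$, double stochasticity gives $\beta_{ij}\le 1-\beta_{ii}$, hence $1-\beta_{ij}\ge\beta_{ii}$ and $M_{ij}\ge\beta_{ii}\beta_{ij}$; summing over $j\ne i$ yields $\sum_{j\ne i}M_{ij}\ge\beta_{ii}(1-\beta_{ii})=M_{ii}$. Dividing row $i$ of $M$ by $M_{ii}$ produces a matrix of the form $I+B$ with $\perm(I+B)=\perm(M)/\prod_i M_{ii}$, in which $B\ge 0$ has zero diagonal and every row sum at least $1$. It then remains to prove $\perm(I+B)\ge 2$ for every such $B$.

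For that final step: since $B$ has zero diagonal, $\perm(I+B)=\sum_\sigma\prod_{i:\sigma(i)\ne i}B_{i\sigma(i)}$ is a polynomial in the off-diagonal entries $B_{ij}$ that is multilinear with non-negative coefficients, hence non-decreasing in each $B_{ij}$ and affine in each row of $B$. By monotonicity we may assume each row sum equals exactly $1$, so $B$ ranges over the product of the $n$ simplices $\Delta_i=\{x\ge0:\ x_i=0,\ \sum_{j\ne i}x_j=1\}$; a function affine in each block attains its minimum at a vertex, i.e.\ at $B=B_f$ for some $f$ with $f(i)\ne i$ for all $i$, where $(B_f)_{ij}=[\,j=f(i)\,]$. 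At such a vertex the identity permutation contributes $1$ to $\perm(I+B_f)$; and, because a fixed-point-free map on a finite set always contains a cycle $(c_1\cdots c_k)$ with $k\ge2$, the permutation sending $c_r\mapsto c_{r+1}$ (indices mod $k$) and fixing all other indices is a second, distinct permutation whose contribution is again $1$. Hence $\perm(I+B_f)\ge2$, so by monotonicity $\perm(I+B)\ge2$ in general, and tracing the reductions back gives $\perm(M)\ge2\prod_i M_{ii}$ and the proposition. One may note in passing that this permanental inequality holds for every doubly stochastic $\beta$, not just for BP fixed points, and that it is tight, e.g.\ for $2\times2$ matrices and for cyclic doubly stochastic patterns.

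I do not expect a deep obstacle here; the step that needs care is the passage to a vertex together with the evaluation there. One must use that $\perm(I+B)$ is only \emph{multilinear} (so that its minimum over the product of simplices is located at a vertex, not merely that convexity would give), and one must check that at a vertex the two exhibited permutations really are distinct and each have weight exactly $1$ --- which is precisely the place where the strict row estimate $\sum_{j\ne i}M_{ij}\ge M_{ii}$, equivalently the fixed-point-freeness of $f$, is used.
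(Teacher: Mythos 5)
Your proof is correct, and it is worth noting that the paper itself supplies no argument for Proposition~\ref{prop:bound4} --- it only cites Theorem~8 of \citet{10WC} (together with the erratum). Your first move, dividing the loop-calculus identity (\ref{LC}) by $Z_{BP}(\beta|p)\bigl(\prod_{i,j}(1-\beta_{ij})\bigr)^{-1}$ (legitimate for $\beta\in{\cal B}_p^{(int)}$, where every $1-\beta_{ij}>0$) so that the claim becomes the purely permanental inequality $\perm(\beta.*(1-\beta))\ge 2\prod_i\beta_{i\Pi(i)}(1-\beta_{i\Pi(i)})$ for doubly stochastic $\beta$, is exactly the reduction underlying the cited theorem. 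What is genuinely yours is the proof of that inequality: the row estimate $\beta_{ij}\le 1-\beta_{ii}$ $\Rightarrow$ $\sum_{j\ne i}M_{ij}\ge M_{ii}$ with $M_{ij}=\beta_{ij}(1-\beta_{ij})$, the normalization to $I+B$ with zero-diagonal $B\ge 0$ and row sums $\ge 1$, and the observation that $\perm(I+B)$ is multilinear with non-negative coefficients --- hence monotone and affine in each row --- so its minimum over the product of simplices sits at a vertex $B_f$, where the identity permutation and the cycle guaranteed by fixed-point-freeness of $f$ each contribute $1$. I checked the steps (including the trivial case $M_{ii}=0$, the column permutation by $\Pi$, and the distinctness of the two permutations at a vertex) and found no gap; the degenerate $2\times 2$ block example even shows the constant $2$ is attained, so the bound is sharp. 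This gives a self-contained, elementary derivation that moreover establishes the permanental inequality for \emph{every} doubly stochastic $\beta$, not only BP fixed points --- a slightly stronger statement than what the proposition needs, and one the manuscript could usefully record alongside the citation.
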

This statement was made in Theorem 8 in \citet{10WC} and it is also related to an earlier observation of \citet{73ES}
\footnote{The proof of the Theorem 8 in \citet{10WC} contained a misprint that was corrected in the erratum available at
\url{https://sites.google.com/site/mchertkov/publications/mypapers/91_erratum.pdf}.}

\begin{proposition}[BP upper bound]
\label{prop:bound5}
For any non-negative $p$ and $\beta\in {\cal B}_p^{(int)}$ solving Eqs.~(\ref{BP2}) (if the solution exists)
\begin{eqnarray}
\perm(p)\leq Z_{BP}(\beta|p)(\prod_{(i,j)\in{\cal E}}(1-\beta_{ij}))^{-1}\prod_j(1-\sum_i(\beta_{ij})^2),
\label{bound5}
\end{eqnarray}
\end{proposition}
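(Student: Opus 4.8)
The plan is to reduce the claim to an elementary bound on the permanent of a non-negative matrix, using the exact Loop-Calculus identity of Theorem~\ref{Prop:Perm+BP}. Since $\beta\in{\cal B}_p^{(int)}$ solves the BP equations~(\ref{BP2}), Eq.~(\ref{LC}) applies and gives
\[
\perm(p)=Z_{BP}(\beta|p)\,\perm\bigl(\beta.*(1-\beta)\bigr)\,\frac{1}{\prod_{(i,j)}(1-\beta_{ij})},
\]
where I identify $Z_{BP}(p)$ in~(\ref{LC}) with $Z_{BP}(\beta|p)=\exp(-F_{BP}(\beta|p))$ evaluated at the interior solution. Hence the proposition is equivalent to the purely combinatorial inequality
\[
\perm\bigl(\beta.*(1-\beta)\bigr)\;\leq\;\prod_{j}\Bigl(1-\sum_{i}\beta_{ij}^{2}\Bigr).
\]

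To establish this, first I would identify the column sums of the non-negative matrix $A=\beta.*(1-\beta)$, whose entries are $A_{ij}=\beta_{ij}(1-\beta_{ij})\geq 0$ (non-negativity is immediate from $0\leq\beta_{ij}\leq 1$ on ${\cal B}_p$). Double stochasticity of $\beta$ gives $\sum_i A_{ij}=\sum_i\beta_{ij}-\sum_i\beta_{ij}^{2}=1-\sum_i\beta_{ij}^{2}$ for every column $j$. Next I would invoke the elementary fact that the permanent of any non-negative matrix does not exceed the product of its column sums: expanding $\prod_j\bigl(\sum_i A_{ij}\bigr)$ produces the sum of $\prod_j A_{\phi(j),j}$ over all functions $\phi:\{1,\dots,n\}\to\{1,\dots,n\}$, whereas $\perm(A)=\sum_{\phi\in S_n}\prod_j A_{\phi(j),j}$ is the same sum restricted to bijections; since all terms are non-negative, discarding the non-bijective terms only decreases the value. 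Therefore $\perm(A)\leq\prod_j\sum_i A_{ij}=\prod_j(1-\sum_i\beta_{ij}^{2})$, and substituting back into the identity above proves the bound.

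There is essentially no hard step; the content is carried entirely by the Loop-Calculus identity~(\ref{LC}) and the product-of-column-sums bound. The only points requiring care are bookkeeping ones: matching the counting factors as above, and noting that $\perm(A)=\perm(A^{\mathrm T})$ so that bounding by column sums is as legitimate as bounding by row sums. Running the same argument with row sums would yield the symmetric variant $\perm(p)\leq Z_{BP}(\beta|p)\bigl(\prod_{(i,j)}(1-\beta_{ij})\bigr)^{-1}\prod_i(1-\sum_j\beta_{ij}^{2})$; the statement as written simply selects the column form. If any obstacle arises at all it is purely notational — confirming that the interior-solution hypothesis $\beta\in{\cal B}_p^{(int)}$ is exactly what is needed to apply Theorem~\ref{Prop:Perm+BP}.
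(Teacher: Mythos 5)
Your proof is correct and follows exactly the route the paper takes: it defers the BP case to Proposition 9 of \citet{10WC}, but proves the fractional generalization (Corollary \ref{corr:fractional-upper}) precisely by combining the exact identity with the standard bound $\perm(A)\leq\prod_j\sum_i A_{ij}$ applied to $A=\beta.*(1-\beta)$, whose column sums equal $1-\sum_i\beta_{ij}^2$ by double stochasticity. Your only addition is spelling out the elementary proof of the column-sum bound, which the paper treats as standard.
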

This statement was made in Proposition 9 of \citet{10WC}.

\subsection{New Bounds and $\gamma_*$}
\label{subsec:fractional-opt}

Of the bounds discussed above, three are related to BP and one to MF,
while as argued in Section
\ref{subsec:conject} the fractional approach interpolates between BP and MF. This motivates exploring below new fractional generalizations of the previously known (and discussed in the preceding subsection) BP and MF bounds.

We first derive a new lower bound generalizing Proposition \ref{prop:bound3} to the fractional case.
\begin{proposition}
\label{prop:fractional-low}
The following is true for any doubly stochastic $\beta$ and any $\gamma\in[-1;1]$
$$
\perm\left(\beta.*(1-\beta).^{-\gamma}\right)\geq \frac{n!}{n^n}\prod_{(i,j)}(1-\beta_{ij})^{-\gamma\beta_{ij}}.
$$
\end{proposition}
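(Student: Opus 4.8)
The plan is to obtain the bound as a consequence of the capacity form of the Gurvits--van der Waerden theorem (Theorem~6 of \citet{10WC}; cf.\ \citet{08Gur,09LS}): for every non-negative $n\times n$ matrix $M$,
\[
\perm(M)\ \ge\ \frac{n!}{n^n}\,\Mcap(M),\qquad
\Mcap(M)=\inf_{x\in\mathbb{R}^n_{>0}}\ \frac{\prod_{i}\bigl(\sum_{j}M_{ij}x_j\bigr)}{\prod_{j}x_j}.
\]
Applying this with $M_{ij}=\beta_{ij}(1-\beta_{ij})^{-\gamma}$, i.e.\ $M=\beta.*(1-\beta).^{-\gamma}$, it suffices to prove the pointwise capacity estimate
\[
\Mcap\bigl(\beta.*(1-\beta).^{-\gamma}\bigr)\ \ge\ \prod_{(i,j)}(1-\beta_{ij})^{-\gamma\beta_{ij}},
\]
which is a purely elementary statement about the doubly stochastic matrix $\beta$.

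To prove the capacity estimate, fix any $x\in\mathbb{R}^n_{>0}$. Since $\beta$ is row-stochastic, $\sum_j\beta_{ij}=1$, so the $i$-th row sum $\sum_j M_{ij}x_j=\sum_j\beta_{ij}\bigl((1-\beta_{ij})^{-\gamma}x_j\bigr)$ is a convex combination, and the weighted arithmetic--geometric mean inequality yields
\[
\sum_j M_{ij}x_j\ \ge\ \prod_j\bigl((1-\beta_{ij})^{-\gamma}x_j\bigr)^{\beta_{ij}}
=\Bigl(\prod_j(1-\beta_{ij})^{-\gamma\beta_{ij}}\Bigr)\prod_j x_j^{\beta_{ij}}.
\]
Multiplying over $i$ and using column-stochasticity $\sum_i\beta_{ij}=1$, the $x$-dependent factor collapses, $\prod_i\prod_j x_j^{\beta_{ij}}=\prod_j x_j^{\sum_i\beta_{ij}}=\prod_j x_j$, so it cancels the denominator and leaves $\bigl(\prod_i\sum_j M_{ij}x_j\bigr)/\prod_j x_j\ \ge\ \prod_{(i,j)}(1-\beta_{ij})^{-\gamma\beta_{ij}}$. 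Taking the infimum over $x>0$ gives the capacity estimate, and the proposition follows.

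Two remarks. First, the restriction $\gamma\in[-1;1]$ plays no role in this argument, which is valid for every real $\gamma$; it is kept only to match the range used elsewhere in the paper, and the specialization $\gamma=-1$ recovers (and, since no BP fixed-point assumption is needed, slightly generalizes) the content of Proposition~\ref{prop:bound3}. Second, there is no genuine obstacle, only two points that need care. (i)~The bare van der Waerden bound $\perm\ge n!/n^n$ cannot be used directly, since $\beta.*(1-\beta).^{-\gamma}$ is generally not doubly stochastic; the capacity-normalized inequality is essential, and the weighted AM--GM step is exactly what is needed to lower-bound the resulting capacity by the claimed product. (ii)~Degenerate entries $\beta_{ij}\in\{0,1\}$ should be treated by convention: $\beta_{ij}=0$ is harmless under $0^{0}=1$, while $\beta_{ij}=1$ forces, by double stochasticity, the ambient row and column of $\beta$ to reduce to a single unit entry, and the two sides are then seen to agree. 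It is cleanest to prove the inequality for $\beta$ in the interior, $\beta_{ij}\in(0,1)$, and to pass to the boundary by continuity.
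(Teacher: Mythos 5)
Your proposal is correct and follows essentially the same route as the paper: the paper's proof likewise invokes the Gurvits--van der Waerden (capacity) theorem together with the weighted AM--GM inequality $\sum_j\beta_{ij}(1-\beta_{ij})^{-\gamma}x_j\geq \prod_j\bigl((1-\beta_{ij})^{-\gamma}x_j\bigr)^{\beta_{ij}}$, which is exactly your row-sum step. You have simply written out the details (the collapse of the $x$-dependence via column-stochasticity and the boundary cases) that the paper leaves implicit.
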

\begin{proof}
This bound generalizes Corollary 7 of \citet{10WC}, and it follows directly from the Gurvits--van der Waerden theorem of \citet{08Gur,09LS} (see also Proposition 8 of \citet{10WC},  where a misprint should be corrected $n^n/n!\to n!/n^n$), and the inequality,
$\sum_j\beta_{ij}(1-\beta_{ij})^{-\gamma}x_j\geq \prod_j\left(\left(1-\beta_{ij}\right)^{-\gamma}x_j\right)^{\beta_{ij}}$.
\end{proof}
Then, combining Proposition \ref{prop:fractional-low} with Theorem \ref{theorem:m-exact}, one arrives at the following statement, generalizing Proposition \ref{prop:bound3}:
\begin{corollary}[fractional low bound]
\label{corr:fractional-low}
For any non-negative $p$ and $\beta\in {\cal B}_p^{(int)}$ solving Eqs.~(\ref{m3}) (if the solution exists), the following lower bound holds
\begin{equation}
\perm(p)\geq Z_f^{(\gamma)}(\beta|p)\frac{n!}{n^n}\prod_{(i,j)\in{\cal E}}(1-\beta_{ij})^{\gamma(1-\beta_{ij})}.
\label{new-lower-bound}
\end{equation}
\end{corollary}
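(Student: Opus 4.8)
The plan is to combine the exact fractional representation of Theorem~\ref{theorem:m-exact} with the lower bound of Proposition~\ref{prop:fractional-low} in the most direct way possible. Starting from Eq.~(\ref{m-exact}), we have
\begin{eqnarray}
\perm(p)=Z_f^{(\gamma)}(\beta|p)\,\perm\!\left(\frac{\beta.}{(1-\beta).^\gamma}\right)\prod_{(i,j)}(1-\beta_{ij})^\gamma,\nonumber
\end{eqnarray}
valid for any $\beta\in{\cal B}_p^{(int)}$ solving Eqs.~(\ref{m3}). The only nontrivial factor here is the permanent $\perm\left(\beta./(1-\beta).^\gamma\right)$, which we need to bound below. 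The first key step is to observe that $\beta./(1-\beta).^\gamma = \beta.*(1-\beta).^{-\gamma}$ in the componentwise (Matlab) notation of the paper, so this is precisely the object appearing in Proposition~\ref{prop:fractional-low}. Since $\beta$ is doubly stochastic (it lies in ${\cal B}_p^{(int)}\subseteq{\cal B}_p$), Proposition~\ref{prop:fractional-low} applies verbatim and yields
\begin{eqnarray}
\perm\!\left(\beta.*(1-\beta).^{-\gamma}\right)\geq \frac{n!}{n^n}\prod_{(i,j)}(1-\beta_{ij})^{-\gamma\beta_{ij}}.\nonumber
\end{eqnarray}

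The second step is simply to substitute this inequality into the exact relation. Since $Z_f^{(\gamma)}(\beta|p)>0$ (it equals $\exp(-F_f^{(\gamma)}(\beta|p))$ for a finite real functional value, given $\beta\in{\cal B}_p^{(int)}$ and $p$ with positive entries on the relevant support) and $\prod_{(i,j)}(1-\beta_{ij})^\gamma>0$, multiplying both sides of the Proposition~\ref{prop:fractional-low} inequality by these positive quantities preserves the inequality direction. This gives
\begin{eqnarray}
\perm(p)\geq Z_f^{(\gamma)}(\beta|p)\,\frac{n!}{n^n}\,\prod_{(i,j)}(1-\beta_{ij})^{-\gamma\beta_{ij}}\prod_{(i,j)}(1-\beta_{ij})^{\gamma}
= Z_f^{(\gamma)}(\beta|p)\,\frac{n!}{n^n}\,\prod_{(i,j)}(1-\beta_{ij})^{\gamma(1-\beta_{ij})},\nonumber
\end{eqnarray}
where the last equality combines the two products by adding exponents, $-\gamma\beta_{ij}+\gamma=\gamma(1-\beta_{ij})$. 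This is exactly Eq.~(\ref{new-lower-bound}), so the proof is complete.

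There is essentially no analytic obstacle here; the content of the corollary is entirely front-loaded into Proposition~\ref{prop:fractional-low} (which in turn rests on the Gurvits--van der Waerden theorem) and Theorem~\ref{theorem:m-exact}. The only points requiring minor care are bookkeeping ones: first, confirming that the matrix $\beta./(1-\beta).^\gamma$ whose permanent appears in Theorem~\ref{theorem:m-exact} is literally $\beta.*(1-\beta).^{-\gamma}$, which is immediate from the definition of componentwise division and negative powers; and second, verifying that all three factors being multiplied ($Z_f^{(\gamma)}(\beta|p)$, the permanent, and $\prod(1-\beta_{ij})^\gamma$) are strictly positive, so that the inequality is not accidentally reversed or rendered vacuous — this holds since $\beta\in{\cal B}_p^{(int)}$ forces every $\beta_{ij}$ with $p_{ij}\neq 0$ to lie strictly in $(0,1)$, and edges with $p_{ij}=0$ contribute $\beta_{ij}=0$ and hence a harmless factor of $1$ in all products. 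Thus the ``hard part'' is really just assembling two previously established results, and I would present it as a two-line derivation.
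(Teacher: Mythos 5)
Your proof is correct and follows exactly the route the paper intends: it combines the exact fractional representation of Theorem~\ref{theorem:m-exact} with the Gurvits--van der Waerden-based bound of Proposition~\ref{prop:fractional-low} on $\perm(\beta.*(1-\beta).^{-\gamma})$, and the exponent bookkeeping $-\gamma\beta_{ij}+\gamma=\gamma(1-\beta_{ij})$ matches Eq.~(\ref{new-lower-bound}). The paper itself only states ``combining Proposition~\ref{prop:fractional-low} with Theorem~\ref{theorem:m-exact}'' without writing out the algebra, so your version is just a slightly more explicit rendering of the same argument.
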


Next, one arrives at the following fractional generalization of Proposition \ref{prop:bound5}.
\begin{corollary}[fractional upper bound $\#1$]
\label{corr:fractional-upper}
For any non-negative $p$ and $\beta\in {\cal B}_p^{(int)}$ solving Eqs.~(\ref{m3}) (if the solution exists), the following upper bound holds
$$
\perm(p)\leq Z_f^{(\gamma)}(\beta|p)
(\prod_{(i,j)\in{\cal E}}(1-\beta_{ij})^{\gamma})\prod_j\sum_i\beta_{ij}(1-\beta_{ij})^{-\gamma}.
$$
\end{corollary}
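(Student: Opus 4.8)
The plan is to combine the exact fractional identity of Theorem~\ref{theorem:m-exact} with the most elementary upper bound available for the permanent of a non-negative matrix, namely the product of its column sums. By Theorem~\ref{theorem:m-exact}, for $\beta\in{\cal B}_p^{(int)}$ solving Eqs.~(\ref{m3}) one has
\[
\perm(p)=Z_f^{(\gamma)}(\beta|p)\,\perm\!\left(\beta.*(1-\beta).^{-\gamma}\right)\prod_{(i,j)\in{\cal E}}(1-\beta_{ij})^{\gamma}.
\]
Since $Z_f^{(\gamma)}(\beta|p)=\exp(-F_f^{(\gamma)}(\beta|p))>0$ and, on the relative interior of the $\beta$-polytope, every factor $1-\beta_{ij}$ is strictly positive so that $\prod_{(i,j)}(1-\beta_{ij})^{\gamma}>0$, it suffices to bound the middle factor $\perm(\beta.*(1-\beta).^{-\gamma})$ from above.

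First I would record the elementary fact that for an arbitrary non-negative $n\times n$ matrix $A=(a_{ij})$,
\[
\perm(A)\le\prod_{j=1}^n\Bigl(\sum_{i=1}^n a_{ij}\Bigr),
\]
which holds because expanding the right-hand side gives $\sum_{\psi}\prod_j a_{\psi(j)j}$ over \emph{all} maps $\psi:\{1,\dots,n\}\to\{1,\dots,n\}$ with non-negative summands, of which $\perm(A)$ is precisely the sub-sum over the bijective $\psi$. Applying this to $A=\beta.*(1-\beta).^{-\gamma}$, whose $(i,j)$ entry $\beta_{ij}(1-\beta_{ij})^{-\gamma}$ is finite and non-negative for $\beta\in{\cal B}_p^{(int)}$ (and genuinely zero whenever $p_{ij}=0$, since then $\beta_{ij}=0$), yields
\[
\perm\!\left(\beta.*(1-\beta).^{-\gamma}\right)\le\prod_j\sum_i\beta_{ij}(1-\beta_{ij})^{-\gamma}.
\]
Substituting this into the displayed identity produces exactly the claimed bound. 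As a consistency check, at $\gamma=-1$ column stochasticity of $\beta$ gives $\sum_i\beta_{ij}(1-\beta_{ij})^{-\gamma}=\sum_i\beta_{ij}(1-\beta_{ij})=1-\sum_i\beta_{ij}^2$ and $\prod_{(i,j)}(1-\beta_{ij})^{\gamma}=(\prod_{(i,j)}(1-\beta_{ij}))^{-1}$, so one recovers Proposition~\ref{prop:bound5}.

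There is no real obstacle in this argument; the only items deserving care are the well-definedness of the entries of $\beta.*(1-\beta).^{-\gamma}$ (handled by the interiority hypothesis $\beta\in{\cal B}_p^{(int)}$, exactly as in Theorem~\ref{theorem:m-exact}) and the harmless observation that one could equally well use the row-sum form of the elementary permanent bound, which by symmetry of the doubly stochastic $\beta$ gives the same expression with the roles of $i$ and $j$ interchanged, so the stated form is the natural one. Everything else is a single substitution into the exact fractional representation.
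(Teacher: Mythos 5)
Your proposal is correct and follows exactly the paper's route: substitute the exact fractional identity of Theorem~\ref{theorem:m-exact} and bound $\perm\left(\beta.*(1-\beta).^{-\gamma}\right)$ by the elementary column-sum bound $\perm(A)\leq\prod_j\sum_i A_{ij}$. The only additions beyond the paper's one-line argument are your (correct) proof of that elementary bound and the $\gamma=-1$ consistency check recovering Proposition~\ref{prop:bound5}.
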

This upper bound follows from combining Theorem \ref{theorem:m-exact}, with the simple (and standard) upper bound,
$\perm(A)\leq \prod_j (\sum_i A_{i,j})$ applied to $A=\beta.*(1-\beta).^{-\gamma}$.

Note that Corollary \ref{corr:fractional-upper}, applied to the $\gamma=0$ case and reinforced by the observation, that for $\gamma\geq 0$ the minimum of the fractional functional (\ref{m1}) is achieved in $\beta\in {\cal B}_p^{(int)}$, translates into
\begin{eqnarray}
\perm(p)\leq Z^{(\gamma=0)}_{o-f}(p).
\label{gamma=0-upper bound}
\end{eqnarray}
Combined with Proposition \ref{prop:gamma_mon}, Eq.~(\ref{gamma=0-upper bound}) results in the following:
\begin{corollary}[fractional upper bound $\#2$]
For any non-negative $p$
\label{corr:gamma=>0-upper bound}
$$\forall \gamma\geq 0:\ \ \perm(p)\leq Z^{(\gamma)}_{o-f}(p). $$
\end{corollary}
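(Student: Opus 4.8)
The plan is to combine two facts already available in the excerpt: (i) Corollary \ref{corr:fractional-upper} applied at $\gamma=0$, which yields the upper bound $\perm(p)\leq Z^{(\gamma=0)}_{o-f}(p)$ recorded as Eq.~(\ref{gamma=0-upper bound}); and (ii) the $\gamma$-monotonicity of $Z_{o-f}^{(\gamma)}(p)$ established in Proposition \ref{prop:gamma_mon}. First I would recall that for any $\gamma\geq 0$ the minimum of the fractional functional $F_f^{(\gamma)}(\beta|p)$ over $\beta\in{\cal B}_p$ is attained at an interior point $\beta\in{\cal B}_p^{(int)}$, so the hypothesis ``$\beta\in{\cal B}_p^{(int)}$ solving Eqs.~(\ref{m3})'' in Corollary \ref{corr:fractional-upper} is automatically met when we evaluate at the optimal $\beta$; this is what legitimizes passing from the conditional bound in Corollary \ref{corr:fractional-upper} to the unconditional Eq.~(\ref{gamma=0-upper bound}).

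The second step is the monotonicity chain. Proposition \ref{prop:gamma_mon} states that $\gamma\mapsto Z_{o-f}^{(\gamma)}(p)$ is monotonically increasing on $[-1;1]$. Hence for any $\gamma\geq 0$ we have $Z_{o-f}^{(\gamma)}(p)\geq Z_{o-f}^{(0)}(p)$. Stringing this together with Eq.~(\ref{gamma=0-upper bound}) gives
$$
\perm(p)\leq Z^{(\gamma=0)}_{o-f}(p)\leq Z^{(\gamma)}_{o-f}(p)\qquad\text{for all }\gamma\geq 0,
$$
which is exactly the claim. So the whole argument is essentially a two-line deduction: anchor at $\gamma=0$ via the fractional upper bound, then ride monotonicity upward.

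Since the two ingredients are already proved earlier, there is no genuinely hard analytic step here; the only thing requiring a moment of care is the \emph{interior} issue — making sure that at $\gamma=0$ (and more generally $\gamma\geq 0$) the optimal $\beta$ really lies in ${\cal B}_p^{(int)}$ so that Corollary \ref{corr:fractional-upper} applies at the minimizer rather than at some boundary point where the exact relation (\ref{m-exact}) might degenerate. This is handled by the remark immediately preceding Eq.~(\ref{gamma=0-upper bound}) (and by Proposition \ref{prop:frac_min} for $\gamma>-1$, which covers all $\gamma\geq 0$ a fortiori), so I would simply cite that. The rest is routine, and I would present it as a short corollary-style proof with the displayed inequality chain above as its content.
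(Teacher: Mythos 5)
Your proposal is correct and follows exactly the paper's own route: Corollary \ref{corr:fractional-upper} is specialized to $\gamma=0$ (where the extra factors collapse by double stochasticity and the interior attainment of the minimum for $\gamma\geq 0$ lets one evaluate at the optimizer), giving Eq.~(\ref{gamma=0-upper bound}), and then the $\gamma$-monotonicity of Proposition \ref{prop:gamma_mon} extends the bound to all $\gamma\geq 0$. Nothing to add.
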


This completes the list of inequalities we were able to derive generalizing the BP and MF inequalities stated in the preceding subsection for the fractional case. These generalizations are valid for any $\gamma\in [0;1]$. Therefore, one may hope to derive somewhat stronger statements reinforcing the continuous family of inequalities with the mononotonicity of the fractional approach stated in Proposition \ref{prop:gamma_mon}.

Indeed, combining Eqs.~(\ref{B1}) with Propositions \ref{prop:gamma_mon},\ref{corr:gamma=>0-upper bound} one arrives at
\begin{proposition}[Special $\gamma_*$]
\label{prop:FractionalOptim}
For any non-negative $p$ there exists a special $\gamma_*\in[-1;0]$, such that $\perm(p)=Z_{o-f}^{(\gamma_*)}(p)$, and the minimal fractional solution upper (lower) bounds the permanent at $0\geq \gamma>\gamma_*$ ($-1\leq\gamma<\gamma_*$).
\end{proposition}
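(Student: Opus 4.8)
\textbf{Proof proposal for Proposition \ref{prop:FractionalOptim}.}

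The plan is to use the Intermediate Value Theorem on the function $\gamma \mapsto Z_{o-f}^{(\gamma)}(p)$, together with the three ingredients already assembled in the excerpt. First I would fix an arbitrary non-negative $p$ (with $\perm(p)>0$), and consider the function $g(\gamma) = Z_{o-f}^{(\gamma)}(p)$ on $[-1;1]$. By Proposition \ref{prop:gamma_mon}, $g$ is continuous and monotonically increasing on $[-1;1]$ (strictly speaking, the continuity is established on $]-1;1]$, but we also know from Theorem \ref{Prop:Perm+BP} and the BP-lower-bound discussion that $Z_{o-f}^{(-1)}(p) = Z_{o-BP}(p)$, so the value at the left endpoint is well-defined and the monotonicity inequality $g(-1) \le g(\gamma)$ for all $\gamma > -1$ still holds). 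So it suffices to locate $\perm(p)$ between $g(-1)$ and $g(0)$ and invoke continuity to produce a crossing point, which will necessarily lie in $[-1;0]$ by monotonicity.

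Next I would pin down the two endpoints relative to $\perm(p)$. For the lower end, Eq.~(\ref{B1}) (the BP lower bound, proven in \citet{11Gur}) gives $\perm(p) \ge Z_{o-BP}(p) = g(-1)$. For the upper end, Corollary \ref{corr:gamma=>0-upper bound} applied at $\gamma = 0$ — equivalently Eq.~(\ref{gamma=0-upper bound}) — gives $\perm(p) \le Z_{o-f}^{(0)}(p) = g(0)$. Hence $g(-1) \le \perm(p) \le g(0)$. Since $g$ is continuous on the interval and these endpoint inequalities hold, the Intermediate Value Theorem yields at least one $\gamma_* \in [-1;0]$ with $g(\gamma_*) = \perm(p)$.

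Finally I would read off the bounding claim from monotonicity. Pick $\gamma_*$ to be, say, the largest such crossing point (or just any one; monotonicity forces the relevant inequalities regardless). For $\gamma$ with $0 \ge \gamma > \gamma_*$, monotonicity of $g$ gives $Z_{o-f}^{(\gamma)}(p) = g(\gamma) \ge g(\gamma_*) = \perm(p)$, i.e.\ the minimal fractional solution is an upper bound there; combined with Corollary \ref{corr:gamma=>0-upper bound} this extends the upper-bound property consistently to all $\gamma \ge \gamma_*$ in $[-1;1]$. Symmetrically, for $-1 \le \gamma < \gamma_*$, monotonicity gives $Z_{o-f}^{(\gamma)}(p) \le g(\gamma_*) = \perm(p)$, so the minimal fractional solution lower-bounds the permanent there, which is also consistent with Eq.~(\ref{B1}) at the $\gamma=-1$ end.

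The one point requiring a little care — the main (mild) obstacle — is the continuity of $g$ at the left endpoint $\gamma = -1$, since Proposition \ref{prop:gamma_mon} as stated only asserts continuity on $]-1;1]$ (the optimal $\beta$ can hit the boundary of ${\cal B}_p$ as $\gamma \downarrow -1$, e.g.\ at a perfect-matching corner). This is not actually needed for the argument: monotonicity alone gives $g(-1) \le \perm(p)$, and the IVT need only be applied on a closed subinterval $[\gamma_0, 0]$ with $\gamma_0 > -1$ where $g(\gamma_0)$ is already $\le \perm(p)$ — such a $\gamma_0$ exists because $g(-1) \le \perm(p)$ and $g$ is right-continuous at points of $]-1;1]$, so one can take $\gamma_0$ slightly above $-1$ with $g(\gamma_0)$ arbitrarily close to $g(-1) \le \perm(p)$; if some such $g(\gamma_0)$ already equals or exceeds $\perm(p)$ we are done immediately with $\gamma_* = \gamma_0$ or a crossing below it. Either way, a crossing point in $[-1;0]$ is produced, and the monotonicity of $g$ delivers the stated one-sided bounds on the two sides of $\gamma_*$.
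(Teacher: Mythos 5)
Your proposal is correct and takes essentially the same route as the paper, which obtains the proposition by combining the BP lower bound of Eq.~(\ref{B1}) at $\gamma=-1$, the $\gamma=0$ upper bound of Eq.~(\ref{gamma=0-upper bound}), and the monotonicity and continuity of Proposition \ref{prop:gamma_mon} in an intermediate-value argument. Your extra care about continuity at the endpoint $\gamma=-1$ addresses a subtlety the paper itself glosses over, but does not change the substance of the argument.
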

Proposition \ref{prop:FractionalOptim} motivates our experimental analysis of the $\gamma_*(p)$ dependence discussed in Section \ref{sec:Experiments}.

Note also that due to the monotonicity stated in Proposition \ref{prop:gamma_mon},  the $\gamma=0$ upper bound on the permanent is tighter than the MF, $\gamma=1$,  upper bound. However, and as discussed in more details in the next subsection, even the $\gamma=0$ upper bound on the permanent is not expected to be tight.

\subsection{Conjectures}
\label{subsec:conject}

It was conjectured in \citet{10Von} that
\begin{eqnarray}
\perm (p)\leq Z_{o-BP}(p)*f(n),
\label{conj1}
\end{eqnarray}
and also that $f\sim \sqrt{n}$.  The second part of the conjecture was disproved by \citet{11Gur} with an explicit counter-example. The inequality in Eq.~(\ref{conj1}) turns into the equality $f(n)=\sqrt{2}^n$ when $p$ is doubly stochastic and block diagonal, with all the elements in the $2\times 2$ blocks equal to $1/2$ \footnote{Note that this special form of the $2\times 2$ block corresponds to the ``double degeneracy" discussed in the paragraph preceding Corollary \ref{cor:BP-uniqness}.}. Then it was conjectured in \citet{11Gur} that
\begin{conjecture}[BP upper bound \cite{11Gur}]
For any non-negative $p$,  $f(n)$ in Eq.~(\ref{conj1}) is $\sim
\sqrt{2}^n$.
\label{conj:Gur}
\end{conjecture}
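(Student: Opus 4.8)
The plan --- this being an open conjecture --- is to reduce the claimed bound $\perm(p)\le Z_{o-BP}(p)\,\sqrt{2}^{\,n}$ to a two‑sided refinement of Schrijver's inequality. Since $\perm(p)$ is polynomial in $p$ and $Z_{o-BP}(p)=\exp\!\big(-\min_{\beta\in{\cal B}_p}F_{BP}(\beta|p)\big)$ depends continuously on $p$, the ratio $\perm(p)/Z_{o-BP}(p)$ is continuous; as the $p$ whose BFE minimizer lies in ${\cal B}_p^{(int)}$ are dense (the boundary situation of Proposition~\ref{prop:part-BP} being non‑generic: a vertex of ${\cal B}_p$ can be the minimizer only when $p$ has zeros, by the $-\infty$ gradient of $\beta_{ij}\log\beta_{ij}$ along alternating cycles), it is enough to prove the bound for such $p$. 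For these, Theorem~\ref{Prop:Perm+BP} gives the exact identity $\perm(p)/Z_{o-BP}(p)=\perm\!\big(\beta.*(1-\beta)\big)\big/\prod_{(i,j)}(1-\beta_{ij})$ with $\beta\in{\cal B}_p^{(int)}$ solving~(\ref{BP2}); and since any interior doubly stochastic $\beta$ (with any positive multipliers) is realized by $p_{ij}=u_iu^j\beta_{ij}(1-\beta_{ij})$, the conjecture is equivalent to the statement that
\[
\prod_{(i,j)}(1-\beta_{ij})\ \le\ \perm\!\big(\beta.*(1-\beta)\big)\ \le\ \sqrt{2}^{\,n}\,\prod_{(i,j)}(1-\beta_{ij})\qquad\text{for every doubly stochastic }\beta,
\]
whose left half is Schrijver's bound~(\ref{Sch-ineq}) and whose right half is tight on the block‑diagonal $\beta$ with $2\times2$ all‑$\tfrac12$ blocks.

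For the new (right‑hand) half the main approach I would take is combinatorial. As both $\perm\!\big(\beta.*(1-\beta)\big)$ and $\prod_{(i,j)}(1-\beta_{ij})$ are multiplicative under block‑diagonal splittings of $\beta$, it suffices to bound $R(\beta):=\perm\!\big(\beta.*(1-\beta)\big)/\prod_{(i,j)}(1-\beta_{ij})$ for \emph{irreducible} doubly stochastic $\beta$ and to establish the per‑dimension estimate $R(\beta)\le\sqrt{2}^{\,k}$ on irreducible $k\times k$ matrices, with $k=2$ extremal; the engine would be a ``peeling'' lemma --- from an irreducible $\beta$ of size $\ge3$ one can split off a $2\times2$ block (or otherwise locally redistribute mass) without decreasing $R$ --- so that the whole conjecture reduces to the trivial $k=2$ computation. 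Two fallback routes: \emph{(B)} mimic Gurvits' argument for the lower bound~(\ref{B1}) in the opposite direction, using that $\beta.*(1-\beta)$ has all row and column sums $\le 1-1/n$ (since $\sum_j\beta_{ij}^2\ge1/n$) together with a Bregman--Minc/Soules‑type upper bound on $\perm\!\big(\beta.*(1-\beta)\big)$, hoping it survives division by $\prod_{(i,j)}(1-\beta_{ij})$ with only a $\sqrt{2}^{\,n}$ overhead; and \emph{(C)}, for calibration, from $\perm(p)=Z_{o-f}^{(\gamma_*)}(p)$ (Proposition~\ref{prop:FractionalOptim}), the identity $F_f^{(-1)}(\beta|p)-F_f^{(\gamma)}(\beta|p)=-(\gamma+1)\sum_{(i,j)}(1-\beta_{ij})\log(1-\beta_{ij})$ at the $\gamma_*$‑minimizer, and $0\le\gamma_*+1\le1$ with the elementary $-\sum_{(i,j)}(1-\beta_{ij})\log(1-\beta_{ij})<n$, one already gets $\perm(p)<e^{\,n}Z_{o-BP}(p)$ --- but passing from $e^{\,n}$ to $\sqrt{2}^{\,n}=e^{(n/2)\log2}$ requires replacing the crude bound $n$ by $\tfrac n2\log2$, which only double stochasticity plus the structure of the minimizer can supply; note too that the $\gamma=0$ surrogate cannot be substituted for $\perm(p)$ here, since $Z_{o-f}^{(0)}/Z_{o-BP}$ already reaches $2^n$ on the same block‑diagonal example.

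The main obstacle is that the target is an \emph{extremality} statement with the wrong convexity: $R(\beta)$ is maximized at a fully fragmented $\beta$ --- a disjoint union of $2\times2$ all‑$\tfrac12$ blocks --- rather than at a vertex of the Birkhoff polytope or at its barycenter, so $R$ is monotone along no obvious family of lines, and the generalized‑loop expansion of $\perm\!\big(\beta.*(1-\beta)\big)/\prod_{(i,j)}(1-\beta_{ij})$ admits no evident termwise control. I expect the decisive ingredient to be precisely the peeling/merging lemma of route~(A), or equivalently the hyperbolic‑polynomial inequality of route~(B) that upgrades Schrijver's one‑sided bound into a two‑sided one; the remaining pieces --- the continuity/density reduction to interior BP, the reformulation as two‑sided Schrijver, and the $k=2$ base case --- are routine. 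As a concrete first step I would optimize $R(\beta)$ numerically for $n=3,4$ to confirm that $2\times2$‑block configurations are the unique maximizers and to read off the correct local move, and I would separately make the density/continuity reduction fully rigorous so that the two‑sided Schrijver statement genuinely suffices.
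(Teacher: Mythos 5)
First, a point of calibration: the statement you were asked to prove is stated in the paper as an open conjecture (attributed to Gurvits), and the paper contains no proof of it --- so there is no ``paper's proof'' to match, and an honest attack plan is the most one could offer. Within that frame, the strongest part of your proposal is the reformulation: using Theorem~\ref{Prop:Perm+BP} and the realizability of any interior doubly stochastic $\beta$ via $p_{ij}=u_iu^j\beta_{ij}(1-\beta_{ij})$, the interior case of Eq.~(\ref{conj1}) with $f(n)=\sqrt{2}^{\,n}$ is indeed equivalent to the upper half of your two-sided Schrijver inequality, which is the natural companion of Eq.~(\ref{Sch-ineq}) and is close in spirit to the paper's Conjecture~\ref{conj:Gur2}. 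That reduction is correct and useful. But you do not prove the decisive inequality $\perm(\beta.*(1-\beta))\le\sqrt{2}^{\,n}\prod_{(i,j)}(1-\beta_{ij})$: the ``peeling'' lemma of route (A) and the Bregman--Minc-type argument of route (B) are named but not carried out, and your route (C) concededly only yields $e^{n}$. So the proposal is a research program, not a proof.

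There is also a concrete error in the one step you do claim to be routine. You assert that the set of $p$ whose BFE minimizer is interior is dense, because a vertex of ${\cal B}_p$ can allegedly be the minimizer only when $p$ has zeros. This is false, and the paper itself supplies the counterexample: for the strictly positive matrix of Eq.~(\ref{p-det}) with $T<\log w/\log(n-1)$ the interior BP solution does not exist and the Bethe minimum sits at the perfect-matching vertex; this persists on an open set of strictly positive $p$. The reason your gradient heuristic fails is that along a ray into the polytope from a perfect-matching vertex the $-\infty$ contribution of the $\beta_{ij}\log\beta_{ij}$ terms (for the $\beta_{ij}\downarrow 0$ entries) is exactly cancelled at leading order by the $+\infty$ contribution of the $-(1-\beta_{ij})\log(1-\beta_{ij})$ terms (for the $\beta_{ij}\uparrow 1$ entries), so the finite-order energy term decides and can pin the minimum at the vertex. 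Consequently the boundary case cannot be dismissed by density: for fully resolved minimizers the conjecture becomes $\perm(p)\le\sqrt{2}^{\,n}\max_\sigma\prod_i p_{i\sigma(i)}$ on that open set, which is a genuinely separate statement not implied by your two-sided Schrijver inequality (and the block-splitting trick the paper uses in Appendix~\ref{sec:BP_low} for the lower bound goes the wrong way for an upper bound, since dropping cross-terms only decreases the permanent). Any complete attack must handle this regime explicitly.
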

Another related (but not identical) conjecture of \citet{11Gur} is as follows:
\begin{conjecture}
\label{conj:Gur2}
The following inequality holds for any doubly stochastic $n\times n$ matrix $\phi$:
\begin{eqnarray}
\perm(\phi)\leq \sqrt{2}^n\prod_{(i,j)}(1-\phi_{ij})^{(1-\phi_{ij})}.
\label{conj2}
\end{eqnarray}
\end{conjecture}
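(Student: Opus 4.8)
Since (\ref{conj2}) is tight — equality holds for a direct sum of $n/2$ copies of the $2\times2$ matrix with every entry $\tfrac12$ — the first thing I would do is reduce to fully indecomposable $\phi$, then recast the inequality in an additive form, and finally attempt to push the permanental identities of this paper through it. Both sides of (\ref{conj2}) are multiplicative under direct sums: for $\phi=\phi'\oplus\phi''$, $\perm(\phi)=\perm(\phi')\perm(\phi'')$, the dimension adds, and the off-block entries equal $0$ and contribute $(1-0)^{1-0}=1$ to $\prod_{(i,j)}(1-\phi_{ij})^{1-\phi_{ij}}$. If moreover some $\phi_{ij}=1$, then row $i$ and column $j$ are resolved (all their other entries vanish); deleting them leaves $\perm$ unchanged, reduces $n$ by one, multiplies the product by $1$, and since $\sqrt2^{\,n-1}\le\sqrt2^{\,n}$ the smaller instance implies the larger. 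So it suffices to treat a fully indecomposable $\phi$ with all support entries in $(0,1)$, where I expect strict inequality, the bound being approached only as $\phi$ degenerates (some entries $\to0$) toward a direct sum of $2\times2$ all-$\tfrac12$ blocks.

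\textbf{Additive reformulation and the levers I would use.} Set $D(\phi):=\log\perm(\phi)-\sum_{(i,j)}(1-\phi_{ij})\log(1-\phi_{ij})$, so that (\ref{conj2}) becomes exactly $D(\phi)\le(n/2)\log2$. By the previous paragraph $D$ is additive over direct sums, vanishes on $1\times1$ blocks, equals $\log2$ on the $2\times2$ all-$\tfrac12$ block, and its extremizers are therefore direct sums of such blocks. I would try to prove $D(\phi)\le(n/2)\log2$ for fully indecomposable $\phi$ by a compression/shifting argument: exhibit local perturbations inside ${\cal B}_\phi$ that move mass toward turning a $2\times2$ minor into an all-$\tfrac12$ block without decreasing $D$, and classify the critical configurations. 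The main levers would be the convexity of the Bethe functional (Proposition \ref{prop:BP-convexity}), which governs the variation of $Z_{o-BP}$ under such moves, and the loop-calculus identity (Theorem \ref{Prop:Perm+BP}), which at the interior BP fixed point $\beta$ of $\phi$ reads $\perm(\phi)=Z_{o-BP}(\phi)\,\perm(\beta.*(1-\beta))\prod_{(i,j)}(1-\beta_{ij})^{-1}$ and isolates the purely combinatorial factor $\perm(\beta.*(1-\beta))$; on that factor I would aim at the sharp sub-bound $\perm(\beta.*(1-\beta))\le\sqrt2^{\,n}\prod_{(i,j)}(1-\beta_{ij})$ (itself tight on the $2\times2$-block family) together with a matching estimate relating $Z_{o-BP}(\phi)$ to $\prod_{(i,j)}(1-\phi_{ij})^{1-\phi_{ij}}$.

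\textbf{Where it will be hard.} The obstacle is closing this last step. Corollary \ref{cor:Gurvits} only supplies $Z_{o-BP}(\phi)\ge\prod_{(i,j)}(1-\phi_{ij})^{1-\phi_{ij}}$ — the wrong direction for decoupling that factor from the target product at a single $\gamma$ — so I would be forced to decouple along the whole fractional path of Theorem \ref{theorem:m-exact}, tracking $\beta^{(\gamma)}$ and $\perm(\phi)=Z_f^{(\gamma)}(\beta|\phi)\,\perm(\beta.*(1-\beta).^{-\gamma})\prod_{(i,j)}(1-\beta_{ij})^{\gamma}$ simultaneously, and I do not see how to carry that through. The alternative — a Bregman--Minc-style induction expanding $\perm(\phi)$ along a row — destroys double stochasticity, so the inductive step would need a doubly stochastic scaling/capacity argument of Gurvits--Falikman--Egorychev type, and such arguments naturally yield constants like $e$ or $2$ rather than the sharp $\sqrt2$ that Step 1 shows is non-negotiable. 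Concrete milestones that would unlock the conjecture: (i) prove $\perm(\beta.*(1-\beta))\le\sqrt2^{\,n}\prod_{(i,j)}(1-\beta_{ij})$ for every doubly stochastic $\beta$; and (ii) prove the strict decrease of $D$ under the ``de-pairing'' perturbations above.
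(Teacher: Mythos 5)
The statement you were asked to prove is Conjecture~\ref{conj:Gur2}, an open conjecture attributed to \citet{11Gur}; the paper supplies no proof of it, only the observation that equality holds on the block-diagonal matrix of all-$\tfrac12$ $2\times2$ blocks and that a proof would yield, via \citet{98LSW}, a deterministic $\sqrt2^{\,n}$-factor approximation algorithm for the permanent. So there is no in-paper argument to compare against, and your submission is, by your own account, a plan rather than a proof. The preliminary reductions you carry out are correct and worth keeping: multiplicativity of both sides under direct sums, the harmless deletion of a row/column containing an entry equal to $1$, the verification of the equality case, and the additive reformulation $D(\phi)\le(n/2)\log 2$. None of this touches the core difficulty.

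The genuine gap is that your central milestone (i), $\perm(\beta.*(1-\beta))\le\sqrt2^{\,n}\prod_{(i,j)}(1-\beta_{ij})$, is not a tractable sub-lemma but is itself equivalent (via Theorem~\ref{Prop:Perm+BP}, which gives $\perm(p)/Z_{o-BP}(p)=\perm(\beta.*(1-\beta))/\prod_{(i,j)}(1-\beta_{ij})$ at an interior BP fixed point) to Conjecture~\ref{conj:Gur}, the other open conjecture of \citet{11Gur} stated just above, which the paper explicitly notes is related to but \emph{not identical} with (\ref{conj2}). Even granting (i), passing from Conjecture~\ref{conj:Gur} to Conjecture~\ref{conj:Gur2} would require the upper bound $Z_{o-BP}(\phi)\le\prod_{(i,j)}(1-\phi_{ij})^{1-\phi_{ij}}$, which is exactly the reverse of the only proved relation between these two quantities (Corollary~\ref{cor:Gurvits}), a difficulty you correctly flag but do not resolve. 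The ``de-pairing'' compression argument in your second step is likewise only a proposal: you neither exhibit the local perturbations nor show $D$ is monotone along them, and since $D$ involves $\log\perm(\phi)$ directly rather than the Bethe functional, the convexity of Proposition~\ref{prop:BP-convexity} does not obviously control its variation. In short, your reductions are sound, but the proposal replaces one open conjecture with another open conjecture plus an unproved (and currently oppositely-known) inequality, so no part of the conjecture is actually established.
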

Note that if Eq.~(\ref{conj2}) is true it implies according to
\citet{98LSW} a deterministic polynomial-time algorithm to approximate the permanent of $n\times n$ nonnegative matrices within the relative factor $\sqrt{2}^n$.

It can be verified directly that the special matrix (with ``doubly degenerate" blocks) for which the condition (\ref{conj2}) is achieved (i.e., inequality is turned into equality), and it also results
in $Z_f^{(\gamma)}$ with $\gamma=-1/2$ on the right-hand side  of Eq.~(\ref{conj2}).  Therefore
one reformulates Conjecture \ref{conj:Gur} as
\begin{conjecture}
\label{half-conj}
The following inequality holds for any non-negative $p$
\label{conj:upper}
$$
\perm(p)\leq Z_{o-f}^{\gamma=-1/2}(p).
$$
\end{conjecture}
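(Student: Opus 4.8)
The plan is to reduce the claimed bound $\perm(p)\le Z_{o-f}^{\gamma=-1/2}(p)$ to a statement purely about doubly stochastic matrices, namely Conjecture \ref{conj:Gur2}, and then to identify the right-hand side of that conjecture with the value of the fractional free energy at $\gamma=-1/2$. First I would note that, since $Z_{o-f}^{(\gamma)}(p)$ is by Proposition \ref{prop:gamma_mon} a monotone continuous function of $\gamma$ and since $\gamma=-1/2$ sits strictly inside $]-1;0[$, Proposition \ref{prop:frac_min} guarantees that for non-degenerate $p$ the minimizing $\beta$ is interior, $\beta\in{\cal B}_p^{(int)}$, so Theorem \ref{theorem:m-exact} applies with $\gamma=-1/2$ and gives the exact identity
\begin{equation*}
\perm(p)=Z_f^{(-1/2)}(\beta|p)\,\perm\!\left(\frac{\beta.}{(1-\beta).^{-1/2}}\right)\prod_{(i,j)}(1-\beta_{ij})^{-1/2}.
\end{equation*}
Here $\beta$ is the optimal doubly stochastic matrix, so $Z_f^{(-1/2)}(\beta|p)=Z_{o-f}^{-1/2}(p)$. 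Thus the desired inequality is equivalent to
\begin{equation*}
\perm\!\left(\beta.*(1-\beta).^{1/2}\right)\le \prod_{(i,j)}(1-\beta_{ij})^{1/2},
\end{equation*}
a statement about the single doubly stochastic matrix $\beta$ alone (the matrix $\beta.*(1-\beta).^{1/2}$ having entries $\beta_{ij}\sqrt{1-\beta_{ij}}$), with no reference to $p$.

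Next I would recast this in the form of Conjecture \ref{conj:Gur2}. Taking logarithms, the inequality above reads $\log\perm(\beta.*(1-\beta).^{1/2})\le \tfrac12\sum_{(i,j)}\log(1-\beta_{ij})$. The plan is to compare this with the $\gamma=-1/2$ instance of the general fractional lower bound Proposition \ref{prop:fractional-low}, which for an arbitrary doubly stochastic matrix $\phi$ gives $\perm(\phi.*(1-\phi).^{1/2})\ge \tfrac{n!}{n^n}\prod_{(i,j)}(1-\phi_{ij})^{\phi_{ij}/2}$ — this is the matching lower bound, and equality in the van der Waerden step forces $\phi$ to be the special block-diagonal matrix. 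What remains is the upper direction. I would invoke Corollary \ref{corr:fractional-upper} at $\gamma=-1/2$ — which, after the same cancellation used above, states $\perm(\beta.*(1-\beta).^{1/2})\le \prod_j\sum_i \beta_{ij}(1-\beta_{ij})^{1/2}$ — and then the task is to show that this product of row/column sums is itself bounded by $\prod_{(i,j)}(1-\beta_{ij})^{1/2}$, or more likely by $\sqrt{2}^{\,n}$ times a smaller quantity; this is exactly where the factor $\sqrt 2^{\,n}$ of Conjecture \ref{conj:Gur} must enter, so I would check the block-diagonal example first to see the equality case ($\beta_{ij}=1/2$ gives $\sum_i\beta_{ij}\sqrt{1-\beta_{ij}}=\tfrac12\cdot\tfrac{1}{\sqrt2}\cdot 2=\tfrac{1}{\sqrt2}$, and indeed $\prod(1-\beta_{ij})^{1/2}$ over a $2\times2$ block is $1/2$, consistent with the $\sqrt2^n$ slack).

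The hard part — and I do not expect to be able to remove it — is precisely the global, permanent-level inequality $\perm(\phi)\le \sqrt2^{\,n}\prod_{(i,j)}(1-\phi_{ij})^{(1-\phi_{ij})}$ for all doubly stochastic $\phi$, i.e. Conjecture \ref{conj:Gur2} itself. The naive bounds available here (the trivial $\perm(\phi)\le\prod_j\sum_i\phi_{ij}=1$ and the Schrijver-type inequality (\ref{Sch-ineq}) which goes the wrong way) are too weak, and the sharp van der Waerden / Gurvits-type lower bounds cannot be reversed without loss. A capacity/Gurvits-polynomial argument in the style of \citet{08Gur,11Gur} — bounding $\perm$ above by the capacity of the product polynomial $\prod_i(\sum_j \phi_{ij}\,x_j(1-\phi_{ij})^{-1/2}+(1-\phi_{ij})^{1/2})$ or similar — seems the only plausible route, but making the constant come out to exactly $\sqrt2^{\,n}$ rather than something larger is delicate; this is why the statement is offered as a conjecture rather than a theorem. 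So in summary: the reduction to a $p$-free statement is routine via Theorem \ref{theorem:m-exact}; the genuine mathematical content is the sharp upper bound on $\perm(\phi)$ for doubly stochastic $\phi$, which coincides with the open Conjecture \ref{conj:Gur2}, and I would not claim to prove it here.
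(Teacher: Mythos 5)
The statement you were asked to prove is labeled a \emph{conjecture} in the paper, and the paper offers no proof of it: it is introduced only as a reformulation of Gurvits's Conjecture~\ref{conj:Gur}, motivated by the observation that the doubly degenerate block-diagonal matrix realizes equality in Eq.~(\ref{conj2}) and corresponds to $\gamma=-1/2$. Your treatment is therefore appropriate, and in fact somewhat more explicit than the paper's: the reduction via Theorem~\ref{theorem:m-exact} at $\gamma=-1/2$ (using Proposition~\ref{prop:frac_min} to guarantee an interior minimizer for non-degenerate $p$) to the $p$-free inequality $\perm\bigl(\beta.*(1-\beta).^{1/2}\bigr)\le \prod_{(i,j)}(1-\beta_{ij})^{1/2}$ for doubly stochastic $\beta$ is correct, and your identification of the residual content with the open permanental upper bound of Conjecture~\ref{conj:Gur2} matches the paper's intent (the paper itself only asserts the two are ``related,'' not rigorously equivalent, since $\beta.*(1-\beta).^{1/2}$ is not doubly stochastic --- a point you implicitly acknowledge when you note the $\sqrt{2}^{\,n}$ bookkeeping must still be arranged). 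One small arithmetic slip: for the $2\times 2$ block with all entries $1/2$, the product $\prod_{(i,j)}(1-\beta_{ij})^{1/2}$ equals $1/4$, not $1/2$ (the value $1/2$ is the trivial column-sum bound $\prod_j\sum_i\beta_{ij}\sqrt{1-\beta_{ij}}$); the equality case still checks out, since $\perm\bigl(\beta.*(1-\beta).^{1/2}\bigr)=1/4$ as well. You rightly decline to claim a proof; there is no gap to report, because neither you nor the paper proves the statement.
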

We refer an interested reader to \citet{11Von} for discussion of some other conjectures related to permanents.

\section{Experiments}

\label{sec:Experiments}

We experiment with deterministic and random (drawn from an ensemble) non-negative matrices.

Our simple deterministic example is of the matrices with elements taking two different values such that all the diagonal  and all the off-diagonal elements are the same \cite{10WC}.

In our experiments with stochastic matrices we consider the following four different ensembles
\begin{itemize}
\item {\it $(\lambda_{\mathrm{in}},\lambda_{\mathrm{out}})$:} Ensemble of matrices motivated by \citet{08CKV,10CKKVZ} and corresponding to a mapping between two consecutive images in 2d flows parameterized  by the vector $\lambda=(a,b,c,\kappa)$, where $\kappa$ is the diffusion coefficient and $(a,b,c)$ stand for the three parameters of the velocity gradient tensor (stretching, shear and rotation, respectively -- see \citet{10CKKVZ} for details).  In generating such a matrix $p$ we need to construct two sets of $\lambda$ parameters. The first one, $\lambda_{\mathrm{in}}$, is used to generate an instance of particle positions in the second image, assuming that particles are distributed uniformly at random in the first image. The second one, $\lambda_{\mathrm{out}}$, corresponds to an instance of the guessed values of the parameters in the learning problem, where computation of the permanent is an auxiliary step. (Actual optimal learning consists in computing the maximum of the permanent over $\lambda_{\mathrm{out}}$.) In our simulations we test the quality of the permanent approximations in the special case, when $\lambda_{\mathrm{in}}=\lambda_{\mathrm{out}}$,  and also in other cases when the guessed values of the parameters do not coincide with the input ones, $\lambda_{\mathrm{in}}\neq\lambda_{\mathrm{out}}$.

\item {\it $[0;\rho]$-uniform:} In this case one generates elements of the matrix independently at random and distributed uniformly within the $[0;\rho]$-range.

\item {\it $\delta$-exponential:} In this case one generates elements of the matrix independently at random. Any element is an exponentially distributed random variable with mean $\delta$.

\item {\it $[0;\rho]$-shifted:} We generate the block diagonal matrix with $\left(\begin{array}{cc} 1/2 & 1/2 \\ 1/2 & 1/2\end{array}\right)$ blocks and add independent random and uniformly distributed in the $[0;\rho]$ interval components to all elements of the matrix. The choice of this ensemble is motivated by the special role played by the (doubly degenerate) block-diagonal matrix in the Gurvits conjecture discussed in Section \ref{subsec:conject}.
\end{itemize}

To make the task of the exact computation of the permanent of a random matrix tractable we consider sparsified versions of the ensembles defined above. To achieve this goal we either prune full matrix from the bare (i.e., not yet pruned) ensemble, according to the procedure explained in Appendix \ref{sec:pruning},  or in the case of the $[0;\rho]$-uniform ensemble we first generate a sufficiently sparse sub-graph of the fully connected bipartite graph (for example picking a random subgraph of fixed $O(1)$ degree) and then generate nonzero elements corresponding only to edges of the sub-graph.

\subsection{Deterministic Example}

\begin{figure}
\centering
\subfigure[$\log(\perm(p)/\mbox{Right hand side of Eq.~(\ref{new-lower-bound})})$ vs. $T$ at $n=20$, $w=2$ and different values of $\gamma$]{\includegraphics[width=0.45\textwidth]{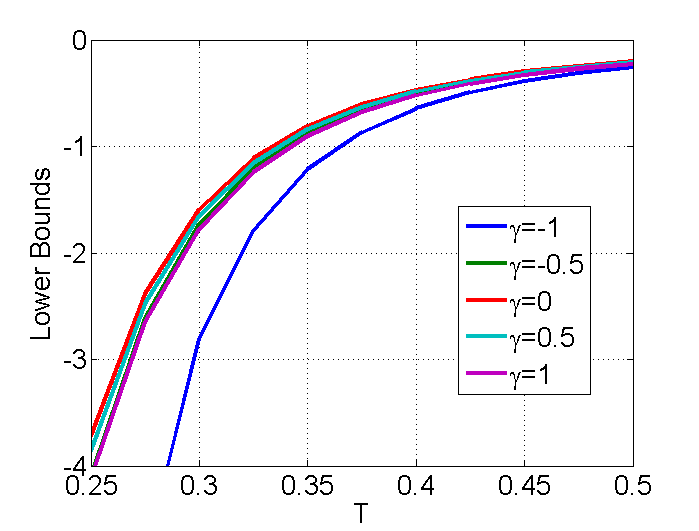}}
\subfigure[$\gamma_*$ vs. $n$ at different values of $T$ and $w=2$]{\includegraphics[width=0.5\textwidth]{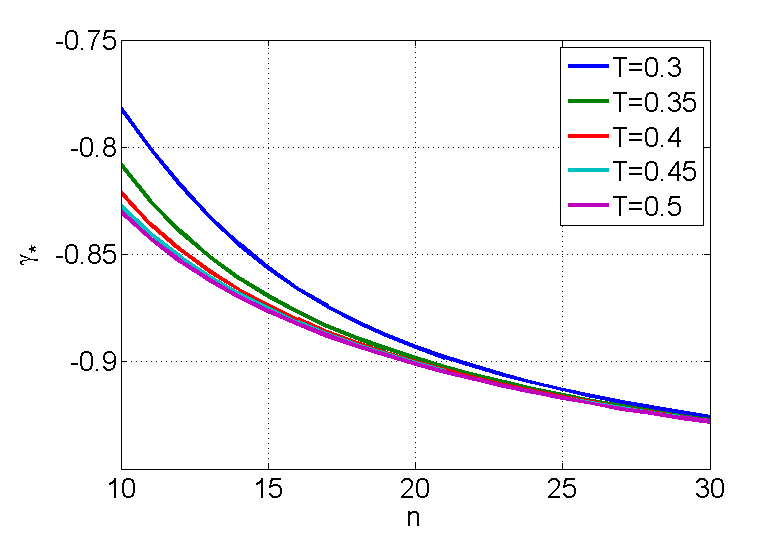}}
\caption{Illustration for the case of the deterministic matrix (\ref{p-det}). Fig (a) shows the gap between the exact permanent and its lower bound estimate by Eq.~(\ref{corr:fractional-low}). Fig. (b) shows dependence of the special $\gamma_*$ on the parameters.}
\label{fig:det_example}
\end{figure}

We consider a simple example which was already discussed in \citet{10WC}. The permanent of the matrix $p$ with elements
\begin{equation}
p_{ij}=\left\{\begin{array}{cc}
w^{1/T},& i=j\\
1,& i\neq j
\end{array}\right.,
\label{p-det}
\end{equation}
where $w>1$ and $T>0$, can be evaluated through the recursion,
$$\sum_{k=0}^n W^{(n-k)/T}\left(\begin{array}{c} n\\ k \end{array}\right)D_k, $$
$D_0=1$, $D_1=0$, and $\forall k\geq 2, D_k=(k-1)(D_{k-1}+D_{k-2})$. On the other hand, seeking for solution of the fractional Eqs.~(\ref{m3}) in the form of a doubly stochastic $\beta$, where
\begin{equation}
\beta_{ij}=\left\{\begin{array}{cc}
1-\epsilon(n-1),& i=j\\
\epsilon,& i\neq j
\end{array}\right.,
\label{eps-eq}
\end{equation}
one finds that $\epsilon$ should satisfy the following transcendental equation,
$$(1-\epsilon(n-1))(1-\epsilon)^\gamma=w^{1/T}(n-1)^\gamma\epsilon^{1+\gamma}.$$ At $T\to\infty$ this equation has a unique uniform, $\epsilon\to 1/(n-1)$, solution. An interior, $\epsilon>0$, solution of Eq.~(\ref{eps-eq}) exists, and it is also unique, at any finite $T$ for $\gamma>-1$. According to \citep{10WC}, the interior solution does not exists at $\gamma=-1$ and $T<\log\omega/\log(n-1)$.

To test the gap between the exact expression for the permanent and the fractional lower bound of Corollary \ref{corr:fractional-low}, we fix $w=2, n=20$ and vary the temperature parameter, $T$. The results are shown in Fig.~\ref{fig:det_example}a. One finds that the gap depends on $\gamma$ with $\gamma=0$ resulting in the best lower bound for all the tested temperatures.  One also observes that the $\gamma$-dependence of the gap decreases with increase in $T$. Fig.~\ref{fig:det_example}b shows dependence of the special $\gamma_*$, defined in Proposition \ref{prop:FractionalOptim}, on $n$ and $T$ at $w=2$. One finds that, consistently with the Conjecture \ref{half-conj}, $\gamma_*$ is always smaller than $-1/2$ and it also decreases with increase in either $n$ or $T$.

\subsection{Random Matrices. Special $\gamma_*$.}

We search for the special $\gamma=\gamma_*$, defined in Proposition \ref{prop:FractionalOptim}, by calculating the permanent of a full matrix, $p$,
of size $n \times n$, with $n = 3, \ldots, 14$, and of a pruned matrix with $n=10,\ldots,40$, and then comparing it with the fractional value $Z_{f}^{(\gamma)}(\beta,p)$ \footnote{In the following we will use the shorter notation, $Z_f^{(\gamma)}$ for this object.}, where the doubly stochastic $\beta$ solves Eqs.~(\ref{m3}) for given $p$, for different $\gamma$. By repeatedly evaluating the fractional approximation for different values of $\gamma$ and then taking advantage of the $Z_f^{(\gamma)}$ monotonicity and continuity with respect to $\gamma$ and performing a search we find the special $\gamma$ for a specific $p$.

In general we observed that the special $\gamma_*$ for tested matrices was always less than or equal to $-1/2$, which is consistent with Conjecture \ref{conj:upper}. We also observed, estimating or extrapolating the approximate value of the special $\gamma_*$ for a given matrix, that it might be possible to estimate the permanent of a matrix efficiently and very accurately for some ensembles.

\subsubsection{The $(\lambda_{\mathrm{in}},\lambda_{\mathrm{out}})$ ensemble}

\begin{figure}
\centering
\subfigure[$\lambda_{\mathrm{in}}=$$ \lambda_{\mathrm{out}}=(1,1,1,1)/2$]{\includegraphics[width=0.3\textwidth]{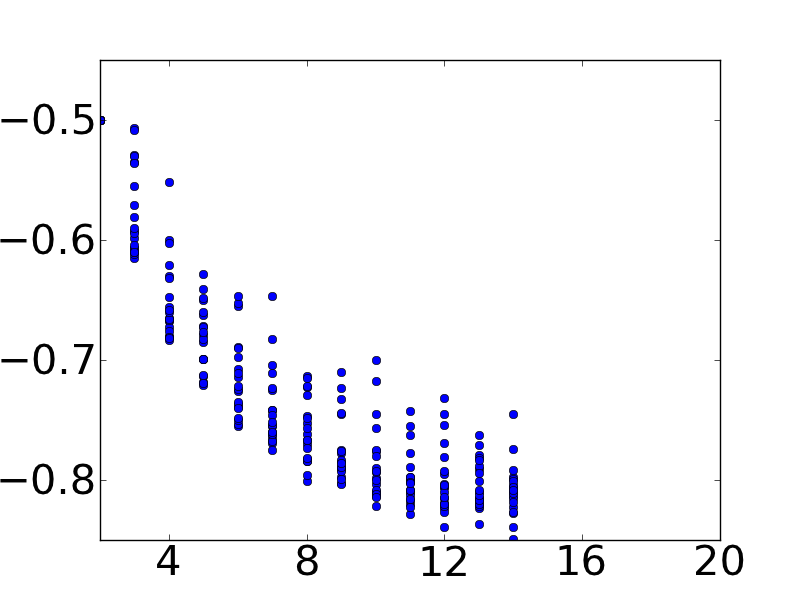}}
\subfigure[$\lambda_{\mathrm{in}}=$$
\lambda_{\mathrm{out}}=(2,2,2,1/2)$]{\includegraphics[width=0.3\textwidth]{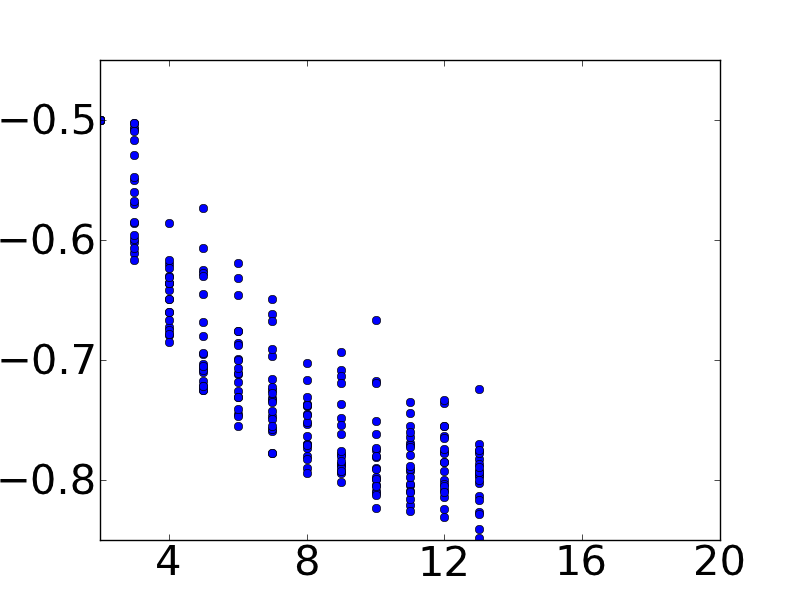}}\\
\subfigure[$\lambda_{\mathrm{in}}=$$
\lambda_{\mathrm{out}}=(0,0,0,1)$]{\includegraphics[width=0.3\textwidth]{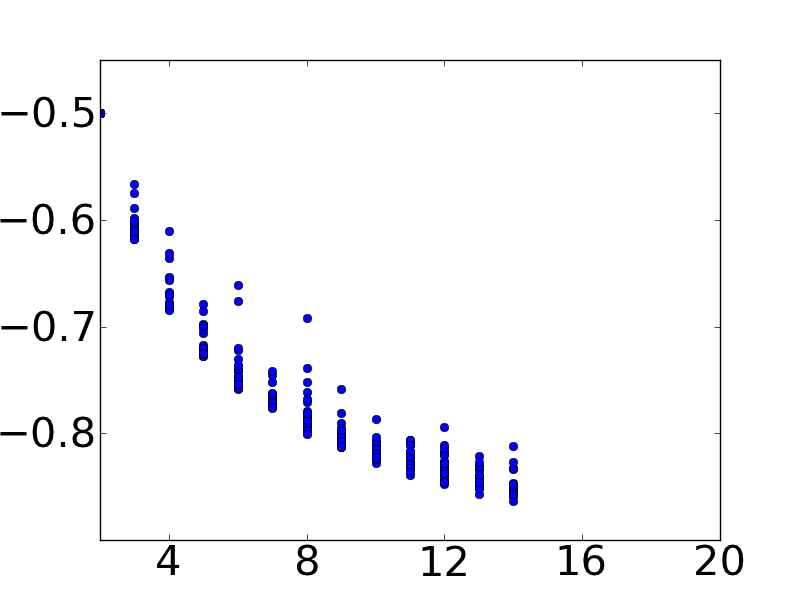}}
\subfigure[$\lambda_{\mathrm{in}}=$$
\lambda_{\mathrm{out}}=(1,1,1,1)$]{\includegraphics[width=0.3\textwidth]{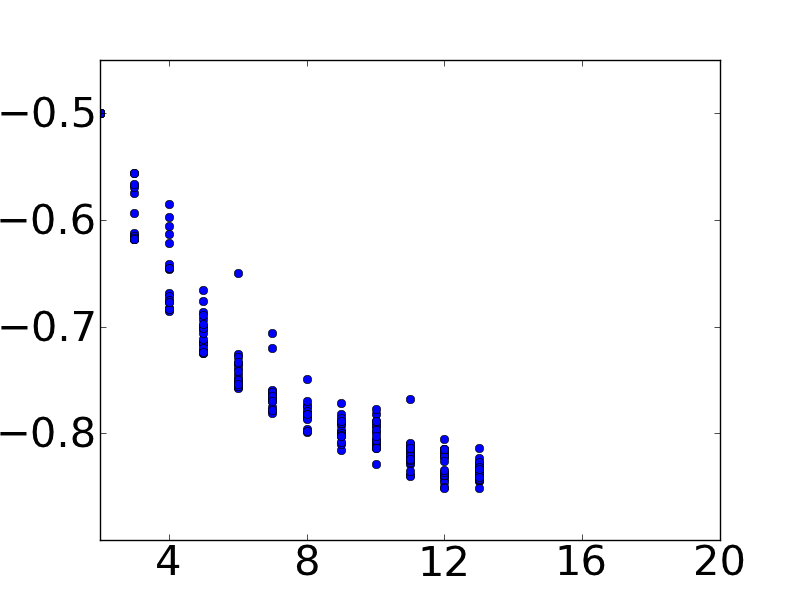}}
\subfigure[$\lambda_{\mathrm{in}}=$$
\lambda_{\mathrm{out}}=(2,2,2,1)$]{\includegraphics[width=0.3\textwidth]{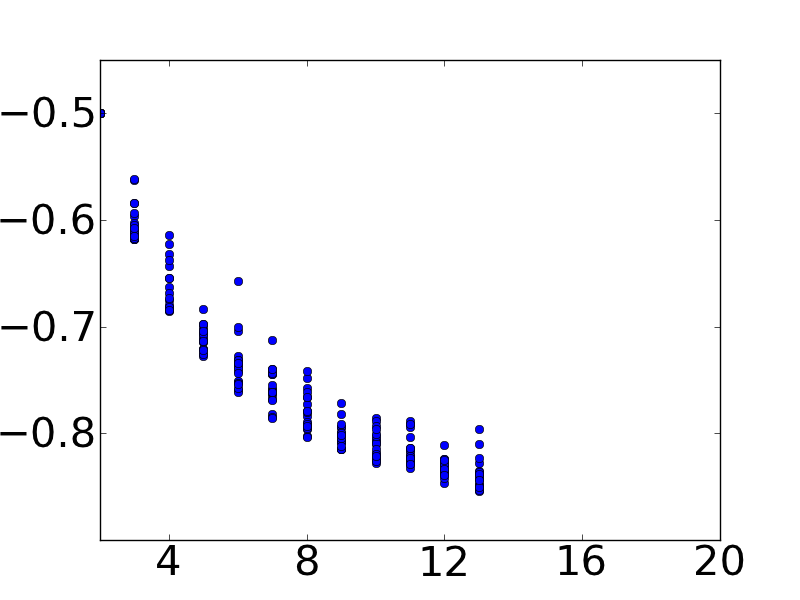}}\\
\subfigure[$\lambda_{\mathrm{in}}=$$
\lambda_{\mathrm{out}}=(1,1,1,1)/4$]{\includegraphics[width=0.3\textwidth]{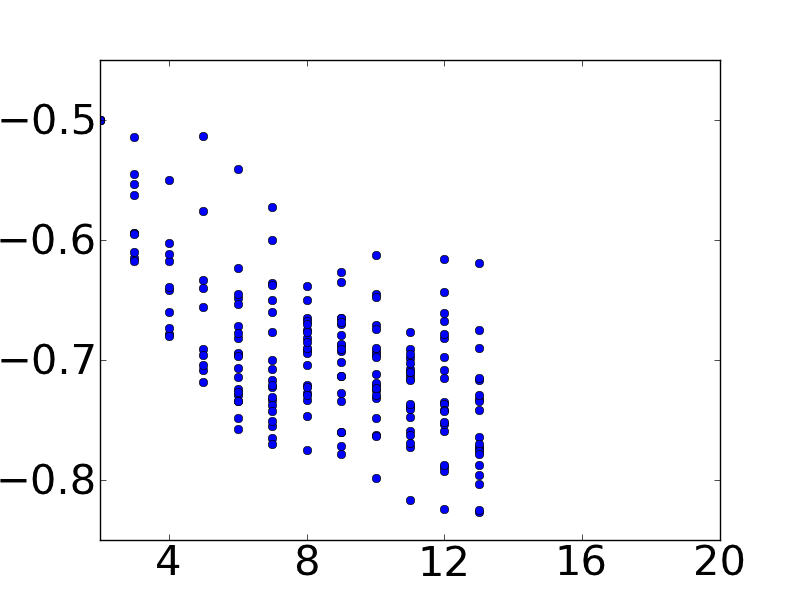}}
\subfigure[$\lambda_{\mathrm{in}}=$$
\lambda_{\mathrm{out}}=(0,0,0,1/10)$]{\includegraphics[width=0.3\textwidth]{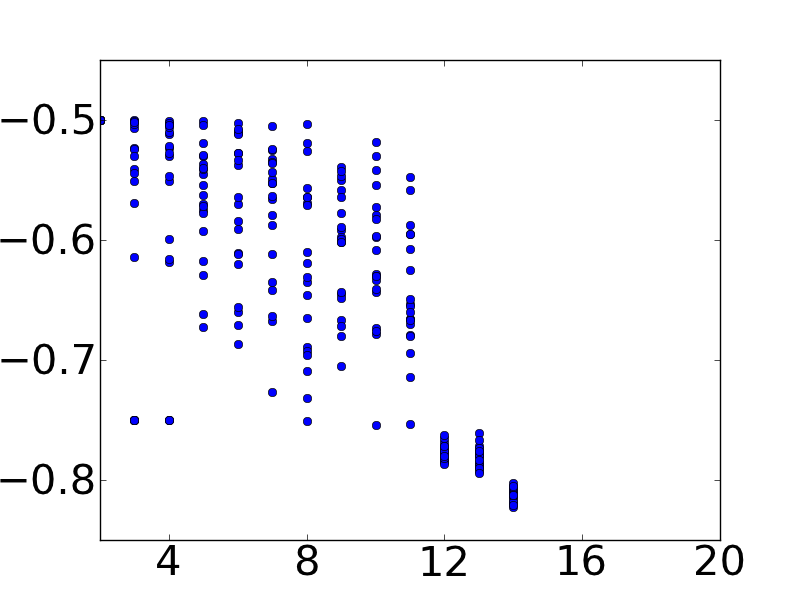}}
\subfigure[$\lambda_{\mathrm{in}}=$$
\lambda_{\mathrm{out}}=(0,0,0,2)$]{\includegraphics[width=0.3\textwidth]{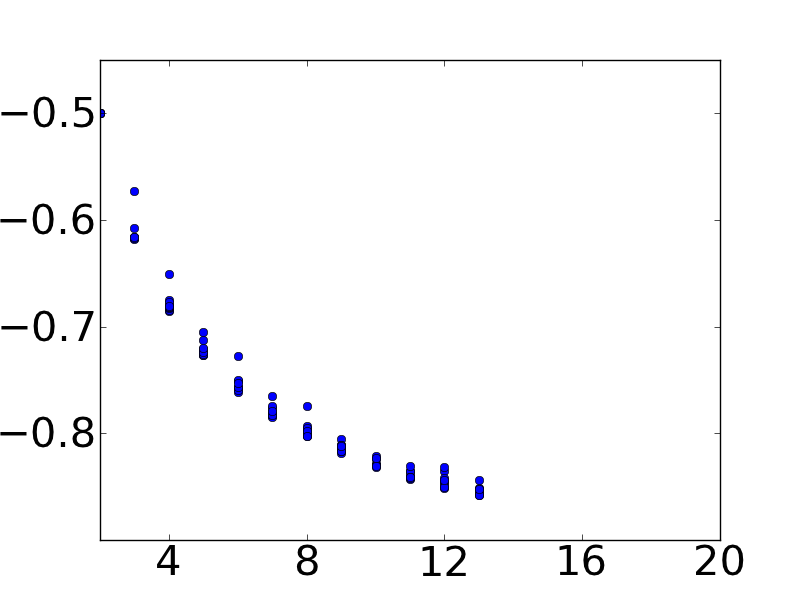}}\\
\subfigure[$\lambda_{\mathrm{in}} = (1,1,1,1)$,\hspace{1cm}
$\lambda_{\mathrm{out}}=(0,0,0,1)$]{\includegraphics[width=0.3\textwidth]{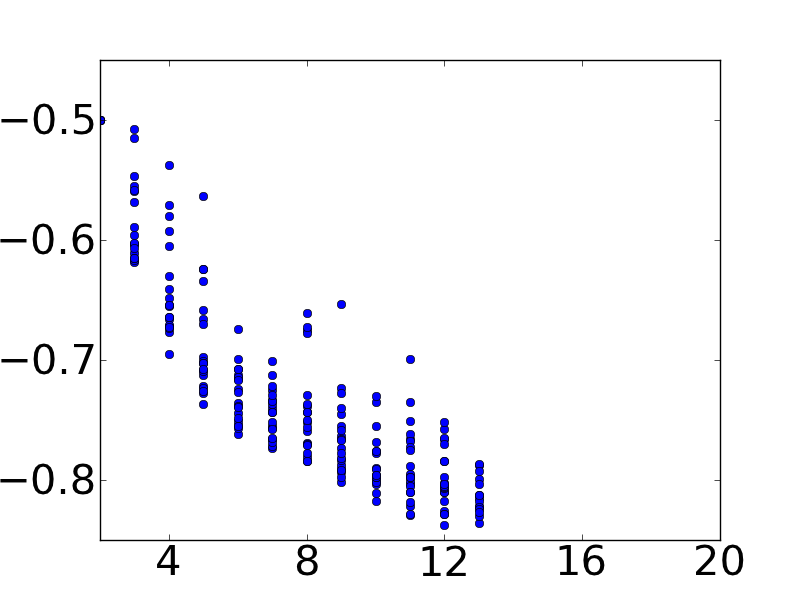}}
\subfigure[$\lambda_{\mathrm{in}} = (0,0,0,2)$,\hspace{1cm} $\lambda_{\mathrm{out}}=(0,0,0,1)$]{\includegraphics[width=0.3\textwidth]{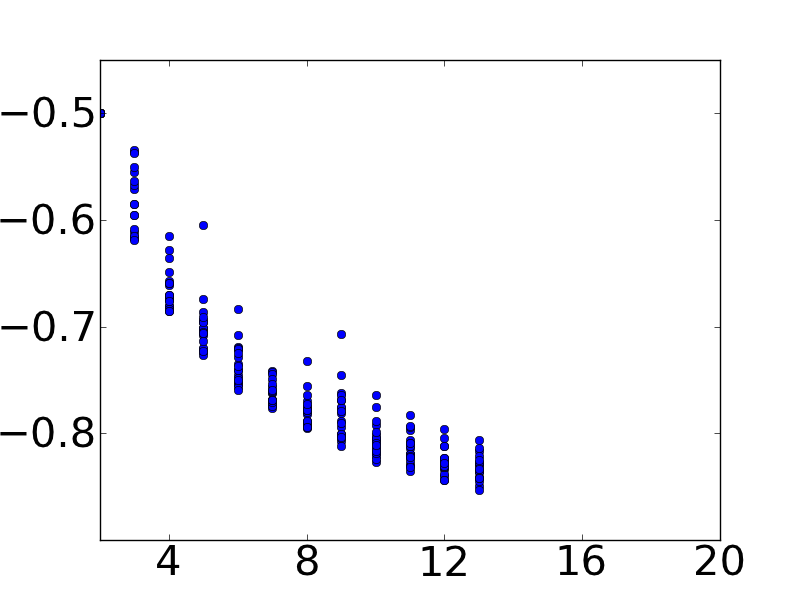}}
\subfigure[$\lambda_{\mathrm{in}} = (0,0,0,1)$,\hspace{1cm} $\lambda_{\mathrm{out}}=(0,0,0,2)$]{\includegraphics[width=0.3\textwidth]{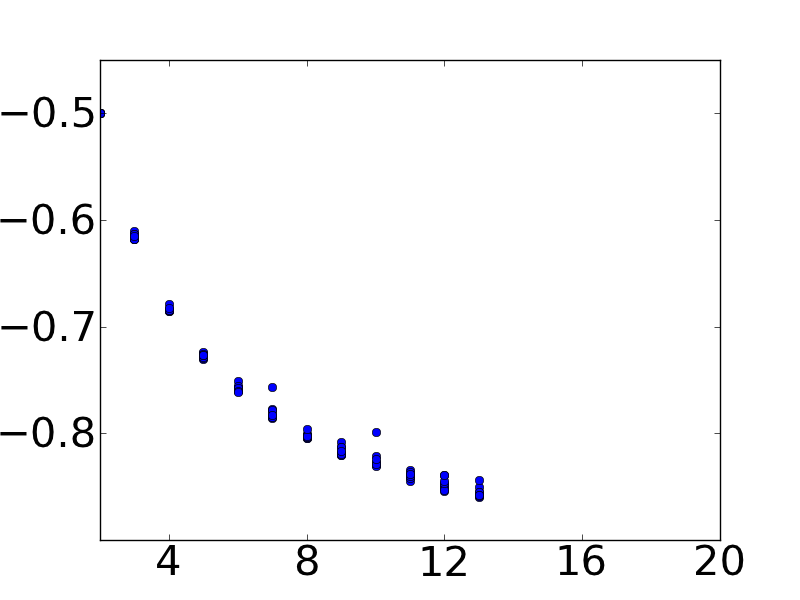}}
\caption{
Scatter plot of the special $\gamma_*$ calculated for instances from the $(\lambda_{\mathrm{in}},\lambda_{\mathrm{out}})$ ensemble and varying the matrix size within the $2\div 14$ range (no pruning).
}
\label{fig:gamma_small}
\end{figure}

\begin{figure}
\centering
\subfigure[$\lambda_{\mathrm{in}} = \lambda_{\mathrm{out}}=(1,1,1,1)/2$]{\includegraphics[width=0.3\textwidth]{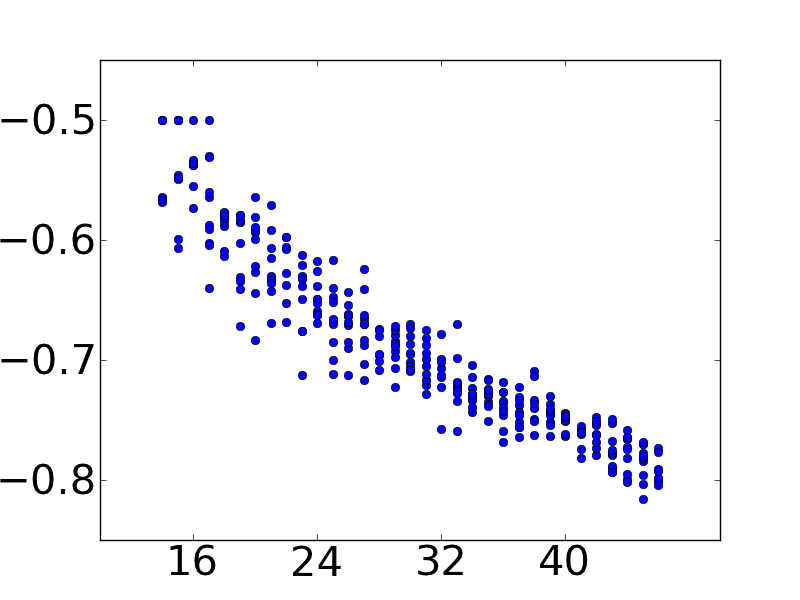}}
\subfigure[$\lambda_{\mathrm{in}} = \lambda_{\mathrm{out}}=(1,1,1,1)$]{\includegraphics[width=0.3\textwidth]{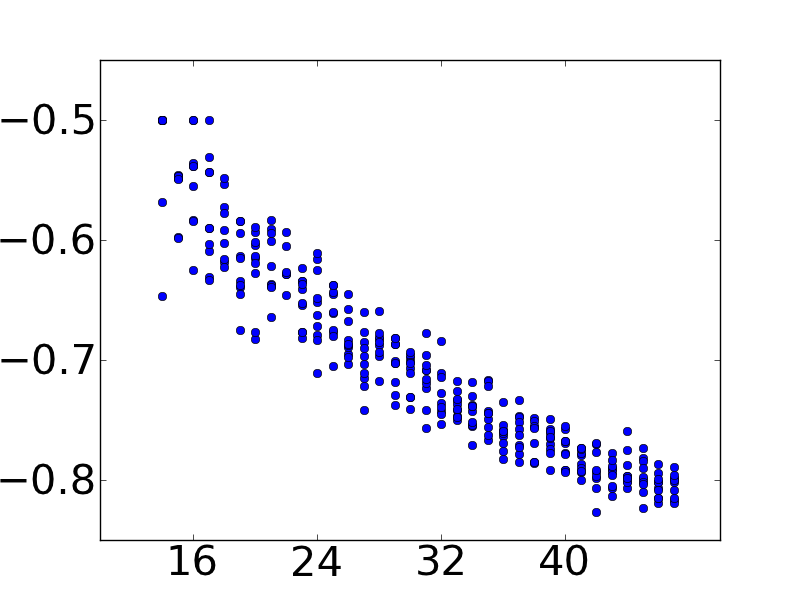}}
\subfigure[$\lambda_{\mathrm{in}} = \lambda_{\mathrm{out}}=(0,0,0,1)$]{\includegraphics[width=0.3\textwidth]{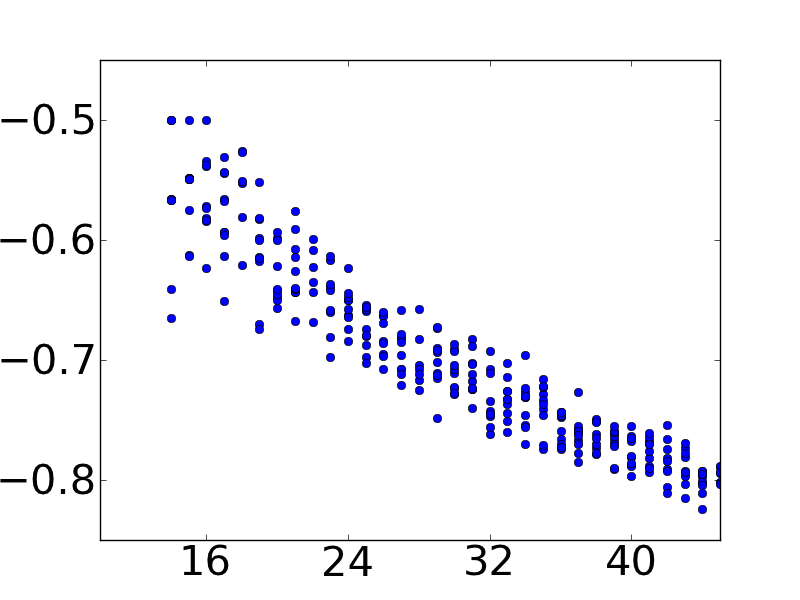}}
\subfigure[$\lambda_{\mathrm{in}} = \lambda_{\mathrm{out}}=(2,2,2,1/2)$]{\includegraphics[width=0.3\textwidth]{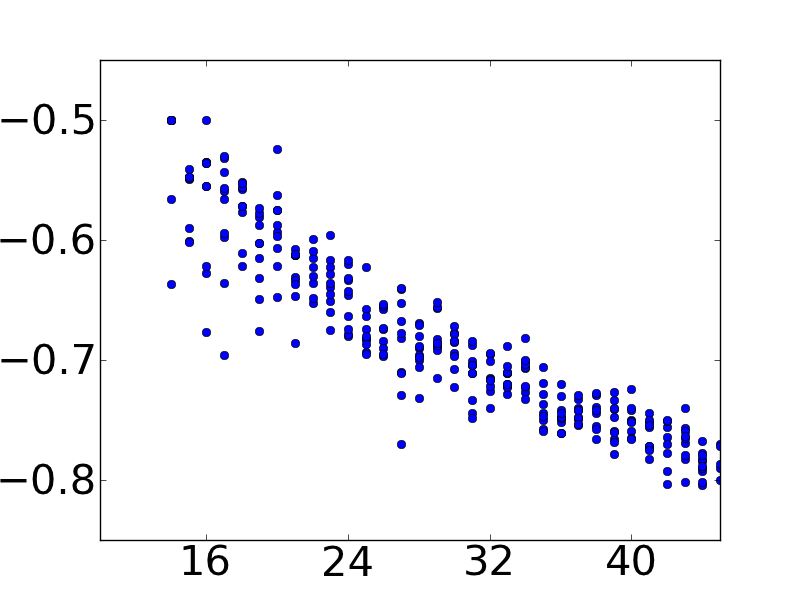}}
\subfigure[$\lambda_{\mathrm{in}} = (0,0,0,2)$, $\lambda_{\mathrm{out}}=(0,0,0,1)$]{\includegraphics[width=0.3\textwidth]{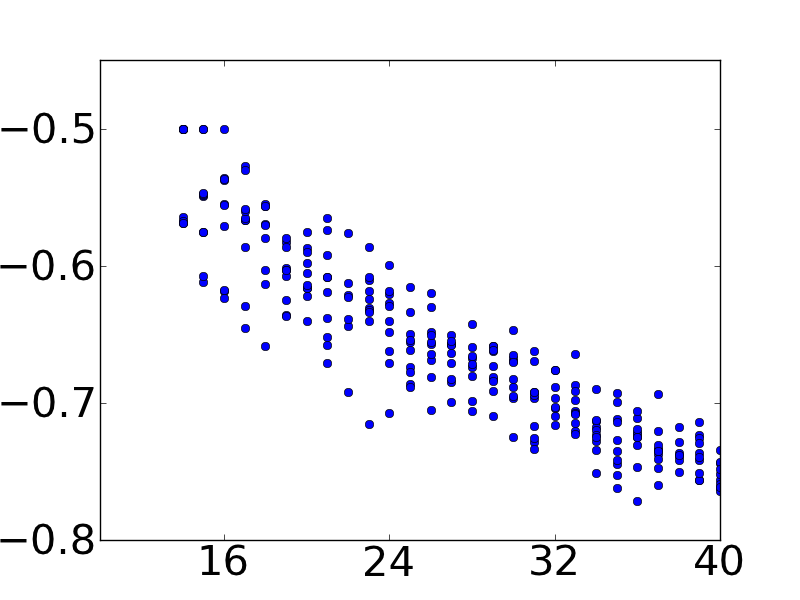}}
\subfigure[$\lambda_{\mathrm{in}} = (1,1,1,1), \lambda_{\mathrm{out}}=(0,0,0,1)$]{\includegraphics[width=0.3\textwidth]{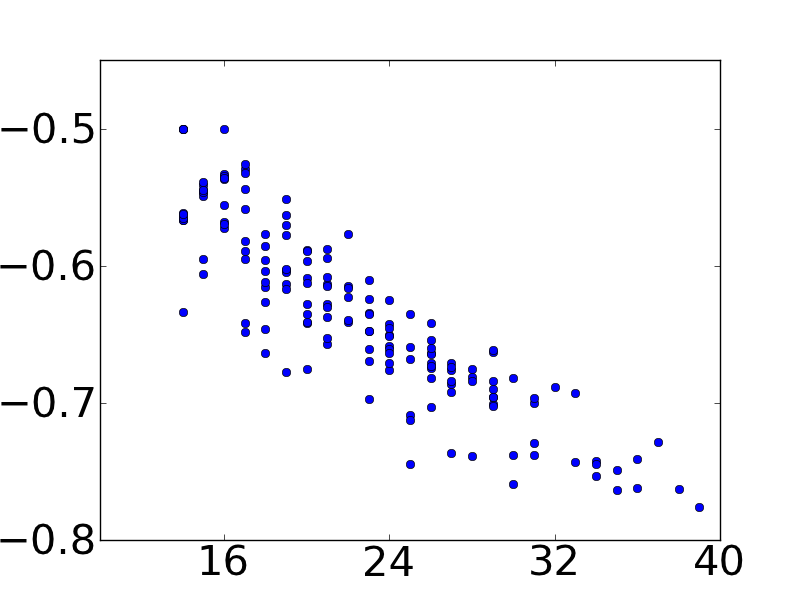}}
\caption{
Scatter plot of the estimated special $\gamma_*$ calculated for instances from the $(\lambda_{\mathrm{in}},\lambda_{\mathrm{out}})$ ensemble and varying the matrix size within the $15\div 40$ range (with 90\% pruning).
}
\label{fig:gamma_large}
\end{figure}

In this subsection we describe  experiments with several of the $(\lambda_{\mathrm{in}},\lambda_{\mathrm{out}})$ ensembles. We are interested in studying the dependence of the special $\gamma_*$, defined in Proposition \ref{prop:FractionalOptim}, on the matrix size and other parameters of the ensemble. We consider here a variety of cases.

Fig.~\ref{fig:gamma_small} shows the results of experiments with full but (relatively) small matrices and different values $\lambda_{\mathrm{in}},\lambda_{\mathrm{out}}$. The results are presented in the form of a scatter plot, showing results for different matrix instances from the same ensemble.

As can be seen from the grouping of the first five plots in Fig.~\ref{fig:gamma_small}, the dependence of the special $\gamma_*$ on the matrix size at $ \lambda_{\mathrm{in}}=\lambda_{\mathrm{out}}$ is largely sensitive to the diffusion parameter $\kappa$ and it is not so dependent on the advection parameters $a,b,c$. Indeed,  Figs.~\ref{fig:gamma_small} (a,b) are similar to each other, as are Figs. ~\ref{fig:gamma_small}(c-e), despite having different values of $a,b,c$.

Figs.~\ref{fig:gamma_small}(a-e), along with Figs.~\ref{fig:gamma_small}(f,g), also demonstrate an interesting feature: the lower $\kappa$, the more erratic the behavior of the special $\gamma_*$, with Figs.~\ref{fig:gamma_small}(f,g) demonstrating this tendency at its extreme.
With low diffusion, matrices were dominated by the largest permutation and search for the special $\gamma_*$ became less meaningful, with seemingly random behavior.

Analyzing the three last cases in  Fig.~\ref{fig:gamma_small} with $\lambda_{\mathrm{in}}\neq\lambda_{\mathrm{out}}$, we observed that the larger the value of $\kappa_{out}$, the more regular the resulting behavior.

Fig.~\ref{fig:gamma_large} shows the same scatter plots as in Fig.~\ref{fig:gamma_small}, observed for larger but sparser (90\% pruned) matrices. We observed the general tendency for the average special $\gamma_*$ to decrease with increasing $n$;  however, it is not clear from the observations if the resulting level of fluctuations decreases with the increase in $n$ or remains the same.

Summarizing, for the $(\lambda_{\mathrm{in}}, \lambda_{\mathrm{out}})$ ensemble, we found that the behavior of the special $\gamma_*$ with respect to matrix size to be largely dependent on $\kappa_{out}$, the diffusion coefficient used to generate the matrix, while the dependance of other factors is significantly less pronounced. The average special $\gamma$ decreases with increasing $n$, while respective variance remains roughly the same.

\subsubsection{Uniform and $\delta$-exponential ensembles}

\begin{figure}
\centering
\subfigure[uniform ensemble. Full (small) matrix.]{\includegraphics[width=0.3\textwidth]{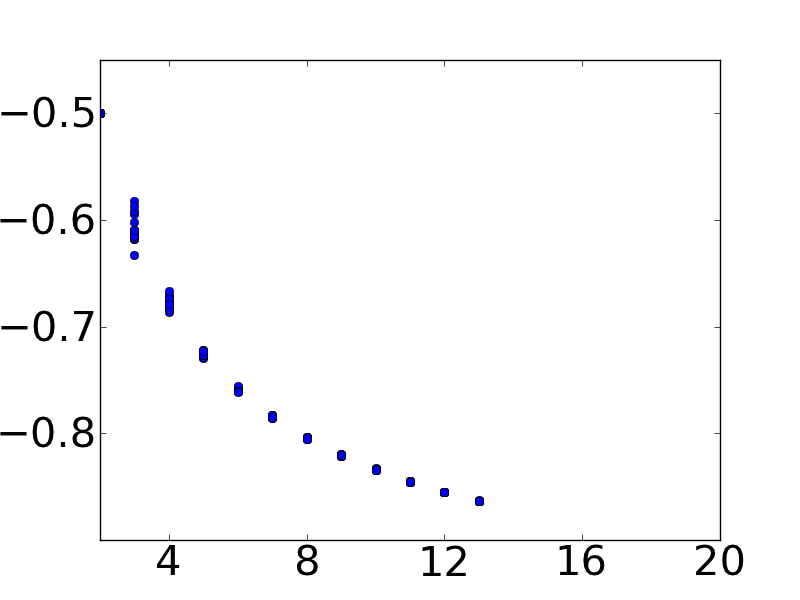}}
\subfigure[$\delta$-exponential ensemble with $\delta=1$. Full (small) matrix.]{\includegraphics[width=0.3\textwidth]{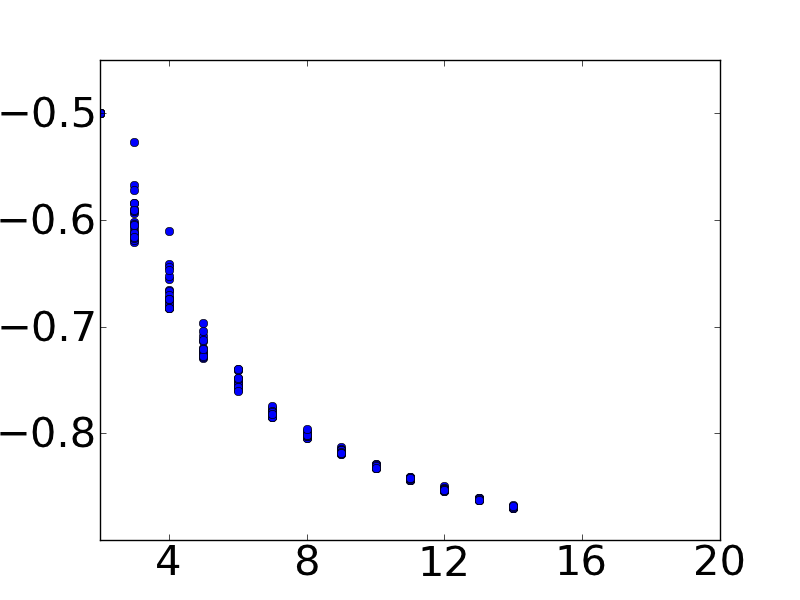}}
\caption{Scatter plot of the estimated special $\gamma$ calculated for instances from a random matrix ensemble and varying the matrix size.}
\label{fig:gamma_random}
\end{figure}

Fig.~\ref{fig:gamma_random} shows scatter plots for examples of the (a) $[0;1]$-uniform ensemble, and (b) $\delta$-exponential ensemble. Here we found a very impressive decrease in variance with increase in the matrix size.
Besides,  we observe that in spite of their difference, the two ensembles show qualitatively similar behavior of $\gamma_*$ as a function of $n$. This indicates that for large matrices, whose entries are independent random variables, we could achieve excellent accuracy by extrapolating on the special $\gamma_*$ and estimating $Z_{f}(\gamma_*)=\perm(p)$. Indeed, the rapidly decreasing variance suggests that most of the error would come from the extrapolation of the average special $\gamma_*$ for a given matrix size and not the error of $Z_f(\gamma_{*,\mathrm{average}}) - Z_f(\gamma_{*,\mathrm{actual}})$, since $\gamma_{*,\mathrm{average}}$ and $\gamma_{*,\mathrm{actual}}$ will be very close in value.

\subsubsection{The $[0;\rho]$-shifted ensemble}

\begin{figure}%
\centering
\subfigure[$\rho=1/20$-shifted]{\includegraphics[width=0.3\textwidth]{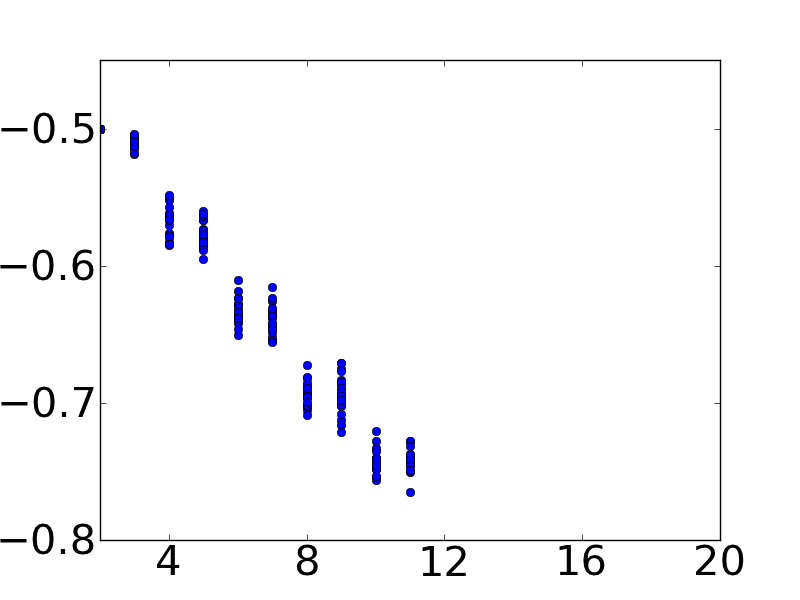}}
\subfigure[$\rho=1/200$-shifted]{\includegraphics[width=0.3\textwidth]{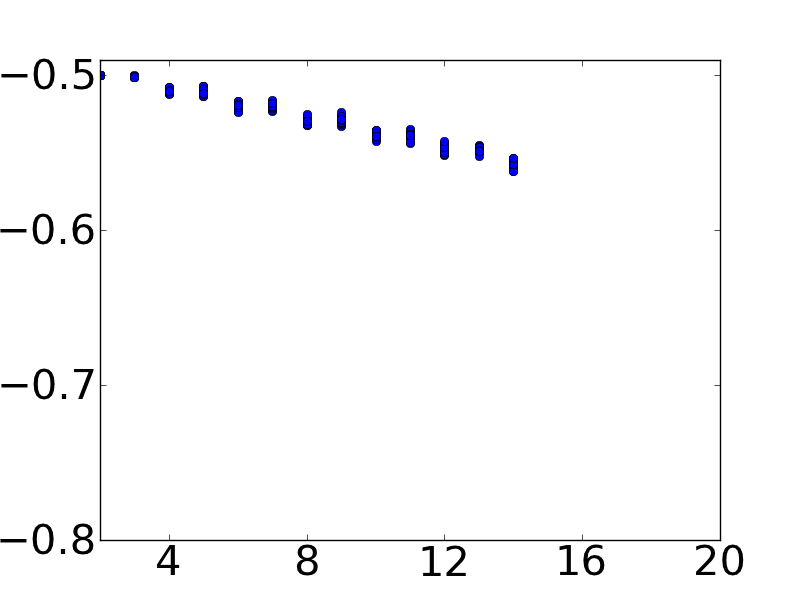}}
\caption{Scatter plot of the estimated special $\gamma_*$ calculated for instances from two examples of the $[0;\rho]$-shifted ensembles and varying the matrix size.}
\label{fig:gamma_shifted}
\end{figure}

We also studied the special $\gamma_*$ in ``badly-behaved'' cases such as the one brought up earlier, with $2\times 2$ squares of $1/2$'s positioned along the diagonal. (See discussion in Section \ref{subsec:conject}.) It can be easily shown that the special value of $\gamma_*$ of the bare block-diagonal matrix is $-1/2$. Unsurprisingly, our experiments, documented in Fig.~\ref{fig:gamma_shifted}, showed that: (a) the resulting $\gamma$ is always smaller than $-1/2$, and (b) as more noise was introduced, the special $\gamma_*$ decreased in value faster with respect to matrix size. However,  this decrease with $n$ towards smaller $\gamma_*$ was much slower than in other ensembles, particularly for low noise.

\subsection{Random Matrices: Testing Inequalities and Conjectures}

\begin{figure}
\centering
\includegraphics[width=\textwidth]{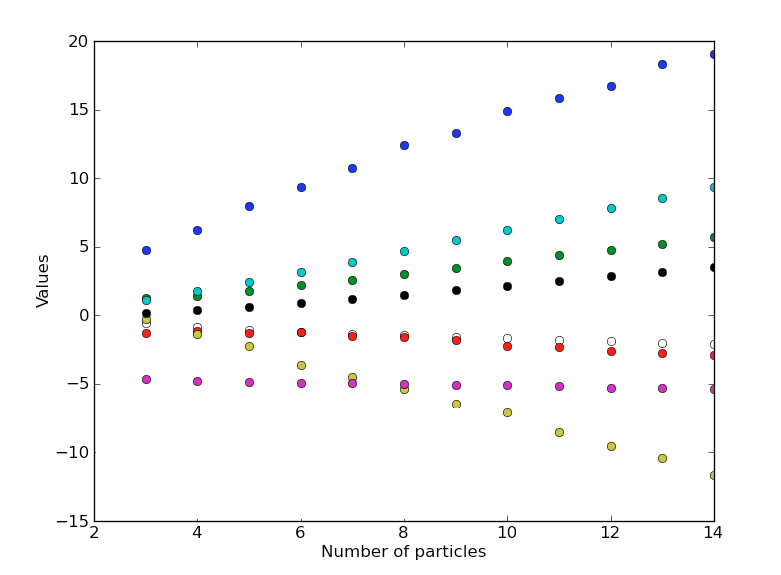} 
\caption{This figure describes the gap between the actual value of the permanent and the values of the various theoretical upper and lower bounds described in the paper and averaged over different simulation trials. Each color corresponds to an upper or lower bound as follows: Blue corresponds to the MF approximation to the permanent, $Z_{MF}$.
Green corresponds to $\log(Z_{BP}(\prod_{(i,j)\in{\cal E}}(1-\beta_{ij}))^{-1}\prod_j(1-\sum_i(\beta_{ij})^2)/Z)$.
Red corresponds to $\log(Z_{BP} \prod_{i,j}(1-\beta_{ij})^{\beta_{ij}-1}\frac{n!}{n^n}/Z)$.
White corresponds to $\log(Z_{BP}/Z)$.
Yellow corresponds to $\log(2 Z_{BP}(\prod_{i,j}(1-\beta_{ij}))^{-1}\prod_i \beta_{ii}(1-\beta_{ii})/Z)$.
Purple corresponds to $\log(0.01 Z_{BP} \sqrt n/Z)$.
Cyan corresponds to $\log(Z_f^{\gamma=0}/Z)$.
Black corresponds to $\log(Z_f^{\gamma=-0.5}/Z)$.
To make a data point, $100$ instances, each corresponding to a new matrix are drawn, and the log of the ratio of the bound to the actual permanent is recorded. The data shown corresponds to matrices from the $(\lambda_{\mathrm{in}}, \lambda_{\mathrm{out}})$ ensemble.}
\label{fig:inequality}
\end{figure}

\begin{figure}
\centering
\subfigure[$\log(Z_{MF}/Z)$]{\includegraphics[width=0.3\textwidth]{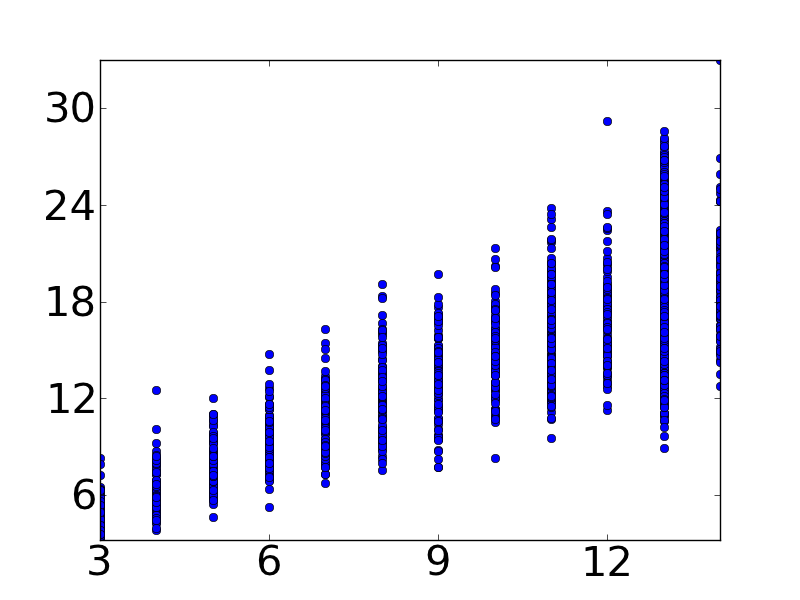}}
\subfigure[$\log((Z_{BP}(\prod_{(i,j)\in{\cal E}}(1-\beta_{ij}))^{-1}\prod_j(1-\sum_i(\beta_{ij})^2))/Z)$]{\includegraphics[width=0.3\textwidth]{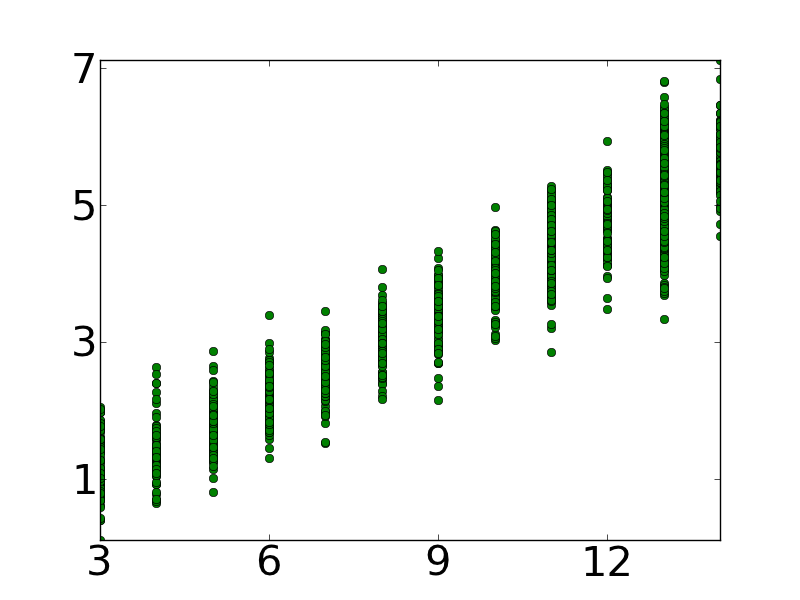}}
\subfigure[$\log((Z_{BP} \prod_{i,j}(1-\beta_{ij})^{\beta_{ij}-1}\frac{n!}{n^n})/Z)$]{\includegraphics[width=0.3\textwidth]{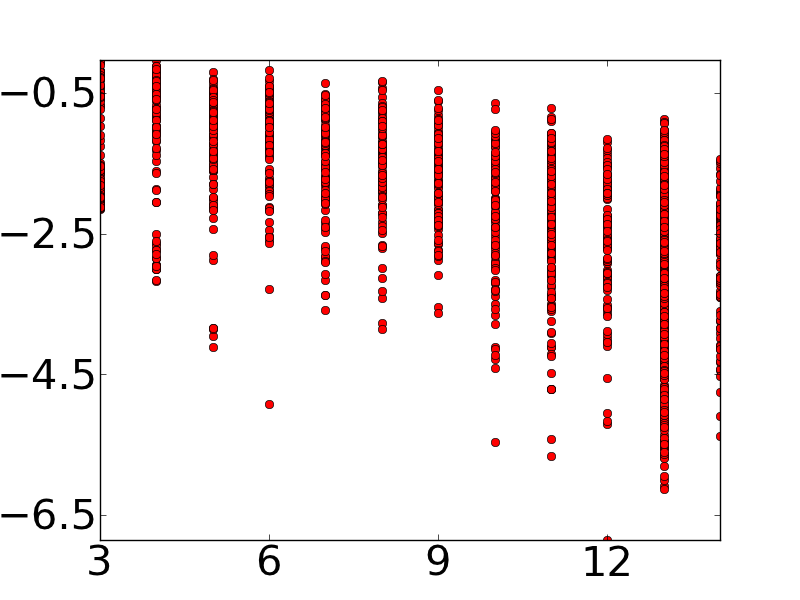}}
\subfigure[$\log(Z_{BP}/Z)$]{\includegraphics[width=0.3\textwidth]{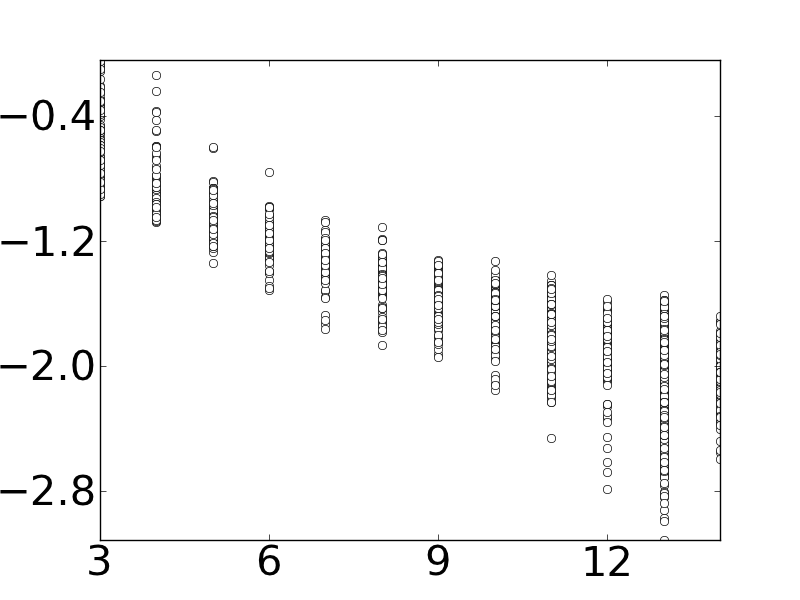}}
\subfigure[$\log((2 Z_{BP}(\prod_{i,j}(1-\beta_{ij}))^{-1}\prod_i \beta_{i\Pi(i)}(1-\beta_{i\Pi(i)}))/Z)$]{\includegraphics[width=0.3\textwidth]{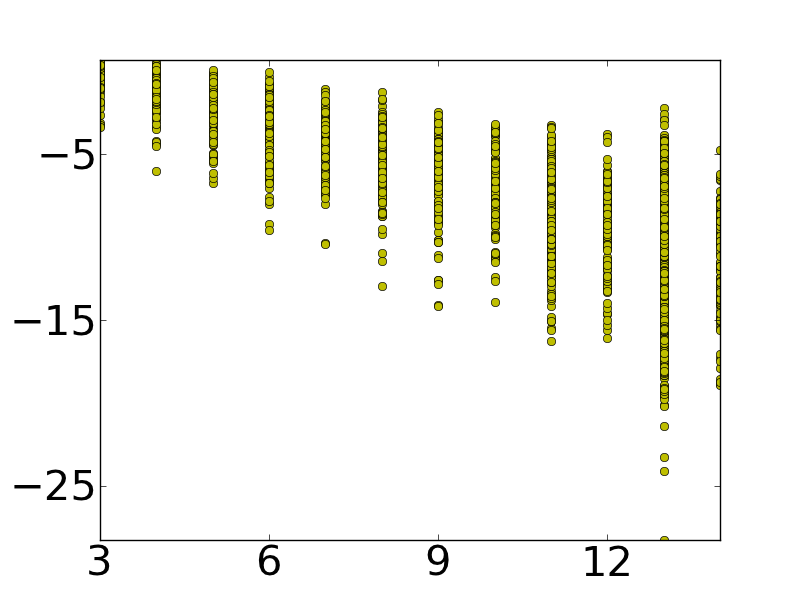}}
\subfigure[$\log(0.01 Z_{BP} \sqrt n/Z)$]{\includegraphics[width=0.3\textwidth]{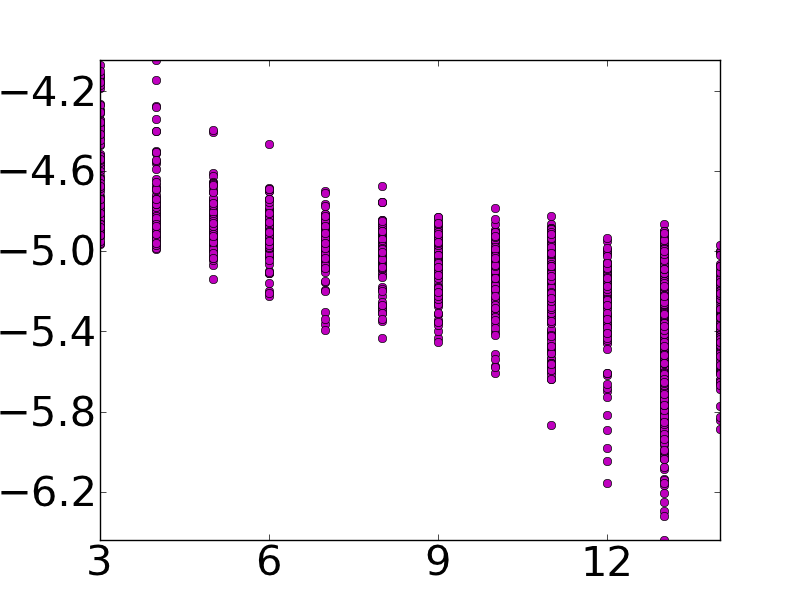}}
\subfigure[$\log(Z_f^{(\gamma=0)}/Z)$]{\includegraphics[width=0.3\textwidth]{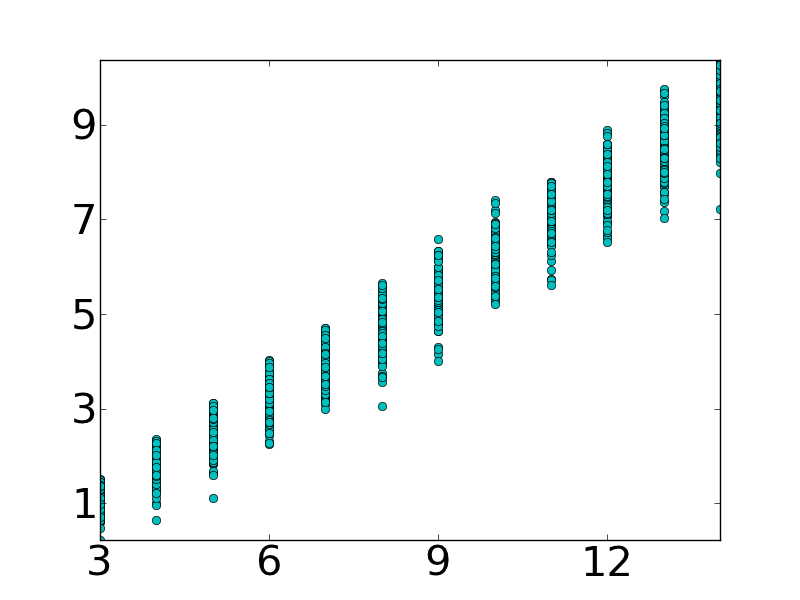}}
\subfigure[$\log(Z_f^{(\gamma=-0.5)}/Z)$]{\includegraphics[width=0.3\textwidth]{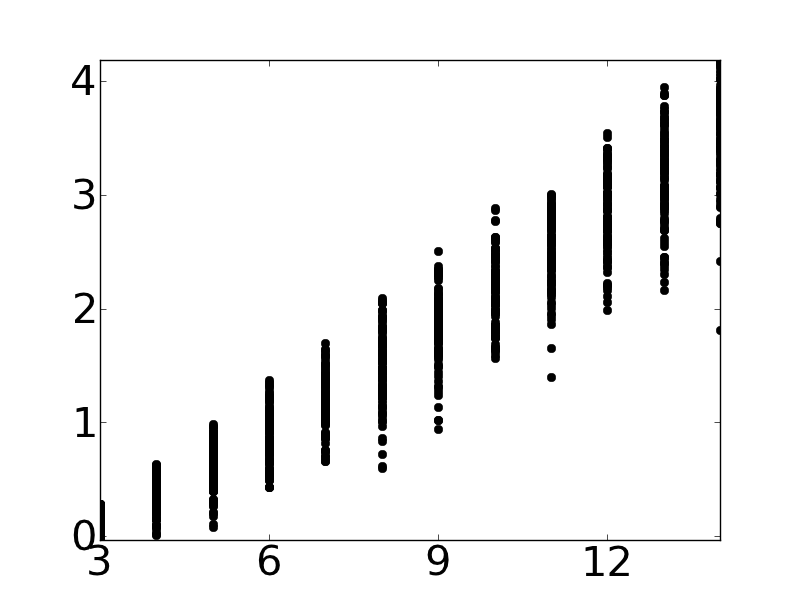}}
\caption{ Scatter plots for the data shown in Figs.~(\ref{fig:inequality}). For better presentation the data is split into 8 sub-figures. The vertical axis of each scatter plot is specific to the behavior of the expression with respect to the matrix size
and the color coding for different objects tested coincides with that used in Fig.~\ref{fig:inequality}.}
\label{fig:inequalityscatter}
\end{figure}

\begin{figure}
\centering
\includegraphics[width=\textwidth]{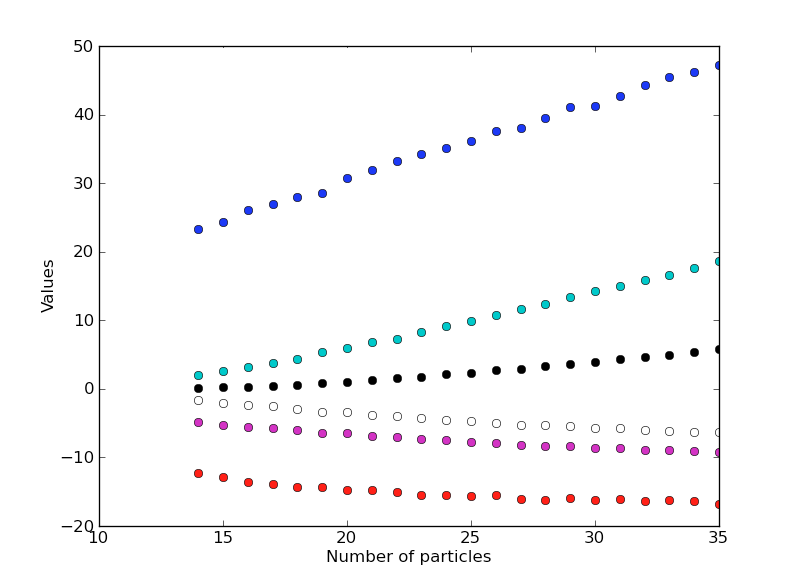} 
\caption{The data is shown like in Fig.~\ref{fig:inequality}, but for large, sparsified matrices. Less meaningful expressions were removed from the plot. Each color corresponds to a mathematical expression, as follows,
blue: $\log(Z_{MF} / Z)$,
red: $\log(Z_{BP} \prod_{i,j}(1-\beta_{ij})^(\beta_{ii}-1) \cdot n!/(n^n Z))$,
white: $\log(Z_{BP}/Z)$,
purple: $\log(Z_{BP}\sqrt{n}/(100Z))$,
cyan: $\log(Z_f(\gamma=0)/Z)$,
black: $\log(Z_f(\gamma=-1/2)/Z)$; where $Z=\perm(p)$.}
\label{fig:prunedinequality}
\end{figure}

\begin{figure}
\centering
\subfigure[$\log(Z_{MF}/Z)$]{\includegraphics[width=0.3\textwidth]{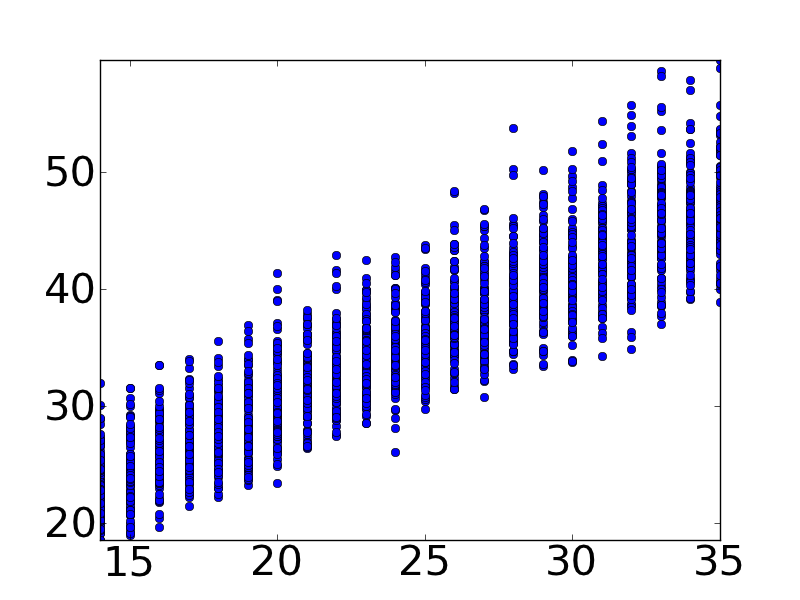}}
\subfigure[$\log((Z_{BP} \prod_{i,j}(1-\beta_{ij})^{\beta_{ij}-1}\frac{n!}{n^n})/Z)$]{\includegraphics[width=0.3\textwidth]{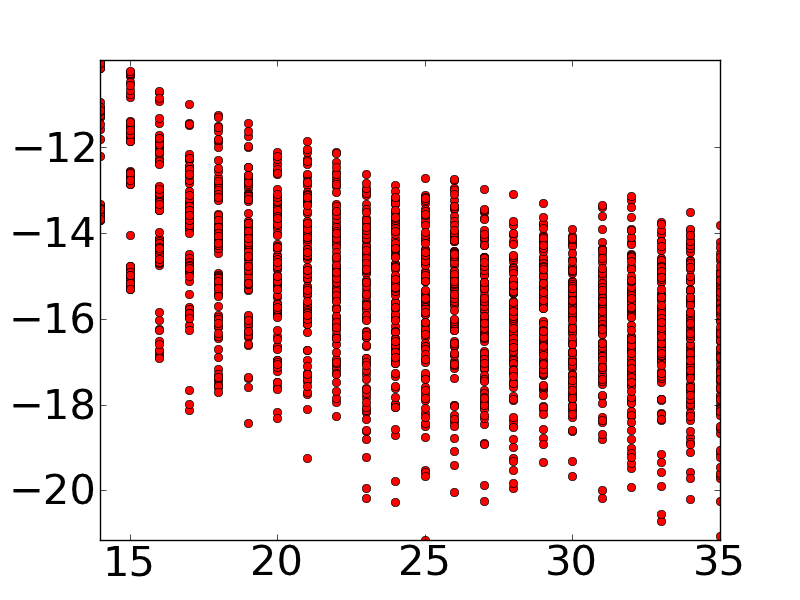}}
\subfigure[$\log(Z_{BP}/Z)$]{\includegraphics[width=0.3\textwidth]{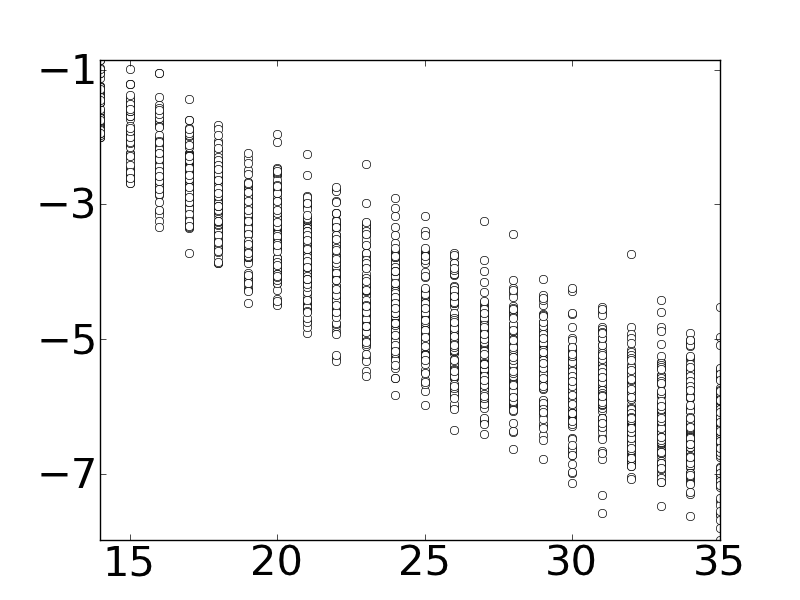}}
\subfigure[$\log(0.01 Z_{BP} \sqrt n/Z)$]{\includegraphics[width=0.3\textwidth]{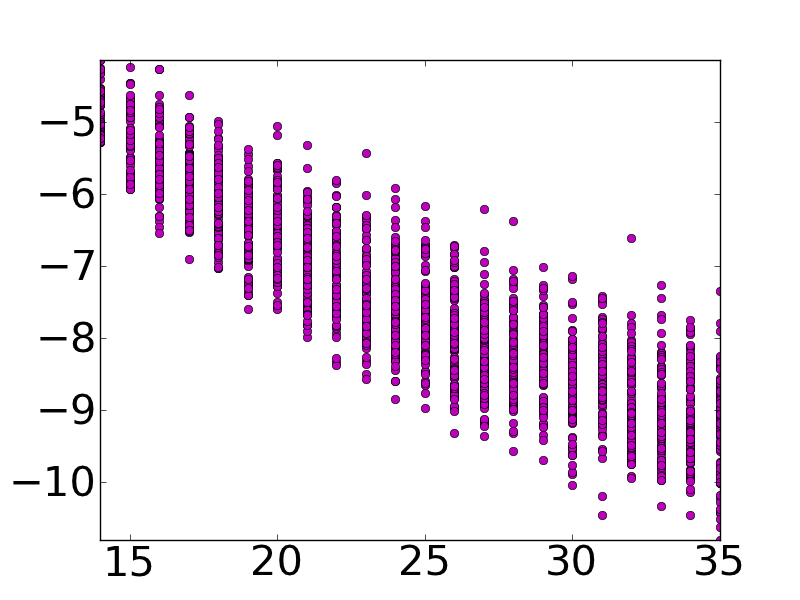}}
\subfigure[$\log(Z_f^{(\gamma=0)}/Z)$]{\includegraphics[width=0.3\textwidth]{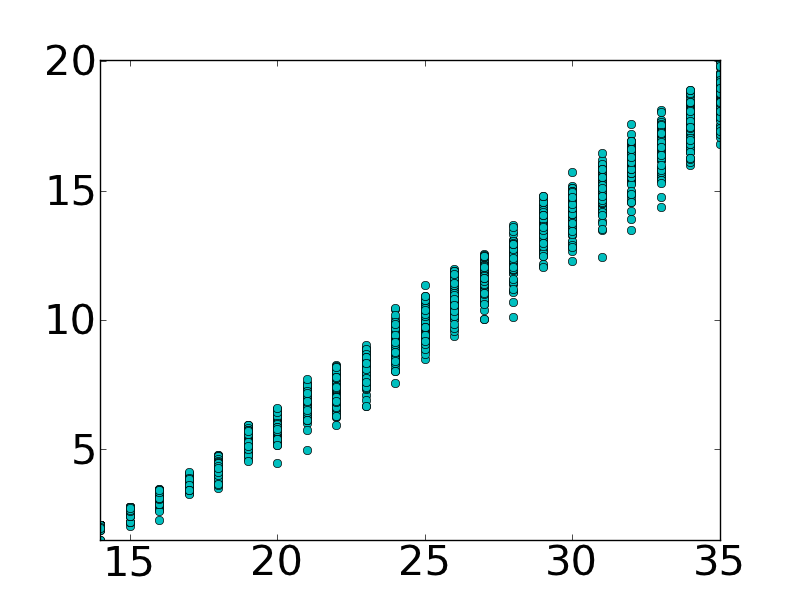}}
\subfigure[$\log(Z_f^{(\gamma=-0.5)}/Z)$]{\includegraphics[width=0.3\textwidth]{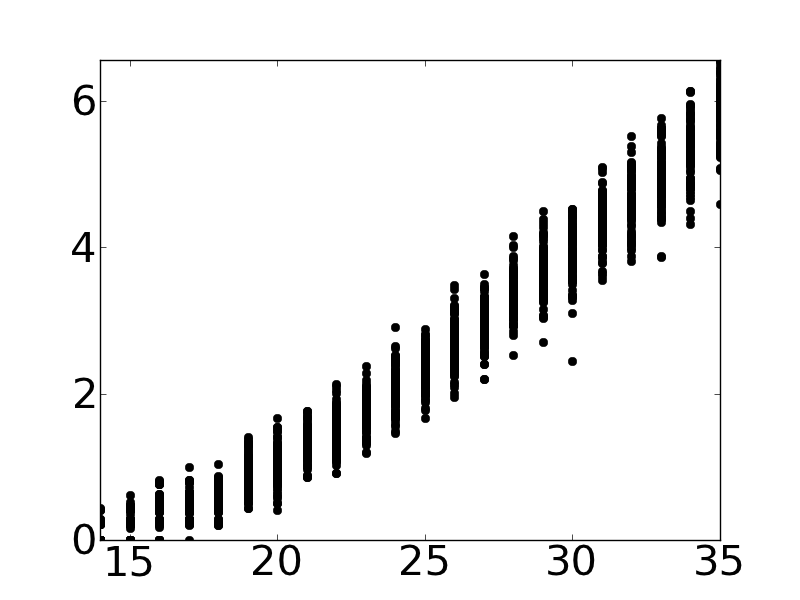}}
\caption{ Scatter plots for the data shown in Figs.~(\ref{fig:prunedinequality}). For better presentation the data is split into 6 sub-figures. Each scatter plot is specific to the behavior of the expression with respect to the matrix size and the color coding for different objects tested coincides with the one used in Fig.~\ref{fig:prunedinequality}.}
\label{fig:prunedinequalityscatter}
\end{figure}

Figs.~\ref{fig:inequality},\ref{fig:inequalityscatter} and Figs.~\ref{fig:prunedinequality},\ref{fig:prunedinequalityscatter},  showing average behavior and scatter plots for smaller and larger (pruned) matrices, respectively (see figure captions for explanations),  present experimental verification to the variety of inequalities discussed in Section \ref{subsec:ineq}. The ensemble used for these plots was the ensemble $\lambda_{\mathrm{in}}=\lambda_{\mathrm{out}}=(1,1,1,1)/2$. The data suggests that neither of the bounds are actually tight, and moreover the values of the gaps, between the exact expression and respective estimates tested, fluctuate more strongly with increasing matrix size. We also observe from Figs.~\ref{fig:inequality} and Figs.~\ref{fig:inequalityscatter}f, that Eq.~(\ref{conj1}) has $f(n)$ growing faster with $n$ than $\sim \sqrt{n}$ even on average. In the case of larger pruned matrices we removed the two expressions $\log((Z_{BP}(\prod_{(i,j)\in{\cal E}}(1-\beta_{ij}))^{-1}\prod_j(1-\sum_i(\beta_{ij})^2))/Z)$ and $\log((2 Z_{BP}(\prod_{i,j}(1-\beta_{ij}))^{-1}\prod_i \beta_{i\Pi(i)}(1-\beta_{i\Pi(i)}))/Z)$. We removed the former because in the case of pruning the resulting $\beta$ is often partially-resolved (with some elements of $\beta$ equal to one) and in this case the inequality does not carry any restriction. We removed the latter because, in the pruned case and for a randomly chosen permutation, it is very likely that at least one element of $\beta$ is zero, making the bound discussed unrestrictive.

\begin{figure}
\centering
\includegraphics[width=0.5\textwidth]{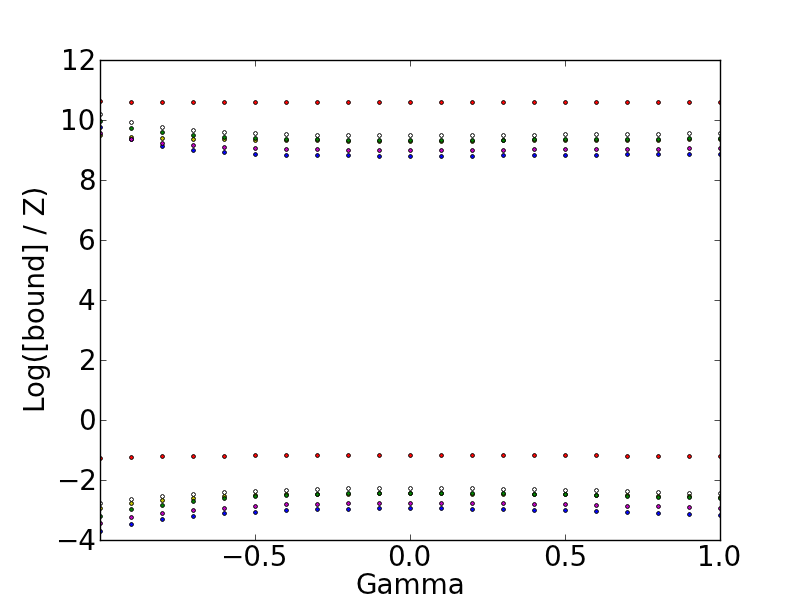}
\caption{
This plot shows $\gamma$-dependence of the gap between upper and lower bounds corresponding to Corollaries \ref{corr:fractional-low}, \ref{corr:fractional-upper}. Here, the bounds were plotted for six different matrices, each generated with   $\lambda_{\mathrm{in}}=\lambda_{\mathrm{out}}=(1,1,1,1)/2$, where the upper and lower bounds are color-coded to indicate that they correspond to the same matrix.}
\label{fig:gamma}
\end{figure}

Fig.~\ref{fig:gamma} shows that the bounds given by the Corollaries \ref{corr:fractional-low}, \ref{corr:fractional-upper} do not depend much on $\gamma$ and that they in practice depend more on matrix size and on peculiarities of individual matrices. There is a slight change for values of $\gamma$ near $-1$, but otherwise the plot is nearly flat, so it seems that unfortunately little tightening of the bounds can be achieved by tweaking $\gamma$. Another noteworthy observation is that a higher upper bound implies a higher lower bound, and vice-versa.

\begin{figure}
\centering
\subfigure[]{\includegraphics[width=0.3\textwidth]{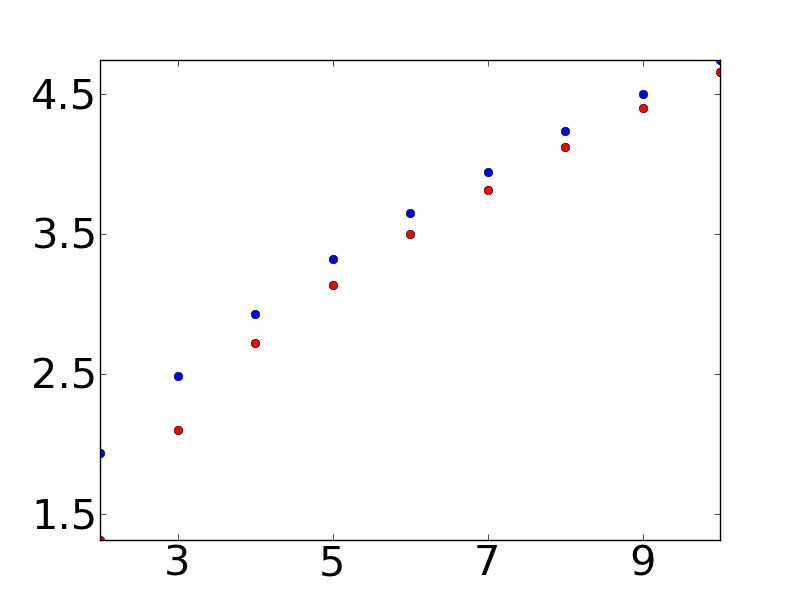}}
\subfigure[]{\includegraphics[width=0.3\textwidth]{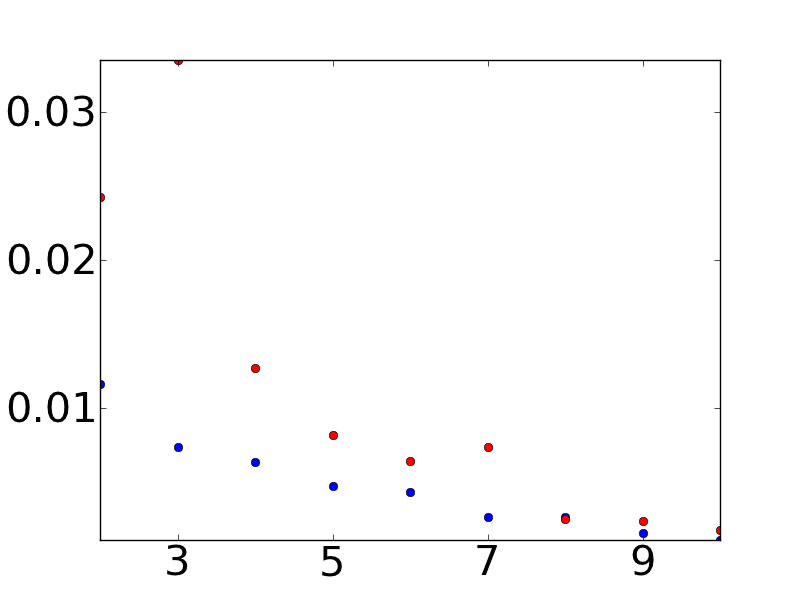}}
\caption{A plot of the average (Subfigure a) and standard deviation (Subfigure b) of $\perm(\beta)/\prod_{i,j}(1-\beta_{ij})^{1- \beta_{ij}}$ (in blue) and $\perm(\beta)/Z_{o-BP}(\beta)$ (in red), where $\beta$ is a doubly stochastic matrix picked from $100$ instances of the random ensemble described in the text, shown as a function of $n$. }
\label{fig:gurvits}
\end{figure}

Fig.~\ref{fig:gurvits} is related to discussions of Corollary \ref{cor:Gurvits} for the permanent of a doubly stochastic matrix. We generate an instance of a doubly stochastic matrix and calculate the respective BP expression in three steps (this is the procedure of \citet{67KS}, also discussed by \citet{09HJ}): (a) generate a non-negative matrix from the $[0;1]$ ensemble; (b) re-scale rows and columns of the matrix iteratively to get a respective doubly stochastic matrix \footnote{The rescaling is a key element of \citet{98LSW}, and we can also think of the procedure as of a version of the $\gamma=0$ iterative algorithm.}; and (c) apply the BP- ($\gamma=-1$) procedure to evaluate the $Z_{o-BP}$ estimate for the resulting doubly stochastic matrix. In agreement with Eq.~(\ref{B1-ext}), the average value of the log corresponding to the BP-lower bound is positive and smaller than the respective expression for the average of the log of the explicit expression on the right-hand side of Eq.~(\ref{B1-ext}). (The hierarchical relation obviously holds as well for any individual instance of the doubly stochastic $\beta$ from the generated ensemble.)
We also observe that the average values of the curves show a tendency to saturate,  while the standard deviation decreases dramatically, suggesting that for large $n$ this random ensemble may be well approximated by either BP or, even more simply, by its explicit lower bound from the right-hand side of Eq.~(\ref{B1-ext}), the latter being in the agreement with the proposal of \citet{11Gur}.

\section{Conclusions and Path Forward}

\label{sec:conclusion}

The main message of this and other related recent papers by \citet{08CKV,09HJ,10CKKVZ,10WC,10Von,11Von,11Gur} is that the BP approach and improvements not only give good heuristics for computing permanents of non-negative matrices,  but also provide theoretical guarantees and thus reliable deterministic approximations. The main highlights of this manuscript are
\begin{itemize}
\item The construction of the fractional approach, parameterized by $\gamma\in [-1;1]$ and interpolating between BP ($\gamma=-1$) and MF ($\gamma=1$) limits.
\item The discovery of the exact relation between the permanent of a non-negative matrix, $\perm (p)$ and the respective fractional expression, $Z_f^{(\gamma)}(p)$, where the latter is computationally tractable.
\item The proof of the continuity and monotonicity of $Z_f^{(\gamma)}(p)$ with $\gamma$,  also suggesting that for some $\gamma_*\in [-1;0]$, $\perm(p)=Z_f^{(\gamma_*)}(p)$.
\item The extension of the list of known BP-based upper and lower bounds for the permanent by their fractional counterparts.
\item The experimental analysis of permanents of different ensembles of interest, including those expressing relations between consecutive images of stochastic flows visualized with particles.
\item Our experimental tests include analysis of the gaps between exact expression for the permanent,  evaluated within the ZDDs technique adapted to permanents,  and the aforementioned BP- and fractional-based lower/upper bounds.
\item The experimental analysis of variations in the special $\gamma$ for different ensembles of matrices suggests the following conclusions. First, the behavior of the special $\gamma$ varies for different ensembles, but the general trend remains the same: as long as there is some element of randomness in the ensemble, the special $\gamma$ decreases as matrix size increases. Second, for each ensemble the behavior of the special $\gamma$ is highly distinctive. For some considered random matrix ensembles the variance decreases quickly with increasing matrix size. All of the above suggest that the fractional approach offers a lot of potential for estimating matrix permanents.
\end{itemize}

We view these results as creating a foundation for further analysis of theoretical and computational problems associated with permanents of large matrices. Of the multitude of possible future problems, we consider the following ones listed below as the most interesting and important:
\begin{itemize}
\item Improving BP and fractional approaches and making the resulting lower and upper bounds tighter.
\item Further analysis of the $\gamma$-dependence, making  theoretical statements for statistics of log-permanents at large $n$ and for different random ensembles.
\item Utilizing the new permanental estimations and bounds for learning flows in the setting of \citet{10CKKVZ}. Combining within the newly introduced fractional approach the $\beta$-optimization with optimization over flow parameters (by analogy with what is done in \citet{10CKKVZ}). Applying the improved technique to various Particle Image Velocimetry (PIV) experiments of interest in fluid mechanics in general,  and specifically to describe spatially smooth multi-pole flows in micro-fluidics,  see e.g. discussion of the most recent relevant experiments in \citet{10DGMPT,10GJG} and references therein \footnote{We are thankful to Eric Lauga for suggesting to us the micro-fluidics experiments as one possible application for the ``learning the flow" BP-based approach.}.
\item Addressing other GM problems of the permanental type, e.g. counting matchings (and not only perfect matchings) on arbitrary graphs (drawing inspiration from \citet{11SMW} generalizing \citet{06BN,08BSS} in the ML setting) and  higher-dimensional matchings, in particular corresponding to matching of paths between multiple consecutive images within the ``learning the flow" setting.
\end{itemize}

\acks{We are thankful to Leonid Gurvits, Yusuke Watanabe, Pascal Vontobel, Vladimir Chernyak, Jonathan Yedidia and Jason Johnson for multiple discussions and very helpful advice, as well as to Leonid Gurvits and Pascal Vontobel for sharing their recent results, \cite{11Gur,11Von} prior to public release. We also very much appreciate the helpful comments and multiple suggested made by the reviewers. ABY acknowledges support of the Undergraduate Research Assistant Program at LANL and he is also grateful to CNLS at LANL for its hospitality. Research at LANL was carried out under the auspices of the National Nuclear Security Administration of the U.S. Department of Energy at Los Alamos National Laboratory under Contract No. DE C52-06NA25396.}

\appendix

\section{Most Probable Perfect Matching}
\label{sec:ML}

This short appendix is introduced to guide the reader through material which is related,  but only indirectly (through physics motivation and historical links), to the main subject of the manuscript.

\subsection{Most Probable Perfect Matching over Bi-Partite Graphs}
\label{subsec:ML}

According to Eq.~(\ref{Z-def}), the permanent can be interpreted as the partition function of a
GM. The partition function represents a weighted counting of the $n!$ perfect matchings. Using ``physics terminology" one says that this perfect matching representation allows to interpret the permanent as the statistical mechanics of perfect matchings (called dimers in the physics literature) over the bi-partite graph. This is statistical mechanics at finite temperatures,  as the partition function represents a (statistical) sum over the perfect matchings.

However,  it is still of interest to discuss (at least in the context of establishing historical links) the  ``zero temperature," or Maximum Likelihood (ML) version of Eq.~(\ref{Z-def})
\begin{eqnarray}
-\log(Z_{ML}(p))=\min_\sigma \sum_{(i,j)\in{\cal E}} \sigma_{ij} \log(1/p_{ij}).
\label{ML}
\end{eqnarray}
According to the logic of \citet{05YFW}, Eq.~(\ref{ML}) can also be stated in the probability space (i.e., in terms of $b(\sigma)$) as
\begin{eqnarray}
-\log(Z_{ML}(p))=\min_{b} \sum_{(i,j)\in{\cal E}} \sum_\sigma \sigma_{ij} \log(1/p_{ij}) b(\sigma).
\label{ML_LP}
\end{eqnarray}
Then, the functional of $b(\sigma)$ which is the object of minimization over beliefs in Eq.~(\ref{ML_LP}) is naturally the ML (zero temperature) version of the FE functional
(\ref{KL}).

By construction, $Z_{ML}(p)\leq Z(p)$ for any $p$. Note also that  Eq.~(\ref{ML}) is a Linear Programming (LP) equation,  but one which at first sight appears intractable, giving an optimization defined over a huge polytope and spanning all the perfect matchings with nonzero probability. For a general GM the LP-ML formulation is indeed intractable, but for the specific problem under consideration (finding the perfect matching over a bipartite graph) the ML-LP problem (\ref{ML_LP}) becomes tractable,  as discussed below in the next subsection. Given classical results from the optimization theory, related to the so-called Hungarian algorithm, by \citet{55Kuh}, and the auction algorithm, by \citet{92Ber}, this special solvability (reduced complexity) of the ML perfect matching problem is not surprising.

\subsection{Linear Programming Relaxation of BP}
\label{subsec:LP}

The Bethe FE (\ref{F_BP}) can be split naturally into the self-energy term and the self-entropy terms (at unit temperature),
$F_{BP}=E_{BP}-S_{BP}$:
\begin{eqnarray}
&& E_{BP}(\beta|p)=-\sum_{(i,j)}\beta_{ij}\log(p_{ij}),\label{E_BP}\\
&& S_{BP}(\beta|p)=\sum_{(i,j)}\left(-\beta_{ij}\log(\beta_{ij})+(1-\beta_{ij})\log(1-\beta_{ij})\right).
\nonumber
\end{eqnarray}
If the entropy term is ignored in Eq.~(\ref{o-BP}) the problem turns into the Linear Programming (LP) formulation of BP
\begin{eqnarray}
-\log(Z_{LP}(p))=\min_{\beta} E_{BP}(\beta).
\label{o-LP}
\end{eqnarray}

One can also arrive at the same LP formulation (\ref{o-LP}) relaxing the original ML-LP setting (\ref{ML_LP}). As shown in \citet{08BSS,08Che}, the relaxation is provably tight for any $p$ ,  i.e., $Z_{LP}(p)=Z_{ML}(p)$, as the resulting matrix of constraints in the LP problem (\ref{o-LP}) describing the doubly stochasticity of $\beta$ is totally uni-modular, so the corners of the respective polytope are in one-to-one correspondence with the perfect matching configurations/corners of the higher-dimensional polytope from Eq.~(\ref{ML_LP}), also in accordance with the Birkhoff-von Neumann theorem by \citet{36Kon,46Bir,53vonNeu}.

\section{Bethe-Free Energy Approach}
\label{sec:BP}

\subsection{Exact BP-based Relations for Permanents}
\label{subsec:BP-exact}

We present here a simple proof of Eq.~(\ref{LC}), essentially following a slightly modified version of what was the main statement of \citet{10WC}.

Consider an interior minimum of the Bethe FE functional (\ref{F_BP}) achieved with a strictly nonzero (for elements with positive $p_{ij}$) doubly stochastic $\beta$. Then, the minimum satisfies Eqs.~(\ref{BP2}),  where $\log(u)$ are respective Lagrangian multipliers.
Weighting the logarithm of Eqs.~(\ref{BP2}) with $\beta$, summing up the result over all the edges, using Eq.~(\ref{F_BP}) and the double stochasticity of $\beta$, one derives
\begin{eqnarray}
&& \sum_{(i,j)\in{\cal E}}\beta_{ij}\log(u_i u_j)=\sum_{i\in{\cal V}_1}\log u_i+\sum_{j\in{\cal V}_2}\log u^j
\nonumber\\ && =
\sum_{(i,j)\in{\cal E}}\left(\beta_{ij}\log(p_{ij}/\beta_{ij})-\beta_{ij}\log(1-\beta_{ij})\right)=
\log Z_{BP}-\sum_{(i,j)\in{\cal E}}\log(1-\beta_{ij}).
\label{u-eqs}
\end{eqnarray}
On the other hand, applying the permanent to both sides of Eq.~(\ref{BP2}) one arrives at
\begin{eqnarray}
\perm(p)=\perm(\beta.*(1-\beta))\left(\prod_{i\in{\cal V}_1}u_i\right)
\left(\prod_{j\in{\cal V}_2}u^j\right).
\label{BP-Perm}
\end{eqnarray}
Combining Eq.~(\ref{u-eqs}) with Eq.~(\ref{BP-Perm}) results in Eq.~(\ref{LC}).

\subsection{Iterative Algorithm(s) for finding solution of BP equations}
\label{sec:BP-iterat}

First of all,  let us recall that according to Proposition \ref{prop:BP-convexity}, (\ref{F_BP}) is convex. However,  as explained above the convexity is not trivial, as it is enforced by global constraints. This lack of convexity of individual edge-local terms in Eq.~(\ref{F_BP}) creates a technical obstacle to finding a valid fixed point of $F_{BP}$, suggesting that an iterative algorithm converging to the fixed point of $F_{BP}$ will be more elaborate than the one discussed below in the MF case.

To find a valid solution of BP in our numerical experiments we use the following practical iterative scheme (heuristics), previously described in \citet{08CKV} (see Eqs.~(7,8) as well as preceding and following explanations):
\begin{align}
 & \forall (i,j):\ \ \beta_{ij}(n\!+\!1)\!=\!\lambda\beta_{ij}(n)\!+\!
 \frac{(1-\lambda)p_{ij}}{p_{ij}+(\sum_k\beta_{kj}(n)/2+\sum_k
\beta_{ik}(n)/2-\beta_{ij}(n))^2 (u_i(n)u^j(n))},
 \label{beta_n}\\
 & \forall i:\ \
 u_i(n+1)=\frac{\sum_k p_{ik} /u^k(n)}{1-\sum_j(\beta_{ij}(n))^2},\quad
\forall j:\ \ u^j(n+1)=\frac{\sum_k p_{kj}/u_k(n)}{1-\sum_i(\beta_{ij}(n))^2},
 \label{U_n}
\end{align}
where the arguments of the $\beta$'s indicate the order of the iterations.
The damping parameter $\lambda$ (typically
chosen $0.4\div0.5$) helps with convergence. To ensure appropriate accuracy for
solutions with $\beta$'s close to zero or unity we also insert a normalization step after Eqs.~(\ref{beta_n}) but prior to Eqs.~(\ref{U_n}), making the following two
transformations subsequently, (a) $\forall (i,j)$: $\beta_{ij}\to
\beta_{ij}/\sum_k\beta_i^k$, and (b) $\forall (i,j)$: $\beta_{ij}\to
\beta_{ij}/\sum_k\beta_k^j$. (The two steps implement an elementary step of the Sinkhorn operation from \citet{09HJ}.)
The algorithm is sensitive to initial values for $\beta$ and $u$. To ensure convergence, one initiates the algorithm with the output of the MF scheme (which converges much better) as described in Appendix \ref{subsec:MF-iterat},  i.e., $\beta(0)=\beta_{MF}$ and $u(0)=v_{MF}$. Numerical experiments show that this procedure always converges to an interior stationary point of the BFE (\ref{F_BP}), when one exists and is not degenerate.  In the special cases when the solution is on the boundary it seemed to converge there as well, but we did not study this systematically to make a definitive statement.

Note that the algorithm presented above is certainly not the only option one can use to find a doubly stochastic solution of BP Eqs.~(\ref{BP2}).  In fact,  the standard Sum-Product Algorithm (SPA) of \citet{05YFW},  stated for the problem of computing the permanent in \citet{10CKKVZ}, is a serious competitor,  which according to Theorem 32 of \citet{11Von} always converges to the minimum of the Bethe FE. Future work is required to compare the convergence speed of the two algorithms.

\section{Mean-Field (Fermi) Approach}

\label{sec:MF}

\subsection{Exact MF-based Relations for Permanents}
\label{subsec:MF-exact}

We present here a simple proof of Eq.~(\ref{MF-exact}), essentially following the logic of what was described above for BP in Appendix \ref{subsec:BP-exact}.

Weighting the logarithm of Eqs.~(\ref{MF3}) with that doubly stochastic $\beta$ which minimizes Eq.~(\ref{MF-exact}),  summing the result over all the edges,  and making use of Eqs.~(\ref{MF3},\ref{MF1}), one derives
\begin{eqnarray}
&& \sum_{(i,j)\in{\cal E}}\beta_{ij}\log(v_i v_j)=\sum_{i\in{\cal V}_1}\log v_i+\sum_{j\in{\cal V}_2}\log v^j
\nonumber\\ && =
\sum_{(i,j)\in{\cal E}}\left(\beta_{ij}\log(p_{ij}/\beta_{ij})+\beta_{ij}\log(1-\beta_{ij})\right)=
\log Z_{MF}(p)+\sum_{(i,j)\in{\cal E}}\log(1-\beta_{ij}).
\label{v-eqs}
\end{eqnarray}
On the other hand, applying the permanent to both sides of Eq.~(\ref{MF3}) one arrives at
\begin{eqnarray}
\perm(p)=\perm(\beta./(1-\beta))\left(\prod_{i\in{\cal V}_1}v_i\right)
\left(\prod_{j\in{\cal V}_2}v^j\right).
\label{MF-Perm}
\end{eqnarray}
Combining Eq.~(\ref{v-eqs}) with Eq.~(\ref{MF-Perm}) results in Eq.~(\ref{MF-exact}).

\subsection{Iterative Scheme for Solving Mean-Field Equations}
\label{subsec:MF-iterat}

An efficient heuristic way to find a (unique) solution of the MF system of Eqs.~(\ref{MF3}) for doubly stochastic $\beta$ is to initialize with $v_i(0)=v^j(0)=1$ and iterate according to
\begin{align}
& \beta_{ij}(n+1)=\frac{p_{ij}}{p_{ij}+v_i(n)v^j(n)},
\label{MF_it_1}
\\
& v_i(n+1)=v_i(n) \sum_j \beta_{ij}(n),\quad v^j(n+1)=v^j(n) \sum_i \beta_{ij}(n),
\label{MF_it_2}
\end{align}
until the tolerance $\delta>\max(\mbox{abs}(\beta(n+1)-\beta(n)))$ is met.

\section{Fractional Approach}

\label{sec:fractional}

\subsection{Exact Relations for Permanents}
\label{subsec:fractional-exact}

We present here a simple proof of Eq.~(\ref{m-exact}),  which is a direct generalization of what was discussed above in Appendices  \ref{subsec:BP-exact},\ref{subsec:MF-exact}.

Weighting the logarithm of Eqs.~(\ref{m3}) with that doubly stochastic $\beta$ which minimizes Eq.~(\ref{m-exact}),  summing the result over all the edges,  and making use of Eqs.~(\ref{m3},\ref{m1}), one derives
\begin{align}
& \sum_{(i,j)\in{\cal E}}\beta_{ij}\log(w_i w_j)=\sum_{i\in{\cal V}_1}\log w_i+\sum_{j\in{\cal V}_2}\log w^j
\nonumber\\ & =
\sum_{(i,j)\in{\cal E}}\left(\beta_{ij}\log(p_{ij}/\beta_{ij})+\gamma\beta_{ij}\log(1-\beta_{ij})\right)\nonumber\\
&=
\log Z_f^{(\gamma)}(\beta|p)+\gamma\sum_{(i,j)\in{\cal E}}\log(1-\beta_{ij}).
\label{w-eqs}
\end{align}
On the other hand, applying the permanent to both sides of Eq.~(\ref{m3}) one arrives at
\begin{eqnarray}
\perm(p)=\perm(\beta./(1-\beta).^\gamma)\left(\prod_{i\in{\cal V}_1}w_i\right)
\left(\prod_{j\in{\cal V}_2}w^j\right).
\label{m-Perm}
\end{eqnarray}
Combining Eq.~(\ref{w-eqs}) with Eq.~(\ref{m-Perm}) results in Eq.~(\ref{m-exact}).

\subsection{Iterative Scheme for Solving Fractional Equations}
\label{subsec:fractional-iterat}

All edge-local terms in the fractional functional (\ref{m1}) are convex in $\beta\in[0;1]$ for $\gamma>0$,  while for negative $\gamma$ the edge-term convexity holds only when all elements of $\beta$ are smaller than a threshold $\beta_c\geq 1/2$, which is a solution of $\beta_c\log(\beta_c)=-\gamma(1-\beta_c)\log(1-\beta_c)$. This suggests different iterative schemes for positive and negative $\gamma$.

When $\gamma>0$ we use the following modification of the MF scheme (\ref{MF_it_1},\ref{MF_it_2}):
\begin{align*}
& \beta_{ij}(n+1)=\frac{p_{ij}(1-\beta_{ij}(n))^{\gamma-1}}{p_{ij}(1-\beta_{ij}(n))^{\gamma-1}+w_i(n)w^j(n)}, \\
& w_i(n+1)=w_i(n) \sum_j \beta_{ij}(n),\quad w^j(n+1)=w^j(n) \sum_i \beta_{ij}(n),
\end{align*}

In the case of $\gamma\leq 0$ we use the following modification of the BP scheme
\begin{align}
 & \forall (i,j):\ \ \beta_{ij}(n+1)=\lambda\beta_{ij}(n)\label{rho_n}\\
 & +
 \frac{(1-\lambda)p_{ij}(1+\beta_{ij}(n))^{1+\gamma}}{p_{ij}(1+\beta_{ij}(n))^{1+\gamma}+(\sum_k\beta_k^j(n)/2+\sum_k
\beta_i^k(n)/2-\beta_{ij}(n))^2 (w_i(n)w^j(n))},\nonumber
 \\
 & \forall i:\ \
 w_i(n+1)=\frac{\sum_k p_{ik} (1+\beta_{ik}(n))^{1+\gamma}/w^k(n)}{1-\sum_j(\beta_{ij}(n))^2},\nonumber\\
& \forall j:\ \ w^j(n+1)=\frac{\sum_k p_{kj}(1+\beta_{kj}(n))^{1+\gamma}/w_k(n)}{1-\sum_i(\beta_{ij}(n))^2},
 \label{w_n}
\end{align}
where the arguments of the $\beta$'s indicate the order of the iterations.
The damping parameter $\lambda$ (typically
chosen $0.4\div0.5$) helps with convergence. To ensure appropriate accuracy for
solutions with $\beta$'s close to zero or unity we also insert a normalization step
after Eqs.~(\ref{rho_n}) but prior to Eqs.~(\ref{w_n}), making the following two
transformations consequently,
\begin{eqnarray*}
(a)\quad \forall (i,j):\quad \beta_{ij}\to
\beta_{ij}/\sum_k\beta_i^k,\\
 (b)\quad \forall (i,j):\quad \beta_{ij}\to
\beta_{ij}/\sum_k\beta_k^j.
\end{eqnarray*}
The algorithm is sensitive to initial values for $\beta$ and $w$. To ensure convergence, we initiate the algorithm with the output of the MF scheme (which converges much more easily) described in Appendix \ref{subsec:MF-iterat},  i.e., $\beta(0)=\beta_{MF}$ and $w(0)=v_{MF}$. Numerical experiments show that this procedure converges to a stationary point of the fractional FE (\ref{m1}). We also verified that the iterative scheme designed for $\gamma<0$ converges in the $\gamma>0$ case, even thought  the former scheme is obviously faster.

Note that fractional version of the standard Sum-Product Algorithm (SPA) can be developed.  It is also natural to expect,  in view of the general convexity of the fractional FE discussed in the main body of the text,  that  there exists a provably convergent version of the SPA.  It will be important to design such a convergent $\gamma$-SPA in the future and to compare its practical performance against one of the heuristics described above.

\section{BP Gives Lower Bound on the Permanent}
\label{sec:BP_low}

Here we give our version of the proof of the lower bound (\ref{B1}). First of all, in the case when the Bethe FE reaches its minimum in the interior of the domain, i.e., at $\beta\in{\cal B}_p$,  Eq.~(\ref{B1}) follows directly from the main result of \citet{10WC}, i.e.,  Eq.~(\ref{LC}), and Schijver's inequality (\ref{Sch-ineq}). Therefore, according to explanations of Section \ref{subsubsec:o-BP}, we only need to analyze the case when the minimum of the Bethe FE is a partially resolved solution, with a $\beta$ which can be split by appropriate permutations of rows and columns of the matrix into a perfect matching block (corresponding to a corner of the respective projected polytope), the block with all elements smaller than unity and nonzero unless the respective element of $p$ is zero (thus lying in the interior of the respective subspace), and all cross elements of $\beta$ (between the blocks) equal to zero. Then, $Z_{o-BP}$ for such a partially resolved solution is split into the product of two contributions,  $Z_{o-BP}=Z_{pm}\cdot Z_{int}$,  where
$Z_{pm}$ corresponds to the perfect matching block,  and $Z_{int}$ corresponds to the interior block.
In fact,  $Z_{pm}$ is equal to the weighted perfect matching block of $p$ and $-\log(Z_{int})$
corresponds to the minimum of the Bethe FE computed for the interior block of $p$. On the other hand the full partition function, $Z$,  can be bounded from below by the product $Z\geq Z_1\cdot Z_2$,  where $Z_1$ and $Z_2$ are permanents of the first and second blocks of the original matrix $p$. (Thus contributions of all the cross-terms of $p$ into $Z$ are ignored.) However,  $Z_1\geq Z_{pm}$, as counting only one perfect matching (and ignoring others), and $Z_2\geq Z_{int}$ in accordance to what was already shown above for any minimum of Bethe FE achieved in the interior of the respective domain.

\section{Pruning of the Matrix}
\label{sec:pruning}

Computing the permanent of sufficiently dense matrixes exactly with the ZDD approach explained in Appendix \ref{sec:ZDD} is infeasible for $n>30$. To overcome this difficulty we choose to sparsify dense matrices generated in one of our experimental ensembles, removing their less significant entries in the following steps.
First, we use LP, described in Appendix \ref{subsec:LP}, to find the permutation correspondent to the maximum perfect matching. To avoid getting a zero permanent in the result, we include all components of the maximum perfect matching permutation in the pruned matrix. Second, we consider every other entry of the matrix (not contributing the maximum perfect matching) and keep it in the matrix only if it is included in a perfect matching which is close to the maximum perfect matching, i.e., the two permutations share all but a few of their entries and ratio of their weighted contributions (in the permanent) is larger than a pre-defined value.
Then,  we act according to either of the two strategies, both of which are explored in this manuscript. One strategy is to include all permutations whose products are more than a given fraction of the main permutation. This method will tend to reduce the fluctuations in the error of the pruned matrix (i.e., will reduce the variation in $Z_{\mathrm{pruned}} / Z_{\mathrm{original}}$). The other method is to always prune a set fraction of entries from the matrix, and prune them in order of decreasing value as determined by the above criterion. This method will reduce the fluctuations in the runtime of the algorithm.

\section{Zero-suppressed Binary Decision Diagrams (ZDD) method}
\label{sec:ZDD}

Zero-suppressed Binary Decision Diagrams, or ZDD, are a tool useful for representing combinatorial problems. The concept was introduced by
Shin-Ichi Minato in 1993 \cite{93Minato}.
The idea of ZDD is as follows: if one defines a combinatorial problem to be a function of many variables, each taking values in \{0, 1\}, with the value of the function itself being also in \{0, 1\}, then those sequences of inputs that lead to unity can be thought of as ``solutions" to the problem. Furthermore, each solution can be described in terms of the input variables within it that are equal to unity. The problem, then, can be described as being a ``family of sets," or set of sets, where the family is of all solutions to the problem and each set within the family is the set of input variables whose value is 1 in that solution.

To give an example of the ``family of sets" concept, consider the $\overline{\mbox{XOR}}$ function, which returns 1 if and only if the inputs are equal. This function can also be represented as the family of sets $\{\emptyset, \{1, 2\}\}$, where 1 and 2 correspond to inputs 1 and 2 to the function, because if the function is to have value 1 then either both inputs must be equal to 1 or neither must be.
Once this has been understood, it is best to see the ZDD as nothing more than a concise representation of this family of sets, since the family can get quite cumbersome for problems with many solutions and many variables.
Note that this system of representing problems provides the greatest improvement when there are few solutions, and when the solutions themselves are sparse, since the family of sets is then small. Correspondingly, ZDD are most efficient under these conditions.

The actual format of a ZDD is that of a directed tree of nodes, with each node having a directed edge to two other nodes. Each edge emanating from a node has an identity, in that it is either a ``HI" branch or a ``LO" branch, and of the two edges emanating from each node, there must be exactly one ``HI" branch and one ``LO" branch. Each node also has an identity, a number from 1 to $n$ if there are $n$ inputs to the combinatorial problem. The tree must contain one or two special nodes, or ``sinks", one of which is the ``True" sink, and optionally the ``False" sink. We also introduce the conventions that nodes can only point to nodes of higher identity than themselves and that no two nodes can be identical in both their identity and their LO and HI pointers.

Each node in a ZDD represents a choice about the variable the node identifies. If one begins at the top node of a ZDD, taking the HI branch represents including the variable represented by the node's identity in a prospective solution, and taking the LO branch represents not including that variable. If a LO or HI branch points to the True sink, that implies that a solution is reached if and only if all variables with identity greater than the current node identity are not included. If a LO branch points to the False sink, that implies that no solution is possible given the choices made previously. Interestingly, the constraints introduced in the paragraph previous to this one imply that a HI branch can never point to the False sink.

\begin{figure}
\centering
\subfigure[]{\includegraphics[width=0.3\textwidth]{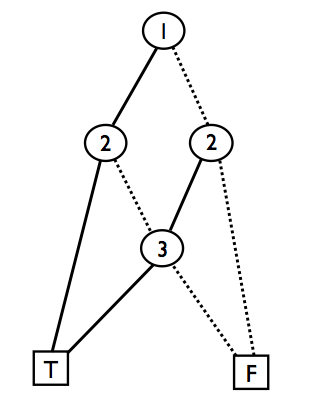}\label{fig:zdd1}}
\subfigure[]{\includegraphics[width=0.3\textwidth]{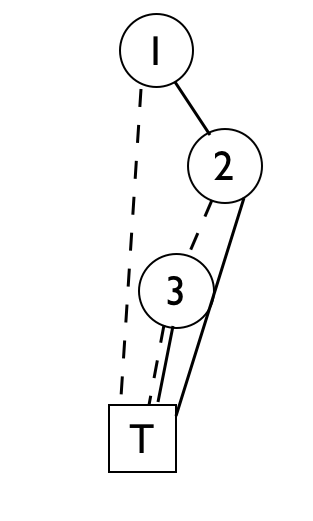}\label{fig:zdd2}}
\caption{Two simple ZDD diagrams discussed in the text.}
\end{figure}

ZDD are best understood with examples. The first example, also illustrated in Fig.~\ref{fig:zdd1}, is of the ZDD for the exactly-two function of three variables, in other words, the function that returns 1 if exactly two of its three inputs have value 1 and 0 otherwise. It can also be described as the family of sets \{\{1, 2\}, \{1, 3\}, \{2, 3\}\}. Here, a dotted line denotes a LO branch and a normal line denotes a HI branch. Furthermore, the T and F symbols denote the True and False sinks, respectively, and the numbers inside each node refer to the nodes' identities (the variables that they represent). Our second example,  shown in Fig.~\ref{fig:zdd2}, represents the family of sets $\{\emptyset, \{1\}, \{1, 2\}, \{1, 3\}\}$. Note the absence of a False sink. Note, also, the fact that a node's HI and LO branches need not point to the different locations. A more in-depth exploration of the ZDD concept can be found in \citet{Knuth:2009:ACP:v4f1}.

Once the basic concept of ZDD is introduced, one can use it for solving various combinatorial problems,  e.g. to represent a permanent as a ZDD in order to use the method. When we apply ZDD to the computations of permanent, we classify each entry of the matrix as either zero or nonzero. Then, we define a variable for each nonzero entry in the matrix. Each solution of our resulting ZDD will represent a possible permutation, meaning a set of entries in the matrix such that exactly one entry in each row and column is included in the set. There is a recursive algorithm, suggested in \citet{Knuth:2009:ACP:v4f1}, that allows for efficient counting of the solutions of the ZDD. The algorithm is simple: the number of solutions of a ZDD rooted at a node is equal to the sum of the numbers of solutions of the ZDD rooted at the HI and LO children of that node. The True sink is defined as having 1 solution, and the False sink as having 0. Note that the number of solutions of a ZDD representing a matrix is equal to the permanent of the corresponding to $0-1$ matrix, with each 1 corresponding a nonzero entry.

In order to find the permanent of matrices that are not 0-1 matrices, only a small modification is necessary. Instead of purely counting solutions of the ZDD, we do a weighted count, where the weighted number of solutions of a ZDD rooted at a node is equal to the value of the corresponding matrix entry times the weighted number of solutions at the HI child added to the weighted number of solutions at the LO child. In other words, if we are considering a node $n$ with children HI and LO whose identity corresponds to a matrix entry of nonzero value $v$, then
$$\mbox{WeightedCount}(n) = v \cdot \mbox{WeightedCount}(\mbox{HI}) + \mbox{WeightedCount}(\mbox{LO}).$$
The WeightedCount of the root node of the ZDD will be equal to the permanent of the corresponding matrix.

This leaves the question of how to build the ZDD from the matrix. This is done using Knuth's ``melding" algorithm. The algorithm is somewhat complex and will not be described here, but it is described in detail in \citet{Knuth:2009:ACP:v4f1}. The melding algorithm is an efficient and systematic method for constructing larger ZDD out of the logical combination of smaller ones. The smallest ZDD being melded together using Knuth's algorithm are ZDD representing the ``exactly-one" constraint for each row and column of the matrix; in other words, they are constraints requiring exactly one matrix entry in every row and column to be included in a permutation which will be a ``solution" to our problem.

\section{Comparison of Ryser's formula with the ZDD-based method}
\label{sec:Ryser_vs_ZDD}

\begin{figure}%
\centering
\subfigure[pruning 40\%]{\includegraphics[width=0.3\textwidth]{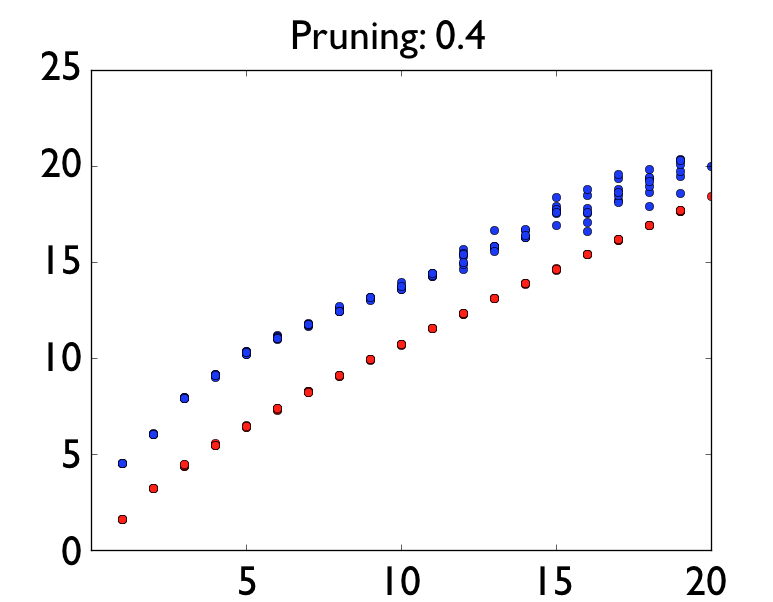}}
\subfigure[pruning 60\%]{\includegraphics[width=0.3\textwidth]{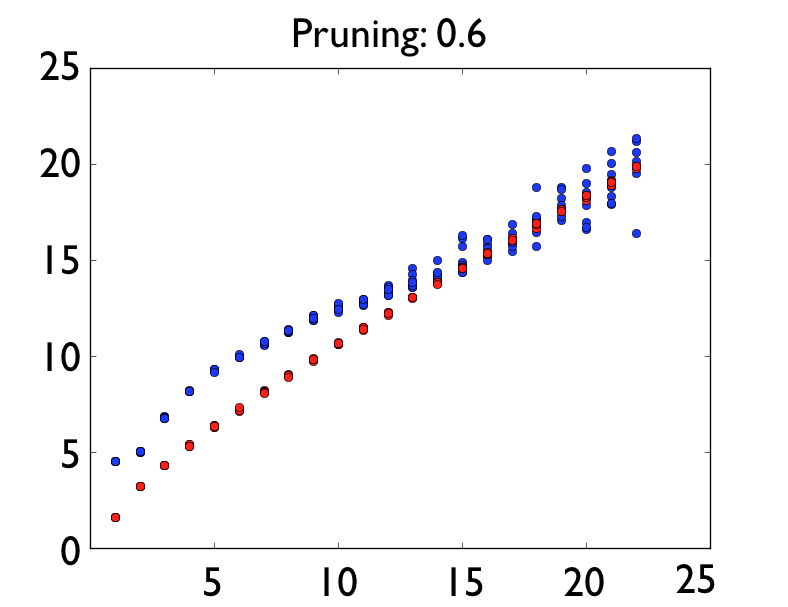}}
\subfigure[pruning 80\%]{\includegraphics[width=0.3\textwidth]{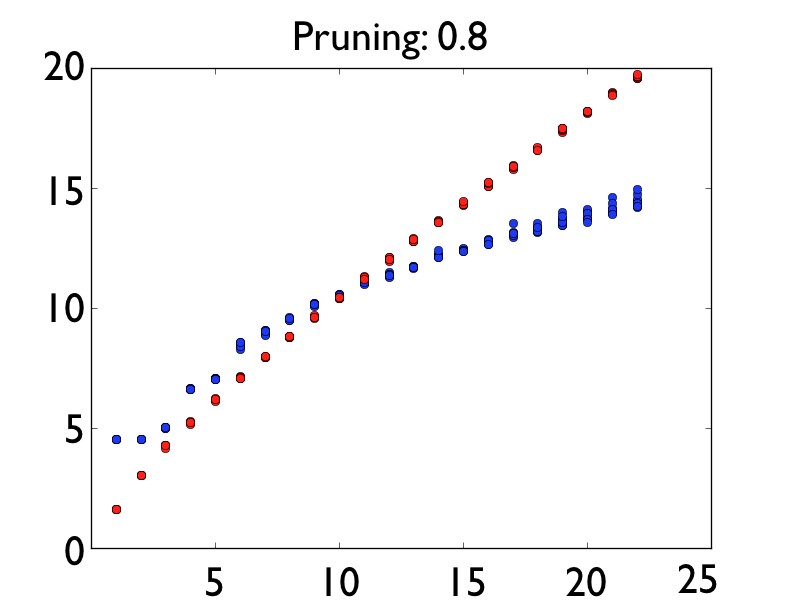}}
\caption{Scatter plot of the number of memory accesses required to exactly compute the permanent of a sparse matrix for instances with different degree of pruning. Red and blue dots mark results of the Ryser formula and of the ZDD-based method, respectively.}
\label{fig:Ryzer_vs_ZDD}
\end{figure}

As part of our experiments we compared the speed of Ryser's formula with the speed of the ZDD-based method by counting memory accesses in each of the two algorithms in order to fairly compare them. We found that the values we got for memory accesses were strongly correlated with the actual speed of the algorithm. We found that for very dense matrices, Ryser's formula is faster, but for sparser matrices the ZDD-based method is faster. We performed experiments with matrices that were 20\%, 40\%, and 60\% sparse in order to get a good idea of the point where the ZDD-based method starts outperforming Ryser's formula. (Naturally, with no pruning, Ryser's formula outperforms the ZDD-based method significantly.)

As can be seen from Fig.~\ref{fig:Ryzer_vs_ZDD}, the ZDD-based method begins outperforming Ryser's formula when matrices are around 60\% sparse.

\bibliographystyle{unsrt}
\bibliography{new_permanent}

\end{document}